\newtheorem{cla}{Claim}
\newtheorem{theorem}{Theorem}
\newtheorem{lemma}{Lemma}
\newtheorem{definition}{Definition}
\newtheorem{observation}{Observation}
\newcommand{\MLSFour}{\mbox{\rm S4}}
\newcommand{\MLSFive}{\mbox{\rm S5}}
\newcommand{\MLT}{\mbox{\rm T}}
\newcommand{\MLTB}{\mbox{\rm TB}}
\newcommand{\MLDFour}{\mbox{\rm D4}}
\newcommand{\MLDFourFive}{\mbox{\rm D45}}
\newcommand{\MLD}{\mbox{\rm D}}
\newcommand{\MLDFive}{\mbox{\rm D5}}
\newcommand{\MLKFour}{\mbox{\rm K4}}
\newcommand{\MLKFourFive}{\mbox{\rm K45}}
\newcommand{\MLKBFourFive}{\mbox{\rm KB45}}
\newcommand{\MLKFive}{\mbox{\rm K5}}
\newcommand{\MLK}{\mbox{\rm K}}
\newcommand{\ML}[1]{\mbox{\rmfamily\bfseries\scshape #1}\xspace}
\newcommand{\CTwo}{$\mathrm{C}^2$\xspace}
\newcommand{\COne}{$\mathrm{C}^1$\xspace}
\newcommand{\GCTwo}{$\mathrm{GC}^2$\xspace}
\newcommand{\GCTwoo}{\mathrm{GC}^2}
\newcommand{\RevDiamond}{\ensurestackMath{\stackengine{.5pt}{\Diamond}{\scalebox{.75}[1]{$-$}}{O}{c}{F}{F}{L}}}
\newcommand{\RevBox}{\ensurestackMath{\stackengine{.5pt}{\Box}{\scalebox{.75}[1]{$-$}}{O}{c}{F}{F}{L}}}
\newcommand{\U}{\mathbf{U}}
\newcommand{\LogSpace}{\textsc{LogSpace}}
\newcommand{\NP}{\textsc{NP}}
\newcommand{\PSpace}{\textsc{PSpace}}
\newcommand{\ExpTime}{\textsc{ExpTime}}
\newcommand{\NExpTime}{\textsc{NExpTime}}
\newcommand{\TwoExpTime}{2\textsc{-ExpTime}}
\newcommand{\deff}{\stackrel{\text{def}}{=}}
\newcommand{\str}[1]{{\mathfrak{#1}}}
\newcommand{\tuple}[1]{\langle#1\rangle} 
\newcommand{\N}{{\mathbb N}}   
\newcommand{\restr}{\!\!\restriction\!\!}
\renewcommand{\iff}{\leftrightarrow}
\newcommand{\st}{\mathbf{st}} 
\newcommand{\stx}{\st_x} 
\newcommand{\sty}{\st_y} 
\newcommand{\stv}{\st_v} 
\newcommand{\tr}{\mathbf{tr}} 
\newcommand{\phiPartition}{\varphi_{\text{partition}}}
\newcommand{\phiTorus}{\varphi_{\text{torus}}}
\newcommand{\phiTorusSize}{\varphi_{\text{torusSize}}}
\newcommand{\phiChessboard}{\varphi_{\text{chessboard}}}
\newcommand{\phiFirstCell}{\varphi_{\text{firstCell}}}
\newcommand{\phiSuccessors}{\varphi_{\text{succ}}}
\newcommand{\lan}{\mathit{lantern}}
\newcommand{\inn}{\mathit{inner}}
\newcommand{\bl}{\mathit{blk}}
\newcommand{\wht}{\mathit{wht}}
\newcommand{\vbw}{\mathit{vbw}}
\newcommand{\hbw}{\mathit{hbw}}
\newcommand{\hwb}{\mathit{hwb}}
\newcommand{\vwb}{\mathit{vwb}}
\newcommand{\phivbw}{\varphi_\mathit{vbw}}
\newcommand{\phihbw}{\varphi_\mathit{hbw}}
\newcommand{\phihwb}{\varphi_\mathit{hwb}}
\newcommand{\phivwb}{\varphi_\mathit{vwb}}
\newcommand{\phiPseudoUniqueness}{\varphi_{\text{pseudoUniqueness}}}
\newcommand{\phiEqualH}{\varphi_{\text{equalH}}}
\newcommand{\phiEqualV}{\varphi_{\text{equalV}}}
\newcommand{\phiAddOneVBW}{\varphi_{V_w {=} V_b {\oplus_{2^n}} 1 } }
\newcommand{\phiAddOneHBW}{\varphi_{H_w {=} H_b {\oplus_{2^n}} 1 } }
\newcommand{\phiAddOneHWB}{\varphi_{H_b {=} H_w {\oplus_{2^n}} 1 } }
\newcommand{\phiAddOneVWB}{\varphi_{V_b {=} V_w {\oplus_{2^n}} 1 } }
\newcommand{\phiTiling}{\varphi_{\text{tiling}}}
\newcommand{\phiReduction}{\varphi_{\text{reduction}}}
\newcommand{\phiTile}{\varphi_{\text{tile}}}
\newcommand{\phiInitCond}{\varphi_{\text{initCond}}}
\newcommand{\phiTilingRules}{\varphi_{\text{tilingRules}}}
\let\cH\undefined
\let\cA\undefined
\let\cB\undefined
\let\cT\undefined
\let\cV\undefined
\let\cP\undefined
\let\cF\undefined
\let\cL\undefined
\newcommand{\cH}{\mathcal{H}}
\newcommand{\cA}{\mathcal{A}}
\newcommand{\cB}{\mathcal{B}}
\newcommand{\cT}{\mathcal{T}}
\newcommand{\cV}{\mathcal{V}}
\newcommand{\cP}{\mathcal{P}}
\newcommand{\cL}{\mathcal{L}}
\newcommand{\cF}{\mathcal{F}}
\newcommand{\irref}{\text{\emph{irref}}}
\title[Complexity of Graded Modal Logics with Converse]
{Completing the Picture: Complexity of Graded Modal Logics with Converse}
\author[B. Bednarczyk, E. Kiero\'nski and P. Witkowski]
	{Bartosz Bednarczyk\\
	Institute of Computer Science, University of Wroc{\l}aw, Poland and\\
	Computational Logic Group, TU Dresden, Germany\\
	\email{bartosz.bednarczyk@cs.uni.wroc.pl}
	\and 
	Emanuel Kiero\'nski\\
	Institute of Computer Science, University of Wroc{\l}aw, Poland\\
	\email{emanuel.kieronski@cs.uni.wroc.pl}
	\and
	Piotr Witkowski\\
	Institute of Computer Science, University of Wroc{\l}aw, Poland\\
	\email{piotr.witkowski@cs.uni.wroc.pl}
	}
\begin{document} \label{firstpage}
\maketitle			


\begin{abstract}
A complete classification of the complexity of the local and global 
satisfiability problems for graded modal language over traditional classes of 
frames have already been established. By ''traditional'' classes of frames, we 
mean those characterized by any positive combination of reflexivity, seriality, 
symmetry, transitivity, and the Euclidean property. 
In this paper, we fill the gaps remaining in an analogous classification of the 
graded modal language with graded converse modalities. In particular, we show 
its~$\NExpTime$-completeness over the class of Euclidean frames, demonstrating 
this way that over this class the considered language is harder than the 
language without graded modalities or without converse modalities. We also 
consider its variation disallowing graded converse modalities, but still 
admitting basic converse modalities. Our most important result for this 
variation is confirming an earlier conjecture that it is decidable over 
transitive frames. This contrasts with the undecidability of the language 
with graded converse modalities.

Under consideration in Theory and Practice of Logic Programming (TPLP).
\end{abstract}

\begin{keywords}
modal logic, complexity, graded modalities, satisfiability
\end{keywords}


\section{Introduction}

For many years modal logic has been  an active topic in many academic disciplines, including philosophy, mathematics, linguistics,
and computer science. Regarding applications in computer science, e.g., in knowledge representation or verification, some important variations are those involving graded and converse modalities. In this paper, we investigate their 
computational complexity.

 By  \emph{a modal logic} we will mean a pair~$(\cL, \cF)$, represented usually
as~$\cF(\cL^*)$, where $\cL$ is a \emph{modal language},~$\cF$ is a \emph{class of frames}, and~$\cL^*$ is a short
symbolic representation of~$\cL$ (see the next paragraph), characterizing the \emph{modalities} of~$\cL$.
For example~$\MLKFour(\Diamond_{\ge})$ will denote the graded modal logic of transitive frames.

While we are mostly interested in languages with graded and converse
modalities, to set the scene we need to mention languages without
them. Overall, the following five languages are relevant: the basic
\emph{one-way modal language} ($\cL^*= \Diamond$) containing only one,
\emph{forward}, modality~$\Diamond$; \emph{graded one-way modal
  language} ($\cL^* = \Diamond_{\ge}$) extending the previous one by
\emph{graded} forward modalities,~$\Diamond_{\ge n}$, for all~$n \in \N$; \emph{two-way modal language} ($\cL^* = \Diamond, \RevDiamond$) containing basic forward modality
and the \emph{converse} modality~$\RevDiamond$; \emph{graded two-way
  modal language} ($\cL^* = \Diamond_{\ge}, \RevDiamond_{\ge}$)
containing the forward modality, the converse modality and their
graded versions~$\Diamond_{\ge n}$,~$\RevDiamond_{\ge n}$, for all~$n \in N$; and, additionally, a restriction of the latter without
graded converse modalities, but with basic converse modality ($\cL^* =\Diamond_{\ge}, \RevDiamond$).

The meaning of graded modalities is natural:~$\Diamond_{\ge n} \varphi$ means ''$\varphi$ is true at no fewer than~$n$ successors of the current world'', and~$\RevDiamond_{\ge} \varphi$ means ''$\varphi$ is true at no fewer than~$n$ predecessors of the current world''. We also recall that $\Diamond \varphi$ means
''$\varphi$ is true at some successor of the current world'' and $\RevDiamond \varphi$ --- ''$\varphi$ is true at some predecessor of the current world''. Thus, e.g.,~$\Diamond$ is equivalent to~$\Diamond_{\ge 1}$. 

Our aim is to classify the complexity of the local (``in a world'')
and global (``in all worlds'') satisfiability problems for all the
logics obtained by combining any of the above languages with any class
of frames from the so-called modal cube, that is a class of frames
characterized by any positive combination of the axioms of reflexivity~(T),
seriality~(D), symmetry~(B), transitivity~(4), and 
the Euclidean property~(5).

See Fig.~\ref{fig:figure1} for a visualization of the
modal cube. Nodes of the depicted graph correspond to classes of
frames and are labelled by letters denoting the above-mentioned
properties, with S used in S4 and S5 for some historical reasons to
denote reflexivity, and K denoting the class of all frames. 
If there is a path from a class~$X$ to a class~$Y$ then it means
that any class from~$Y$ also belongs to~$X$ (as all the axioms of~$X$ are also present in~$Y$). 
Note that the modal cube contains only~$15$ classes, since some different
combinations of the relevant axioms lead to identical classes,
e.g., reflexivity implies seriality, symmetry and transitivity imply Euclideanness, and so on.
\begin{figure}
\includegraphics{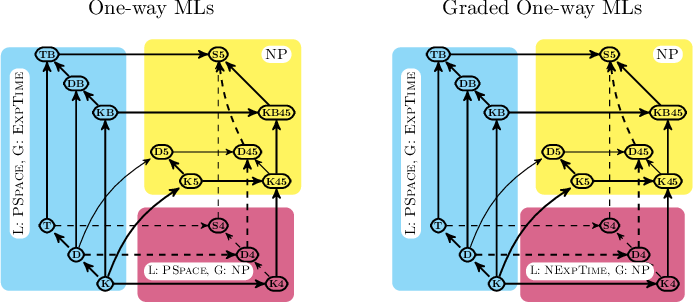}
\caption{Complexity of one-way modal logics. All bounds are tight. If local and global satisfiability differ
in complexity then ''L:'' indicates local and ''G:''---global satisfiability.} 
\label{fig:figure1}
\end{figure}

A lot of work has been already done.  The cases of basic one-way language and graded one-way language
are completely understood. See Fig.~\ref{fig:figure1}.
The results for the former can be established using some standard techniques,
see, e.g.,~\cite{BBV01} and the classical paper~\cite{Lad77}. The local satisfiability
of the latter is systematically analysed in~\cite{KazakovP09}, with complexities turning out to lie between \NP{}
and \NExpTime{}. As for its global satisfiability, some of the results follow from~\cite{KazakovP09}, some are given
in~\cite{Zolin17}, and the other can be easily obtained using again some standard techniques.
\begin{figure}
\includegraphics{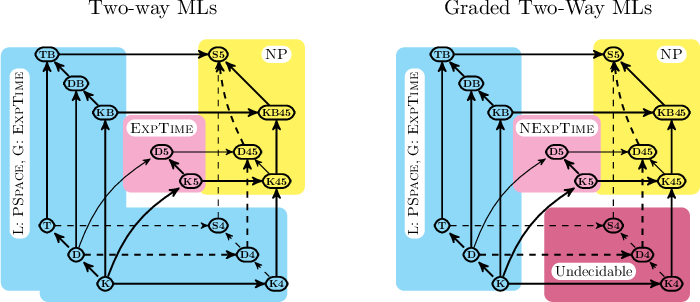}
\caption{Complexities of two-way modal logics. All bounds are tight.}
\label{fig:figure2}
\end{figure}

In the case of non-graded two-way modal language, over most relevant
classes of frames, tight complexity bounds for local and global
satisfiability are also known. The notable exceptions are global
satisfiability problems of the logics of transitive
frames, \MLKFour$(\Diamond, \RevDiamond)$, \MLSFour
$(\Diamond, \RevDiamond)$, \MLDFour $(\Diamond, \RevDiamond)$, which
are known to be in~\ExpTime{} (due to a result in~\cite{DemriN05} or due to a
translation to description logic~$SI$, whose satisfiability is
in~\ExpTime{}~\cite{Tobies01}). However, according to the survey part
of~\cite{Zolin17}, the corresponding lower bounds are missing. In the literature we were also not able find a tight lower bound for the logics of Euclidean frames,~\MLKFive $(\Diamond, \RevDiamond)$,~\MLDFive $(\Diamond, \RevDiamond)$.
We provide both missing bounds in Section~\ref{s:nongraded}, obtaining them by reductions from the acceptance problem for polynomially space bounded alternating Turing machines.\footnote{As explained to the first author by Emil Je\v{r}\'abek, the latter bound can be alternatively proved by a reduction from \MLTB{}, whose \ExpTime-hardness follows from~\cite{ChenL94}.} See the left part of Fig.~\ref{fig:figure2}
for a complete complexity map in this case.

Let us now turn our attention to the most expressive two-way graded modal language with both graded forward and graded converse modalities
 (the right part of
Fig.~\ref{fig:figure2}).
Its local and global satisfiability problems over the class of
all frames (K) are known to be, resp., \PSpace-complete and \ExpTime-complete (see the survey part of~\cite{Zolin17} and references therein).
In Section~\ref{subsec:st}, we explain how to obtain these bounds, as well as the same bounds in all cases involving neither transitivity nor Euclideanness. 
For the \ExpTime-bound, we  employ the so-called \emph{standard translation}. Over \MLKFour, \MLDFour{} and  \MLSFour{} the logics turn out to be undecidable~\cite{Zolin17}. We remark that these are the
only undecidable members of  the whole family of logics considered in this paper. What remains are the classes of frames involving the Euclidean property.
We solve them in Section~\ref{sec:eucliean_logics}. We prove that the logics~\MLKFive $(\Diamond_{\ge}, \RevDiamond_{\ge})$ and~\MLDFive $(\Diamond_{\ge}, \RevDiamond_{\ge})$ are locally and globally~\NExpTime-complete. Interestingly, 
this is a higher complexity  than the \ExpTime-complexity of the language without graded modalities~\cite{DemriN05} 
and \NP-complexity of the language without converse~\cite{KazakovP09} over the same classes of frames. We also show that, when, additionally, transitivity is required, that is, for the logics~\MLKFourFive $(\Diamond_{\ge}, \RevDiamond_{\ge})$ and~\MLDFourFive $(\Diamond_{\ge}, \RevDiamond_{\ge})$, the complexity drops down to~\NP.

Finally, we consider the above-mentioned intermediate language~$(\Diamond_{\ge}, \RevDiamond)$ in which we can count the successors, we have the basic converse modality, but we cannot count the predecessors. Our main result here, presented in Section~\ref{s:transitive}, is 
demonstrating the (local and global) \emph{finite model property} for the logics of transitive frames~$\MLKFour$,~$\MLDFour$ and~$\MLSFour$:
whenever a formula is (locally or, resp., globally) satisfiable it is (locally, resp.~globally) satisfiable over a finite frame. This implies the decidability of the (local and global) satisfiability problem (as well as the \emph{finite} satisfiability problem, in which the attention is restricted to finite frames) for these logics and thus
 solve an open problem posed in~\cite{Zolin17}. 
An analogous problem was  formulated also in the richer
 setting of description logics~\cite{KazakovSZ07},~\cite{GBasulto17}, where the corresponding logic is called $\mathcal{SIQ}^-$. That problem only recently was also positively solved~\cite{GGI19}. The results from \cite{GGI19} (which we will discuss in more details in a moment) 
allow us to derive the precise \TwoExpTime-complexity bounds for 
the logics~$\MLKFour(\Diamond_{\ge}, \RevDiamond)$,~$\MLDFour(\Diamond_{\ge}, \RevDiamond)$ and~$\MLSFour(\Diamond_{\ge}, \RevDiamond)$.
The logics of the remaining classes of 
frames retain their complexities from the graded two-way case, so the picture is as in the right part of Fig.~\ref{fig:figure2}, but the
word ''Undecidable'' should be replaced by ''\TwoExpTime''.

Due to a large number of papers in which the complexity bounds from
Fig.~\ref{fig:figure1} and Fig.~\ref{fig:figure2} are scattered, we
have not referenced all of them in this introduction. Readers wishing
to find an appropriate reference are recommended to use an online tool
prepared by the first author
(\url{bartoszjanbednarczyk.github.io/mlnavigator}).

\smallskip
\noindent
\emph{Related formalisms.} Graded modalities are examples of \emph{counting quantifiers} which are present in various formalisms.
First of all, counting quantifiers were introduced for first-order logic:~$\exists^{\ge n} x \varphi$ means: ''at least $n$ elements $x$ satisfy $\varphi$''. The satisfiability problem for some fragments of first-order logic with counting quantifiers was shown to be decidable. In particular, the two-variable fragment is \NExpTime-complete~\cite{PHartmann05}, the two-variable 
guarded fragment is \ExpTime-complete~\cite{PHartmann07}, and the one-variable fragment is \NP-complete~\cite{PHartmann08}.  We will employ the second of those results in our paper.

Counting quantifiers are also
present, in the form of the so-called \emph{number restrictions}, in some description logics, DLs. As some standard  DLs embed
modal logics (c.f. a result in~\cite[Section~2.6.2]{BHLS2017}), results on DLs with number restrictions may be used to infer upper bounds on the complexity of some graded modal logics. 

The description logic which is particularly interesting from our point of view is the already mentioned logic~$\mathcal{SIQ}^-$.
Syntactically, it can be seen as a \emph{multi-modal} logic, that is a logic whose frames interpret not just one but many accessibility relations,
with different modalities associated with these relations. 
In the case of~$\mathcal{SIQ}^-$ each of the accessibility relations can be independently required to be transitive or not. 
Recently the \emph{knowledge base satisfiability} problem for this
logic was shown decidable and \TwoExpTime-complete  \cite{GGI19}. As we said, from this result the decidability and  \TwoExpTime{} complexity  of both local and
global satisfiability of $\MLKFour(\Diamond_{\ge}, \RevDiamond)$, $\MLSFour(\Diamond_{\ge}, \RevDiamond)$ and $\MLDFour(\Diamond_{\ge}, \RevDiamond)$ can be inferred. 
Nevertheless, our proof of the finite model property for these logics remains valuable
as in \cite{GGI19} the decidability of the finite model reasoning for~$\mathcal{SIQ}^-$ is left open
(with the exception of the case in which there is only one accessibility relation and this relation
is transitive; in this case, however, our finite model construction is used and cross-referred there).

In this context it is worth noting that the logic \MLK$(\Diamond_{\ge}, \RevDiamond)$ (with the accessibility relation  not
necessarily being transitive) and the logic
\MLKFour$(\Diamond^1_{\ge}, \RevDiamond^1, \Diamond^2_{\ge}, \RevDiamond^2)$ (the bi-modal variant of $\MLKFour(\Diamond_{\ge}, \RevDiamond)$ with two independent transitive accessibility relations)
do not have the \emph{global} finite model property. 
Both these logics are contained in $\mathcal{SIQ}^-$.
An example~\MLK$(\Diamond_{\ge}, \RevDiamond)$ formula which is globally satisfiable (e.g., over an infinite
binary tree with reversed edges) but has no finite models is~$\RevDiamond p \wedge \RevDiamond \neg p \wedge \Diamond_{\le 1} \top$.
This example can be easily adapted to \MLKFour$(\Diamond^1_{\ge}, \RevDiamond^1,  \Diamond^2_{\ge}, \RevDiamond^2)$. 
On the other hand, \MLK$(\Diamond_{\ge}, \RevDiamond)$ \emph{does} have the  \emph{local} finite model property, as it is a fragment of the description logic
$\mathcal{ALCIQ}$, whose local finite model property was shown in \cite{Tob01}. The status of the local finite model property for the multi-modal
variants of \MLKFour$(\Diamond_{\ge}, \RevDiamond)$ is open.

\smallskip

\emph{Plan of the paper.} In Section \ref{s:preliminaries} we formally define the relevant modal languages and their semantics, recall the so called standard translation
and use it to derive some initial results. In Sections~\ref{sec:eucliean_logics} and~\ref{s:transitive} we investigate the classes of Euclidean frames and, respectively, transitive frames.
Finally, in Section~\ref{s:nongraded} we provide two lower bounds filling the gaps in the classification of the complexity of non-graded languages.

This work is an extended version of our conference paper~\cite{BednarczykK019}. 



\section{Preliminaries} \label{s:preliminaries}

\subsection{Languages, Kripke structures and satisfiability} \label{subsec:preliminaries-lkss} 

Let us fix a countably infinite set~$\Pi$ of \emph{propositional variables}. 
The \emph{language} of graded two-way modal logic is defined inductively 
as the smallest set of formulas containing~$\Pi$, closed under
Boolean connectives and, for any formula~$\varphi$, 
containing~$\Diamond_{\ge n} \varphi$ and~$\RevDiamond_{\ge n} \varphi$, 
for all~$n \in \N$. For a given formula~$\varphi$, we denote its \emph{length} 
with~$|\varphi|$, and measure it as the number of symbols required to 
write~$\varphi$, with numbers in subscripts~$_\ge n$ encoded in binary 
(i.e., encoding a number~$n$ requires~$\log{n}$ bits rather than~$n$ bits).

The basic modality~$\Diamond$ can be defined in terms of graded 
modalities:~$\Diamond \varphi := \Diamond_{\ge 1} \varphi$.
Analogously, for the converse modality:~$\RevDiamond := \RevDiamond_{\ge 1}$.
Keeping this in mind, we may treat all languages mentioned in the introduction 
as fragments of the above-defined graded two-way modal language. 
We remark that we may also introduce other modalities, e.g.
$$
\Diamond_{\le n} \varphi := \neg \Diamond_{\ge n+1} \varphi, \;
\RevDiamond_{\le n} \varphi := \neg \RevDiamond_{\ge n+1} \varphi, \; 
\Box \varphi := \neg \Diamond \neg \varphi, \; \text{and} \;
\RevBox \varphi := \neg \RevDiamond \neg \varphi.
$$
The semantics is defined with respect to \emph{Kripke structures}, that is, 
structures over the relational signature composed of unary predicates~$\Pi$ and 
with a binary predicate~$R$. Such structures are represented as 
triples~$\str{A}= \langle W, R, V \rangle$, where~$W$ is the 
\emph{universe},~$R$ is a binary \emph{accessibility} relation on~$W$, 
and~$V$ is a function~$V:\Pi \rightarrow \cP(W)$ called \emph{valuation}.
Elements from the set~$W$ are often called \emph{worlds}.

The \emph{satisfaction relation}~$\models$ is defined inductively as follows:
\begin{itemize}
\item $\str{A}, w \models p$ iff $w \in V(p)$, for all~$p \in \Pi$,
\item $\str{A}, w \models \neg \varphi$ iff~$\str{A}, w \not\models \varphi$ 
and similarly for the other Boolean connectives,
\item $\str{A}, w \models \Diamond_{\ge n} \varphi$ iff there are $\geq n$ 
worlds~$v \in W$ such that~$\langle w, v \rangle \in R$
and~$\str{A}, v \models \varphi$
\item $\str{A}, w \models \RevDiamond_{\ge n} \varphi$ iff there are $\geq n$ worlds~$v \in W$ such that~$\langle v, w \rangle \in R$
and~$\str{A}, v \models \varphi$.
\end{itemize}

For a given Kripke structure~$\str{A}=\langle W, R, V \rangle$ 
we call the pair~$\langle W, R \rangle$ its \emph{frame}. For a class of 
frames~$\cF$, we define the local (global) satisfiability problem of a modal 
language~$\cL$ over~$\cF$ (or equivalently for a modal logic~$\cF(\cL^*)$) as 
follows: given a formula~$\varphi$ from a language~$\cL$, verify whether~$\varphi$ 
is satisfied at some world (all worlds)~$w$ of some structure~$\str{A}$ whose 
frame belongs to~$\cF$. 

We announced in the introduction that we are interested in 
classes of frames characterized by any positive combination 
of the axioms of reflexivity~(T), seriality~(D), symmetry~(B), 
transitivity~(4), and the Euclidean property~(5), recalled below.
$$
	\begin{array}{cl@{\quad}l}
	(\ML{D}) & \mbox{seriality}   & \forall x\exists y\,(xRy)\\
	(\ML{T}) & \mbox{reflexivity} & \forall x\,(xRx) \\
	(\ML{B}) & \mbox{symmetry}  & \forall xy\,(xRy\Rightarrow yRx) \\
	(\ML{4}) & \mbox{transitivity} & \forall xyz\,(xRy\land yRz\Rightarrow xRz) \\
	(\ML{5}) & \mbox{Euclideanness} & \forall xyz\,(xRy\land xRz \Rightarrow yRz)\\
	\end{array}
$$ 
We say that a modal logic~$\cF(\cL^*)$ has the \emph{finite local (global) 
model property} if any formula of~$\cL$ which is satisfied in some world 
(all worlds) of some structure from~$\cF$ is also satisfied in some world 
(all worlds) of a \emph{finite} structure from~$\cF$.

\subsection{Standard translations} \label{subsec:st}

Modal logic can be seen as a fragment of first-order logic via the so-called 
\emph{standard translation} (see e.g.,~\cite{BBV01}). Here we present its 
variation tailored for graded and converse modalities and discuss how it can be
used to establish exact complexity bounds for some of graded two-way modal logics.

In the forthcoming definition, we define a function~$\stv$ for~$v \in \{x,y\}$, 
which takes an input two-way modal logic formula~$\varphi$ and returns 
an equisatisfiable first-order formula. Definitions of~$\stx$ and~$\sty$
are symmetric, hence we present the definition of~$\stx$ only.

\begin{gather}
\stx(p) = p(x) \text{ for all~$p \in \Pi$} \\
\stx(\varphi \wedge \psi ) = \stx(\varphi) \wedge \stx(\psi)  \text{ similarly for~$\neg$,~$\vee$, etc.}\\
\stx(\Diamond_{\geq n}\varphi) = \exists_{\geq n}.y(R(x,y) \wedge \sty(\varphi))\\
\stx(\RevDiamond_{\geq n}\varphi) = \exists_{\geq n}.y(R(y, x) \wedge \sty(\varphi))
\end{gather}  
Translated formulas lie in the two-variable guarded fragment of first-order
logic extended with	 counting quantifiers~$\GCTwoo$. Observe that a modal 
formula~$\varphi \in \cL$ is (finitely) locally-satisfiable iff a 
formula~$\exists{x} \; \stx(\varphi) \in \GCTwoo$ is (finitely) satisfiable 
and that~$\varphi$ is (finitely) globally-satisfiable
iff~$\forall{x} \; \stx(\varphi) \in \GCTwoo$ is (finitely) satisfiable.
Since definitions of symmetry, seriality and reflexivity, as recalled in 
the previous section, are~$\GCTwoo$ formulas, the standard translation can be 
used to provide a generic upper bound for the 
logics~$\cF(\Diamond_{\ge}, \RevDiamond_{\ge})$ over all 
classes of frames~$\cF$ involving neither transitivity nor Euclideanness. 
From the fact that the global satisfiability problem is~$\ExpTime$--hard 
even for the basic modal language~$\cF(\Diamond)$~\cite{Blackburn06} 
and from~$\ExpTime$-completeness of~$\GCTwoo$~\cite{PHartmann07}, 
we conclude the following theorem:
\begin{theorem}
  The global satisfiability problem 
  for~$\cF(\Diamond_{\ge}, \RevDiamond_{\ge})$ where~$\cF$ is any class of 
  frames from the modal cube involving neither 
  transitivity nor Euclideanness, is~$\ExpTime$-complete.
\end{theorem}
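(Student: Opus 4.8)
The plan is to establish the theorem as a routine consequence of the standard translation together with two already-recalled facts: the \ExpTime-completeness of \GCTwoo{} satisfiability from~\cite{PHartmann07}, and the \ExpTime-hardness of global satisfiability already for the basic modal logic~$\cF(\Diamond)$ from~\cite{Blackburn06}. The only real content is checking that the standard translation behaves well over the classes of frames in question, i.e.\ those obtained from positive combinations of (T), (D), (B) but \emph{not} (4) or (5).

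For the upper bound, first I would recall from the preceding subsection that, for any modal formula~$\varphi \in \cL$, the formula~$\varphi$ is globally satisfiable over a frame class~$\cF$ iff the \GCTwoo{} sentence~$\forall x\, \stx(\varphi) \wedge \chi_\cF$ is satisfiable, where~$\chi_\cF$ is the conjunction of the first-order axioms defining~$\cF$ (a subset of the five axioms (D), (T), (B), (4), (5)). The key observation is that the axioms for seriality~$\forall x \exists y\, (xRy)$, reflexivity~$\forall x\,(xRx)$, and symmetry~$\forall x y\,(xRy \Rightarrow yRx)$ are all themselves sentences of \GCTwoo{}: each uses at most two variables, and in each the (only) binary atom~$R(x,y)$ guards the quantifier (for seriality, $\exists y\,(R(x,y) \wedge \top)$ is guarded; reflexivity has no quantified variable beyond~$x$ under~$R(x,x)$; symmetry is a universally quantified guarded implication). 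Hence for any class~$\cF$ from the modal cube avoiding (4) and (5), the sentence~$\chi_\cF$ lies in \GCTwoo{}, so~$\forall x\, \stx(\varphi) \wedge \chi_\cF \in \GCTwoo$, and its satisfiability can be decided in \ExpTime{} by~\cite{PHartmann07}. Since~$\stx(\varphi)$ is computable in polynomial time from~$\varphi$ (the binary encoding of graded subscripts transfers directly to the counting subscripts of the counting quantifiers), this gives the \ExpTime{} upper bound.

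For the lower bound, I would simply note that~$\cF(\Diamond)$ is a sub-language of~$\cF(\Diamond_{\ge}, \RevDiamond_{\ge})$ (via~$\Diamond := \Diamond_{\ge 1}$, as recalled in the Preliminaries), so every instance of the global satisfiability problem for~$\cF(\Diamond)$ is, verbatim, an instance of the global satisfiability problem for~$\cF(\Diamond_{\ge}, \RevDiamond_{\ge})$. Thus the \ExpTime-hardness of the former, which by~\cite{Blackburn06} holds already for the relevant classes of frames, transfers immediately. Combining the two bounds yields \ExpTime-completeness.

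The main thing to be careful about — rather than a genuine obstacle — is verifying that \emph{all} the frame-defining axioms used are expressible as \emph{guarded} two-variable formulas with counting, not merely two-variable ones, since \GCTwoo{} is the guarded fragment; this is where it matters that we exclude transitivity and Euclideanness, whose defining sentences~$\forall xyz\,(xRy \wedge yRz \Rightarrow xRz)$ and~$\forall xyz\,(xRy \wedge xRz \Rightarrow yRz)$ use three variables and are not guarded. A secondary point worth stating explicitly is that the classes in the modal cube avoiding (4) and (5) are closed under the operations implicit here and that the standard translation is faithful in the sense already asserted in the excerpt, so no separate model-transfer argument is needed. I would also remark that the same reasoning applies to local satisfiability, giving \PSpace{} or \ExpTime{} bounds depending on the source results, but the statement as given only concerns global satisfiability, so I would keep the proof focused on that case.
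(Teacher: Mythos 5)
Your proposal is correct and follows essentially the same route as the paper: the upper bound via the standard translation into \GCTwoo{} together with the observation that the (D), (T), (B) axioms are themselves \GCTwoo{} sentences and the \ExpTime{} bound of~\cite{PHartmann07}, and the lower bound inherited from global satisfiability of~$\cF(\Diamond)$~\cite{Blackburn06}. Your extra care about guardedness of the frame axioms only makes explicit what the paper asserts in passing.
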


For the local satisfiability problem, its complexity decreases to~$\PSpace$. 
For two-way graded language over~$\MLK$, $\MLD$ and~$\MLT$, we can simply adapt 
an existing tableaux algorithm by Tobies~\cite{Tobies01}, which yields a 
tight~$\PSpace$ bound. Moreover, if a class of frames is symmetric, then 
forward and converse modalities coincide and thus we may simply apply the 
result on graded one-way languages from~\cite{KazakovP09}. 
The~$\PSpace$ lower bounds for the above-mentioned logics are inherited from the basic 
modal logic~$\MLK$~\cite{Lad77} and hold even in the case of their propositional-variable-free fragment\cite{ChagrovR02}.
Thus we can conclude the following.

\begin{theorem}
The local satisfiability problem for~$\cF(\Diamond_{\ge}, \RevDiamond_{\ge})$, 
where~$\cF$ is any class of frames from the modal cube involving neither 
transitivity nor Euclideanness is~$\PSpace$-complete.
\end{theorem}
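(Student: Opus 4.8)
The plan is to prove the matching lower and upper bounds separately; the only subtlety lies in the upper bound, where I would split into the non-symmetric classes ($\MLK$, $\MLD$, $\MLT$) and the symmetric ones ($\MLKB$, $\MLDB$, $\MLTB$). For the lower bound, I would recall that for each of these six classes the local satisfiability problem is already $\PSpace$-hard for the \emph{basic} one-way language $\cF(\Diamond)$ --- by Ladner's QBF-based construction \cite{Lad77}, and in fact even for its propositional-variable-free fragment \cite{ChagrovR02}. Since $\Diamond$ abbreviates $\Diamond_{\ge 1}$, every formula of $\cF(\Diamond)$ is literally a formula of $\cF(\Diamond_{\ge}, \RevDiamond_{\ge})$ with exactly the same local models, so the identity map is a trivial polynomial reduction and the $\PSpace$ lower bound is inherited.

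For the upper bound over $\MLK$, $\MLD$ and $\MLT$, the standard translation places us only in $\GCTwoo$, i.e., in $\ExpTime$, so a sharper, modal-style argument is required. I would invoke Tobies' space-bounded tableau procedure \cite{Tobies01}, which decides concept satisfiability in the description logic $\mathcal{ALCQI}$ --- equivalently, local satisfiability of $\MLK(\Diamond_{\ge}, \RevDiamond_{\ge})$ --- in $\PSpace$, even when the grading numbers are coded in binary. The key observation is that seriality and reflexivity are purely \emph{structural} frame conditions that can be folded directly into the expansion rules of the tableau, instead of being imposed as global axioms (which would cost an exponential blow-up): for $\MLD$, attach a fresh successor to any world for which no $\Diamond_{\ge n}$-constraint already demands one; for $\MLT$, treat each world as an $R$-successor and an $R$-predecessor of itself, and account for this mandatory reflexive loop when checking the upper-bound constraints $\Diamond_{\le n}$ and $\RevDiamond_{\le n}$ at that world. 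For the symmetric classes $\MLKB$, $\MLDB$, $\MLTB$ I would use that a frame is symmetric iff $R = R^{-1}$, so that over such frames $\RevDiamond_{\ge n}\varphi$ and $\Diamond_{\ge n}\varphi$ are semantically equivalent; replacing each converse modality $\RevDiamond_{\ge n}$ in the input by $\Diamond_{\ge n}$ then yields, in linear time, an equisatisfiable formula of the graded \emph{one-way} language, and the known classification of that language's local satisfiability over these classes \cite{KazakovP09} supplies the $\PSpace$ bound.

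The step I expect to be the main obstacle is the $\MLT$ case of the upper bound: one must check that Tobies' space-bounded tableau remains sound and complete once a mandatory reflexive edge is present at every world, since that edge contributes simultaneously to the forward and to the backward neighbourhood of a world and can interact with the counting constraints evaluated there. The argument is not conceptually deep, but getting this bookkeeping exactly right (and, more mildly, the dummy-successor handling for $\MLD$) is where the real work lies; everything else is immediate from the cited results.
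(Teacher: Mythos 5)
Your proposal follows essentially the same route as the paper: the lower bound is inherited from Ladner's \PSpace-hardness of basic \MLK{} (even in the variable-free fragment), the upper bound for \MLK, \MLD, \MLT{} comes from adapting Tobies' \PSpace{} tableau for graded two-way modalities, and the symmetric classes are handled by collapsing $\RevDiamond_{\ge}$ into $\Diamond_{\ge}$ and citing the one-way results of Kazakov and Pratt-Hartmann. The paper leaves the adaptation of the tableau to \MLD{} and \MLT{} at the level of a remark, so your more explicit discussion of how to fold seriality and the mandatory reflexive loop into the expansion rules only elaborates on what the paper takes for granted.
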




\section{Euclidean frames: counting successors and predecessors} \label{sec:eucliean_logics}

This section is dedicated to modal languages over the classes of frames
satisfying Euclideanness. We demonstrate an exponential 
gap~($\NExpTime$ versus~$\NP$) in the complexities of modal logics over 
Euclidean frames ($\MLKFive$ and~$\MLDFive$) 
and modal logics over transitive Euclidean frames ($\MLKFourFive$ and~$\MLDFourFive$).

The two remaining Euclidean logics of our interest, namely~$\MLKBFourFive$ and~$\MLSFive$, 
whose frames are additionally symmetric, may be seen as one-way logics (as~$\RevDiamond_\ge$ can
be always replaced by~$\Diamond_\ge$).
Hence, their~$\NP$ upper bounds follows from previous works on one-way MLs~\cite{KazakovP09}.
The lower bound is inherited from the Boolean satisfiability problem~\cite{Cook71}. 
Thus:
\begin{theorem}[Consequence of~\cite{KazakovP09}.]
The local satisfiability and the global satisfiability problems 
for modal logics~$\MLKBFourFive(\Diamond_{\ge}, \RevDiamond_{\ge})$ 
and~$\MLSFive(\Diamond_{\ge}, \RevDiamond_{\ge})$
are~$\NP$-complete.
\end{theorem}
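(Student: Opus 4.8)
The plan is to establish the two bounds separately. For the lower bound, I would observe that both $\MLKBFourFive$ and $\MLSFive$ contain all propositional formulas, so $\NP$-hardness follows immediately from the $\NP$-hardness of Boolean satisfiability~\cite{Cook71}: given a propositional formula $\psi$, it is satisfiable in the usual sense iff it is locally satisfiable over any nonempty class of frames (interpret the propositional variables freely at a single world), and for global satisfiability one can take a one-world frame in which the accessibility relation is as required by the class (for $\MLSFive$, a reflexive loop; for $\MLKBFourFive$, either the empty relation or a loop — both frames are symmetric, transitive and Euclidean), so that a propositional tautology-free formula imposes no constraints beyond the Boolean ones. Hence both problems are $\NP$-hard.

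For the upper bound, the key point is the remark already made in the text: over symmetric frames the forward and converse modalities coincide, since symmetry gives $xRy \iff yRx$, and therefore for any Kripke structure whose frame is symmetric we have $\str{A}, w \models \RevDiamond_{\ge n}\varphi$ iff $\str{A}, w \models \Diamond_{\ge n}\varphi$. Thus every formula of $\MLKBFourFive(\Diamond_{\ge}, \RevDiamond_{\ge})$ is equivalent, over symmetric frames, to a formula of the one-way graded language $\MLKBFourFive(\Diamond_{\ge})$, obtained by syntactically replacing each $\RevDiamond_{\ge n}$ by $\Diamond_{\ge n}$; the same holds for $\MLSFive$. The frame classes themselves are unchanged — $\MLKBFourFive$ is still the class of symmetric, transitive, Euclidean frames and $\MLSFive$ the class of reflexive, symmetric, transitive frames (equivalently, equivalence relations), which are exactly the frame classes for the corresponding one-way logics. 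Therefore the local and global satisfiability problems for $\MLKBFourFive(\Diamond_{\ge}, \RevDiamond_{\ge})$ and $\MLSFive(\Diamond_{\ge}, \RevDiamond_{\ge})$ reduce in linear time to the local and global satisfiability problems for the one-way graded logics over the same frame classes, and the latter are known to be in $\NP$ by~\cite{KazakovP09}.

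Combining the two directions yields $\NP$-completeness in all four cases (two logics, local and global). I do not anticipate any genuine obstacle here: the only mild care needed is to confirm that the cited classification in~\cite{KazakovP09} indeed covers the one-way graded language over precisely these two frame classes (the S5-type equivalence-relation frames and the symmetric-transitive-Euclidean frames) for \emph{both} local and global satisfiability, and that the binary encoding of the grading subscripts is the one used there as well; given that, the argument is a short reduction plus citation.
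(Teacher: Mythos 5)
Your proposal is correct and follows essentially the same route as the paper: the paper likewise observes that over the symmetric frame classes $\RevDiamond_{\ge}$ can be replaced by $\Diamond_{\ge}$, reducing to the one-way graded logics of~\cite{KazakovP09} for the \NP{} upper bound, and inherits the lower bound from Boolean satisfiability~\cite{Cook71}. Your write-up merely spells out the details (the syntactic replacement, the one-world frames for hardness) that the paper leaves implicit.
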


\subsection{The shape of Euclidean frames}

We start by describing the shape of frames under consideration. 
Let~$\str{A}$ be a \emph{Euclidean structure}, i.e., a 
Kripke structure~$\str{A} = \tuple{W, R, V}$ whose accessibility 
relation~$R$ satisfies the Euclidean property.

A world~$w \in W$ is called a \emph{lantern}, if~$\tuple{w', w} \not\in R$ 
holds for every~$w' \in W$. The set of all lanterns in~$\str{A}$ is denoted 
with~$L_{\str{A}}$. We say that lantern~$l \in W$ \emph{illuminates} 
a world~$w\in W$, if~$\tuple{l, w} \in R$ holds. The previous definition is 
lifted to the sets of worlds in an obvious way: a lantern~$l$ \emph{illuminates 
a set of worlds}~$I \subseteq W$ if~$l$ illuminates every world~$w$ from~$I$. 

We say that two worlds~$w_1, w_2 \in W$ are \emph{$R$-equivalent} 
(or simply \emph{equivalent} if~$R$ is known from the context), if 
both~$\tuple{w_1,w_2} \in R$ and~$\tuple{w_2, w_1}\in R$ holds. 
The~\emph{$R$-clique} for a world~$w_1$ in a structure~$\str{A}$ is the 
set~$Q_{\str A}(w_1) \subseteq W$ consisting of~$w_1$ together with
all of its~$R$-equivalent worlds. With~$Q_{\str{A}}$ we denote the 
set~$W \setminus L_{\str A}$ of \emph{inner} (i.e. non-lantern) \emph{worlds}. 
See Fig.~\ref{fig:figure3} for a drawing of an example Euclidean structure.

\begin{figure}[h]
\begin{center}
 
\tikzset{
pattern size/.store in=\mcSize, 
pattern size = 5pt,
pattern thickness/.store in=\mcThickness, 
pattern thickness = 0.3pt,
pattern radius/.store in=\mcRadius, 
pattern radius = 1pt}
\makeatletter
\pgfutil@ifundefined{pgf@pattern@name@_qjwfuwfqk}{
\makeatletter
\pgfdeclarepatternformonly[\mcRadius,\mcThickness,\mcSize]{_qjwfuwfqk}
{\pgfpoint{-0.5*\mcSize}{-0.5*\mcSize}}
{\pgfpoint{0.5*\mcSize}{0.5*\mcSize}}
{\pgfpoint{\mcSize}{\mcSize}}
{
\pgfsetcolor{\tikz@pattern@color}
\pgfsetlinewidth{\mcThickness}
\pgfpathcircle\pgfpointorigin{\mcRadius}
\pgfusepath{stroke}
}}
\makeatother
\tikzset{every picture/.style={line width=0.75pt}} 

\begin{tikzpicture}[x=0.75pt,y=0.75pt,yscale=-1,xscale=1]

\draw  [pattern=_qjwfuwfqk,pattern size=10pt,pattern thickness=0.75pt,pattern radius=0.75pt, pattern color={rgb, 255:red, 0; green, 0; blue, 0}] (80,129.25) .. controls (80,105.64) and (115.03,86.5) .. (158.25,86.5) .. controls (201.47,86.5) and (236.5,105.64) .. (236.5,129.25) .. controls (236.5,152.86) and (201.47,172) .. (158.25,172) .. controls (115.03,172) and (80,152.86) .. (80,129.25) -- cycle ;
\draw   (145,129) .. controls (145,117.95) and (160.67,109) .. (180,109) .. controls (199.33,109) and (215,117.95) .. (215,129) .. controls (215,140.05) and (199.33,149) .. (180,149) .. controls (160.67,149) and (145,140.05) .. (145,129) -- cycle ;
\draw   (100,130) .. controls (100,118.95) and (115.67,110) .. (135,110) .. controls (154.33,110) and (170,118.95) .. (170,130) .. controls (170,141.05) and (154.33,150) .. (135,150) .. controls (115.67,150) and (100,141.05) .. (100,130) -- cycle ;
\draw    (136.5,36) -- (100.72,128.14) ;
\draw [shift={(100,130)}, rotate = 291.22] [color={rgb, 255:red, 0; green, 0; blue, 0 }  ][line width=0.75]    (10.93,-3.29) .. controls (6.95,-1.4) and (3.31,-0.3) .. (0,0) .. controls (3.31,0.3) and (6.95,1.4) .. (10.93,3.29)   ;

\draw    (187.5,40) -- (145.86,127.2) ;
\draw [shift={(145,129)}, rotate = 295.53] [color={rgb, 255:red, 0; green, 0; blue, 0 }  ][line width=0.75]    (10.93,-3.29) .. controls (6.95,-1.4) and (3.31,-0.3) .. (0,0) .. controls (3.31,0.3) and (6.95,1.4) .. (10.93,3.29)   ;

\draw    (136.5,36) -- (169.33,128.12) ;
\draw [shift={(170,130)}, rotate = 250.38] [color={rgb, 255:red, 0; green, 0; blue, 0 }  ][line width=0.75]    (10.93,-3.29) .. controls (6.95,-1.4) and (3.31,-0.3) .. (0,0) .. controls (3.31,0.3) and (6.95,1.4) .. (10.93,3.29)   ;

\draw    (187.5,40) -- (214.41,127.09) ;
\draw [shift={(215,129)}, rotate = 252.82999999999998] [color={rgb, 255:red, 0; green, 0; blue, 0 }  ][line width=0.75]    (10.93,-3.29) .. controls (6.95,-1.4) and (3.31,-0.3) .. (0,0) .. controls (3.31,0.3) and (6.95,1.4) .. (10.93,3.29)   ;

\draw    (136.5,36) -- (135.03,128) ;
\draw [shift={(135,130)}, rotate = 270.90999999999997] [color={rgb, 255:red, 0; green, 0; blue, 0 }  ][line width=0.75]    (10.93,-3.29) .. controls (6.95,-1.4) and (3.31,-0.3) .. (0,0) .. controls (3.31,0.3) and (6.95,1.4) .. (10.93,3.29)   ;
\draw [shift={(136.5,36)}, rotate = 90.91] [color={rgb, 255:red, 0; green, 0; blue, 0 }  ][fill={rgb, 255:red, 0; green, 0; blue, 0 }  ][line width=0.75]      (0, 0) circle [x radius= 3.35, y radius= 3.35]   ;
\draw    (187.5,40) -- (180.17,127.01) ;
\draw [shift={(180,129)}, rotate = 274.82] [color={rgb, 255:red, 0; green, 0; blue, 0 }  ][line width=0.75]    (10.93,-3.29) .. controls (6.95,-1.4) and (3.31,-0.3) .. (0,0) .. controls (3.31,0.3) and (6.95,1.4) .. (10.93,3.29)   ;
\draw [shift={(187.5,40)}, rotate = 94.82] [color={rgb, 255:red, 0; green, 0; blue, 0 }  ][fill={rgb, 255:red, 0; green, 0; blue, 0 }  ][line width=0.75]      (0, 0) circle [x radius= 3.35, y radius= 3.35]   ;

\draw (138,22) node   {$l_{1}$};
\draw (190,24) node   {$l_{2}$};
\draw (255,129) node   {$Q_{\mathfrak{A}}$};

\end{tikzpicture}
\end{center}
\caption{A Euclidean structure~$\str{A}$ with lanterns~$L_{\str{A}} = \{l_1, l_2\}$}
\label{fig:figure3}
\end{figure}
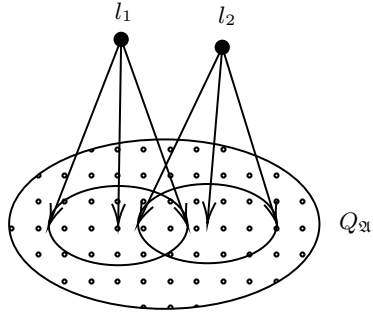
It is easy to observe that for any world~$w_1 \in W$, all members of the 
clique~$Q_{\str A}(w_1)$ are~$R$-equivalent. This justifies why we have chosen 
the term ``clique'' to name such sets. 
\begin{observation}
Any distinct worlds~$w', w''$ from the~$R$-clique~$Q_{\str A}(w)$ of~$w$ are~$R$-equivalent.
\end{observation}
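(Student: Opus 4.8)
The plan is to reduce the statement to two applications of the Euclidean axiom after disposing of the degenerate cases. Recall that, by definition, the $R$-clique $Q_{\str{A}}(w)$ consists of $w$ together with every world that is $R$-equivalent to $w$; hence each of $w'$ and $w''$ either coincides with $w$, or satisfies $\tuple{w,w'} \in R$ and $\tuple{w',w} \in R$ (respectively $\tuple{w,w''} \in R$ and $\tuple{w'',w} \in R$).

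First I would handle the case in which one of the two worlds is $w$ itself. If, say, $w' = w$, then since $w'' \in Q_{\str{A}}(w)$ and $w'' \neq w = w'$, the world $w''$ is by definition $R$-equivalent to $w$, that is, to $w'$; this is exactly the claim. The case $w'' = w$ is symmetric.

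It remains to treat the principal case, where both $w'$ and $w''$ are distinct from $w$. Then both are $R$-equivalent to $w$, so in particular $\tuple{w,w'} \in R$ and $\tuple{w,w''} \in R$. Applying the Euclidean property $\forall xyz\,(xRy \wedge xRz \Rightarrow yRz)$ with $x := w$, $y := w'$, $z := w''$ gives $\tuple{w',w''} \in R$; applying it again with $x := w$, $y := w''$, $z := w'$ gives $\tuple{w'',w'} \in R$. Thus $w'$ and $w''$ are $R$-equivalent, as required.

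I do not anticipate any genuine obstacle: the only points requiring care are the bookkeeping of the degenerate cases and the observation that the definition of an $R$-clique puts $w$ into $Q_{\str{A}}(w)$ without demanding $\tuple{w,w} \in R$. Once this is noted, the Euclidean axiom does all the work essentially in one line, applied twice to obtain the two required edges.
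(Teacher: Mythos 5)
Your proof is correct and takes essentially the same route as the paper's: from $\langle w,w'\rangle\in R$ and $\langle w,w''\rangle\in R$, two applications of the Euclidean axiom (once with the roles of $y$ and $z$ swapped) yield both edges between $w'$ and $w''$. The only difference is that you explicitly dispose of the degenerate case where one of the two worlds coincides with $w$ itself (relevant because $w\in Q_{\mathfrak A}(w)$ by fiat, without $w$ being assumed reflexive), a point the paper's one-line proof silently glosses over; this extra care is harmless and slightly tightens the argument.
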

\begin{proof} 
From the definition of~$R$-equivalence, we know that both~$\tuple{w, w'} \in R$ 
and~$\tuple{w, w''} \in R$ hold. Since the relation~$R$ satisfies the Euclidean 
property we infer that~$\tuple{w', w''} \in R$ holds 
and~$\tuple{w'', w'} \in R$ holds, which implies~$R$-equivalence of~$w$ and~$w'$.
\end{proof}

An immediate conclusion from the above observation is that the 
equality~$Q_{\str A}(w) = Q_{\str A}(w_1)$ holds for any world~$w \in Q_{\str A}(w_1)$.
Thus we will say that~$Q$ is an~\emph{$R$-clique in $\str{A}$} 
if the equality~$Q = Q_{\str A}(w_1)$ holds for some (equivalently: for any) 
world~$w_1 \in Q$.

As usual in modal logics, we can restrict our attention
to~\emph{$R$-connected} models, that is those
models~$\str{A} = \tuple{W,R,V}$ for which~$\tuple{W, R \cup R^{-1}}$
is a connected graph. The following lemma describes the shape of
Euclidean structures under consideration. It is very similar to
Lemma~2 in~\cite{KazakovP09}.

\begin{lemma} \label{lem:PropOfEuclStr} 
If $\str{A}$ is an $R$-connected structure over a Euclidean 
frame~$\tuple{W, R}$, then all worlds~$w$ in~$Q_{\str A}$ are \emph{reflexive} 
(i.e~$\tuple{w, w} \in R$ holds) and~$Q_{\str A}$ is an~$R$-clique.
\end{lemma}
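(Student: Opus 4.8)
The plan is to prove the two assertions separately. \textbf{Reflexivity of inner worlds.} This part is immediate and uses only Euclideanness, not connectedness. If $w \in Q_{\str A}$, then $w$ is not a lantern, so it has some predecessor $w'$, i.e.\ $\tuple{w', w} \in R$. Instantiating the Euclidean axiom $\forall xyz\,(xRy \wedge xRz \Rightarrow yRz)$ with $x := w'$ and $y := z := w$ gives $\tuple{w, w} \in R$. Hence every inner world is reflexive. (In passing this shows that every world having at least one predecessor is reflexive, a fact reused below.) Note that the symmetric ``shadow'' of this argument also shows that $\tuple{w', w}\in R$ and $w' \neq w$ force $w'$ to be inner as well, since then $R$-equivalence of $w'$ and $w$ follows.

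\textbf{$Q_{\str A}$ is a single clique.} If $Q_{\str A}$ is empty the statement holds vacuously, so fix $w \in Q_{\str A}$; by the remark following the Observation it suffices to prove that every inner world is $R$-equivalent to $w$, since then $Q_{\str A} = Q_{\str A}(w)$ (the reverse inclusion is clear, as any $R$-equivalent of $w$ has $w$ as a predecessor). Given another inner world $u$, $R$-connectedness provides a path $w = v_0, v_1, \dots, v_k = u$ in $\tuple{W, R \cup R^{-1}}$, and we choose one of minimal length.

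The crux is to rule out lanterns in the \emph{interior} of such a shortest path. Suppose some $v_i$ with $0 < i < k$ is a lantern. A lantern has no incoming $R$-edges, so both path-edges incident to $v_i$ must be directed \emph{out} of $v_i$, i.e.\ $\tuple{v_i, v_{i-1}} \in R$ and $\tuple{v_i, v_{i+1}} \in R$; Euclideanness (with $x := v_i$) then yields the chord $\tuple{v_{i-1}, v_{i+1}} \in R$, producing a strictly shorter path and contradicting minimality. Hence all of $v_0,\dots,v_k$ are inner, and therefore reflexive. It then remains to walk along the path, proving by induction on $i$ that $v_i$ is $R$-equivalent to $w$: the base case is the reflexivity of $w$ established above, and in the inductive step the single edge between $v_i$ and $v_{i+1}$ --- in whichever direction it runs --- is ``completed'' into the opposite edge by one application of Euclideanness to that edge together with the reflexivity loop at its (inner, hence reflexive) source, so that $v_i$ and $v_{i+1}$ are $R$-equivalent; transitivity of $R$-equivalence (itself a one-line consequence of Euclideanness, or of the Observation) then passes $R$-equivalence with $w$ from $v_i$ to $v_{i+1}$. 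Taking $i = k$ gives that $u$ is $R$-equivalent to $w$, and since $u$ was arbitrary we conclude $Q_{\str A} = Q_{\str A}(w)$ is an $R$-clique.

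The only genuine obstacle is the interaction between the lanterns and the connecting path, which is exactly what the chord/shortcut argument in the previous paragraph removes; all the remaining steps are routine unwindings of the Euclidean axiom combined with the reflexivity of inner worlds.
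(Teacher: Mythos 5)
Your proof is correct and follows essentially the same route as the paper's: reflexivity of inner worlds via the Euclidean axiom with $y=z$, then a connectedness path from which interior lanterns are eliminated by the Euclidean ``chord'' (the paper packages this as a preliminary observation applied inductively, you as a minimal-length argument), and finally propagation of $R$-equivalence along the resulting all-inner path using reflexivity, symmetry and transitivity of $R$ on $Q_{\str A}$. The only cosmetic difference is organizational, so no further comment is needed.
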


\begin{proof}
  In the course of proof we will refer to the formula (\textbf{5})
  from Section~\ref{subsec:preliminaries-lkss}, which defines
  Euclidean property. We will show that all worlds in~$Q_{\str A}$ are
  reflexive and all worlds in~$Q_{\str A}$ are~$R$-equivalent. To show
  reflexivity take any~$w \in Q_{\str A}$. By definition
  of~$Q_{\str A}$ there exists~$w' \in W$ such
  that~$\tuple{w', w} \in R$. Since~$\str A$ satisfies (\textbf{5}),
  by taking~$w'$ as~$x$ and~$w$ as both~$y$ and~$z$ in (\textbf{5}),
  we infer~$\tuple{w, w} \in R$.
  
  To demonstrate $R$-equivalence we will employ some simple
  observations. First, the relation
  $R \cap (Q_{\str A} \times Q_{\str A})$ is symmetric.  To prove it
  take any $w_1, w_2 \in Q_{\str A}$ with $\tuple{w_1,w_2} \in
  R$. Then, use reflexivity of $w_1$ and the Euclidean property
  (with~$w_1$ taken as~$x$ and~$z$ simultaneously, and~$w_2$ as~$y$ in
  (\textbf{5})) to infer~$\tuple{w_2, w_1} \in R$.  Second,
  $R \cap (Q_{\str A}\times Q_{\str A})$ is transitive. To prove it
  take any $w_1, w_2, w_3 \in Q_{\str A}$ with $\tuple{w_1,w_2} \in R$
  and $\tuple{w_2,w_3} \in R$. Symmetry of
  $R \cap (Q_{\str A}\times Q_{\str A})$ gives us
  $\tuple{w_2,w_1} \in R$. Then, by the Euclidean property (with~$w_2$
  taken as~$x$, $w_3$ taken as $z$, and~$w_1$ as~$y$ in (\textbf{5}))
  we infer $\tuple{w_1,w_3} \in R$. Third, if $\tuple{l, w_1} \in R$
  and $\tuple{l, w_2} \in R$, for some $l \in L$ and
  $w_1, w_2 \in Q_{\str A}$, then $\tuple{w_1,w_2} \in R\cup
  R^{-1}$. This observation again simply follows from (\textbf{5}).

  Now take any $w \in Q_{\str A}$. We will show that
  $Q_{\str A} = Q_{\str A}(w)$, i.e. that $Q_{\str A}$ is the R-clique
  for $w$. Take any $w' \in Q_{\str A}$. We will show that both
  $\tuple{w,w'} \in R$ and $\tuple{w',w} \in R$. Since $\str A$ is
  connected, there exists a $(R \cup R^{-1}$)-path from $w$ to $w'$ in
  $\str A$. By inductive application of the third observation above we
  may assume that all elements of the path belong to $Q_{\str
    A}$. Then by the first observation (symmetry) we may assume that this
  is actually an $R$-path. Then, by the second observation
  (transitivity) the path reduces to a single edge
  $\tuple{w,w'} \in R$. In the same way we may show that
  $\tuple{w',w} \in R$. Thus all worlds in $Q_{\str A}$ are
  $R$-equivalent with $w$. Since all other worlds in $\str A$ are
  lanterns, they cannot be $R$-equivalent with $w$. Thus $Q_{\str A}$
  is indeed the R-clique for $w$.
\end{proof}

\subsection{The universal modality}

Before we start proving complexity results for the family of Euclidean logics,
we show that global and local satisfiability problems are inter-reducible over
any class of frames involving the Euclidean property. 

Having restricted our attention to~$R$-connected models, we will show 
that the \emph{universal modality}~$\U$ can be defined in terms of 
standard~(i.e,~$\Diamond$ and~$\RevDiamond$) modalities. 
Recall that the semantics of~$\U\varphi$ is defined as 
follows:~$\str{A}, w \models \U\varphi$, iff for every  world~$x$ the 
condition~$\str{A}, x \models \varphi$ holds. Taking a look at the shape of 
Euclidean structures (see e.g. Lemma~\ref{lem:PropOfEuclStr}), it is not
difficult to see that to propagate satisfaction of a given formula~$\varphi$ 
through the whole structure, it is sufficient to first traverse all inner 
elements and from each of them propagate the satisfaction of~$\varphi$ to 
their predecessors. This intuition can be formalised by 
taking~$\U\varphi := \varphi \wedge \Box\Box\RevBox\varphi$. 

\begin{lemma} \label{lem:UniversalModalityDefinableEucl}
  Let~$\str{A} = \tuple{W, R, V}$ be an $R$-connected Euclidean
  structure.
  Then~$\str{A}, w_0 \models \varphi \wedge \Box\Box\RevBox\varphi$
  holds for some world~$w_0 \in W$ iff $\str{A}, v \models \varphi$ holds for all worlds~$v \in W$.
\end{lemma}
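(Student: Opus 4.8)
The plan is to prove the two implications separately, with the right-to-left direction being immediate: if $\str{A}, v \models \varphi$ for every $v \in W$, then in particular $\str{A}, w_0 \models \varphi$, and since $\varphi$ holds at every world it holds in particular at every world reachable along any $R$-path from $w_0$, so $\str{A}, w_0 \models \Box\Box\RevBox\varphi$ as well. The substantive direction is left-to-right, and there I would lean heavily on Lemma~\ref{lem:PropOfEuclStr}, which tells us that in an $R$-connected Euclidean structure the set of inner worlds $Q_{\str A}$ forms a single $R$-clique of pairwise $R$-equivalent reflexive worlds, while every other world is a lantern and hence has no incoming $R$-edge.

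Before the main argument I would record two easy structural facts. First, by the Euclidean property (formula~(\textbf{5})), if a lantern $l$ illuminates a world $w$, then $w$ has an incoming edge and so is not a lantern; hence lanterns can only point into $Q_{\str A}$. Second, a lantern with no outgoing edge is an isolated world, so by $R$-connectedness this can happen only when $W$ consists of that single world $w_0$; in that degenerate case the conjunct $\varphi$ of the formula already yields $\str{A}, w_0 \models \varphi$ and we are done. From now on I may therefore assume $Q_{\str A} \neq \emptyset$ and that every lantern illuminates at least one element of $Q_{\str A}$. Combining this with reflexivity and the clique property of $Q_{\str A}$ gives the key observation: \emph{every} world $s \in W$ has an $R$-edge into $Q_{\str A}$ --- a clique world points to every clique world, including itself, and a lantern points into $Q_{\str A}$ by the first fact.

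The core step is to show that $\str{A}, w' \models \RevBox\varphi$ for every $w' \in Q_{\str A}$. By $\str{A}, w_0 \models \Box\Box\RevBox\varphi$ it suffices, for each such $w'$, to exhibit an $R$-path of length exactly two from $w_0$ to $w'$. If $w_0 \in Q_{\str A}$, take $w_0 \to w_0 \to w'$, using reflexivity of $w_0$ and the fact that $w_0$ and $w'$ lie in the same clique; if $w_0$ is a lantern, pick any $v \in Q_{\str A}$ illuminated by $w_0$ and take $w_0 \to v \to w'$, using that $v$ and $w'$ lie in the same clique. Either way $\str{A}, w' \models \RevBox\varphi$ for all $w' \in Q_{\str A}$. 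Finally, for an arbitrary world $s$, choose (by the key observation) some $w' \in Q_{\str A}$ with $\tuple{s, w'} \in R$; then $\str{A}, w' \models \RevBox\varphi$ forces $\str{A}, s \models \varphi$. Hence $\str{A}, v \models \varphi$ for all $v \in W$, as required. Note that in the non-degenerate case the bare conjunct $\varphi$ is not even used --- $w_0$ itself is handled like any other $s$ --- so its only role is to cover the one-world structure.

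I expect the only delicate points to be the bookkeeping about lanterns: ruling out an isolated lantern via connectedness, observing that lanterns illuminate only clique worlds, and isolating the one-world structure so that the conjunct $\varphi$ is genuinely needed. Once the structural picture supplied by Lemma~\ref{lem:PropOfEuclStr} is in place, the modal part is just a short unwinding of the three nested modalities, so I do not anticipate any real obstacle there.
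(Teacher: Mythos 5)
Your proof is correct and follows essentially the same route as the paper's: the paper shows that $R\circ R\circ R^{-1}$ is the universal relation by composing an edge from $w_0$ into the clique $Q_{\str A}$, an edge inside the clique, and a reversed edge from an arbitrary world into the clique, which is exactly your decomposition into "a length-two path from $w_0$ to each clique world" plus "every world has an edge into $Q_{\str A}$," with the same separate treatment of the degenerate one-world case. (The only cosmetic slip is attributing "lanterns point only into $Q_{\str A}$" to the Euclidean property, when it follows directly from the definition of a lantern.)
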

\begin{proof}
\newcommand{\wi}{w_{\textit{in}}}
\newcommand{\wa}{w_{\textit{ax}}}
Let~$\str{A} = \tuple{W, R, V}$ be an $R$-connected Euclidean structure
and let~$\str{A},w \models \varphi \wedge \Box\Box\RevBox\varphi$ hold for
some world~$w_0 \in W$. We will show that it implies that~$\varphi$ 
is true in every world~$w \in W$ (the opposite direction of the Lemma is trivial).

First, if $R = \emptyset$ then $\str{A}$ is a singleton structure,
because it is $R$-connected. In this case the implication trivially
holds. So, assume that $R \neq \emptyset$. Define
$S = R \circ R \circ R^{-1}$. We will show that $S$ is the universal
relation $W\times W$.  Indeed, take any $a, b \in W$. Then there
exists $x \in Q_{\str A}$ such that $R(a,x)$ holds (if
$a \in Q_{\str A}$ then, by Lemma~\ref{lem:PropOfEuclStr}, $a$ is
reflexive, so take $x = a$; if $a \in L_{\str A}$, such an $x$
exists, since $\str A$ is connected). Similarly, there exists
$y \in Q_{\str A}$ such that $R(b,y)$. Now we have $R(x,y)$, since $R$
is universal on $Q_{\str A}$ by Lemma~\ref{lem:PropOfEuclStr}. Thus we
have $R(a, x)$, $R(x,y)$ and $R^{-1}(y, b)$, so $S(a,b)$ holds and
thus $S = W \times W$. Therefore
$\str{A}, w_0 \models \Box\Box\RevBox\varphi$ implies
$\str{A}, v \models \Box\Box\RevBox\varphi$, for any $v \in W$.
\end{proof}

We now argue that the local and global satisfiability problems coincide for modal logics over Euclidean frames.

\begin{lemma} \label{lem:locisglobforeuclogics}
Let~$(\cL, \cF)$ be a modal logic whose language contains~$\Diamond$ and~$\RevDiamond$ and
$\cF$ is a class of frames from the modal cube satisfying the  Euclidean property. Then the global satisfiability problem 
for~$\cF(\cL^*)$ is~$\LogSpace$ reducible to the local satisfiability problem 
for~$\cF(\cL^*)$ and vice-versa.
\end{lemma}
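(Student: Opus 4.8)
The plan is to derive both reductions from Lemma~\ref{lem:UniversalModalityDefinableEucl}, which tells us that over $R$-connected Euclidean structures the universal modality is expressed by $\U\psi := \psi \wedge \Box\Box\RevBox\psi$; note that $\Box$ and $\RevBox$ are available here, since the language contains $\Diamond$ and $\RevDiamond$ and is closed under Booleans. A preliminary remark used in both directions is that local and global satisfiability over $\cF$ are unaffected by restricting attention to $R$-connected structures: given a model, pass to the $(R\cup R^{-1})$-connected component of the distinguished world (local case) or to any one component (global case); the truth of a modal formula at a world depends only on that world's component, and every Euclidean class $\cF$ in the modal cube is closed under taking $(R\cup R^{-1})$-connected components, because the axioms $(\ML{D})$, $(\ML{T})$, $(\ML{B})$, $(\ML{4})$, $(\ML{5})$ survive (the existential witness for seriality is $R$-reachable; the remaining axioms are universal).

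For the reduction from global to local satisfiability, map $\varphi$ to $\varphi' := \varphi \wedge \Box\Box\RevBox\varphi$. If $\varphi$ is globally satisfiable, then it is so in some $R$-connected $\str{A} \in \cF$, and since $\varphi$ is then true at every world, $\varphi'$ is true at every world, so $\varphi'$ is locally satisfiable over $\cF$. Conversely, if $\str{A}, w_0 \models \varphi'$ for some $\str{A} \in \cF$, restricting to the component of $w_0$ yields an $R$-connected $\str{A}' \in \cF$ with $\str{A}', w_0 \models \varphi'$, and Lemma~\ref{lem:UniversalModalityDefinableEucl} gives $\str{A}', v \models \varphi$ for all $v$, i.e.\ $\varphi$ is globally satisfiable over $\cF$. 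The map is plainly $\LogSpace$-computable.

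For the reduction from local to global satisfiability, choose a propositional variable $p$ not occurring in $\varphi$ and map $\varphi$ to $\varphi'' := \U(\neg p \vee \varphi) \wedge \overline{\U} p$, where $\overline{\U}\psi := \neg \U \neg \psi$; unfolding, $\varphi'' = \big((\neg p \vee \varphi) \wedge \Box\Box\RevBox(\neg p \vee \varphi)\big) \wedge \big(p \vee \Diamond\Diamond\RevDiamond p\big)$, of length linear in $|\varphi|$. Over an $R$-connected Euclidean structure, by Lemma~\ref{lem:UniversalModalityDefinableEucl} (applied to $\neg p \vee \varphi$ and, in dualised form, to $p$), $\varphi''$ is true at every world iff $\neg p \vee \varphi$ is true everywhere and $p$ is true somewhere, i.e.\ iff some world satisfies both $p$ and $\varphi$. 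Hence, if $\varphi''$ is globally satisfiable then (restricting to a component) $\varphi$ is locally satisfiable; conversely, if $\str{A}, w_0 \models \varphi$ for an $R$-connected $\str{A} \in \cF$, then changing the valuation of the fresh $p$ to hold exactly at $w_0$---which alters neither the frame nor the truth of $\varphi$---makes $\varphi''$ true at every world. A fresh variable can be produced in logarithmic space, so this map is $\LogSpace$-computable as well.

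The one point deserving care is the passage to $R$-connected models together with the closure of each Euclidean class under $(R\cup R^{-1})$-components; with that in hand, Lemma~\ref{lem:UniversalModalityDefinableEucl} does the real work, and the degenerate one-element-component case (no $R$-edges) causes no trouble, since there $\U$ and $\overline{\U}$ degenerate correctly, exactly as handled in the proof of Lemma~\ref{lem:UniversalModalityDefinableEucl}.
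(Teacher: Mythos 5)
Your proof is correct and follows essentially the same route as the paper's: restrict attention to $R$-connected models and let Lemma~\ref{lem:UniversalModalityDefinableEucl} do all the work, mapping $\varphi$ to $\U\varphi$ for the global-to-local direction. The only (harmless) deviation is in the local-to-global direction, where the paper simply uses $\neg\U\neg\varphi$ (``$\varphi$ holds somewhere'') rather than your fresh-variable formula $\U(\neg p \vee \varphi)\wedge\neg\U\neg p$; both are valid, yours is just a slightly longer detour.
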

\begin{proof}
  As usual for modal logics we may restrict to satisfiability over
  connected structures.  Since~$\cF$ is Euclidean and we have
  both~$\Diamond$,~$\RevDiamond$ at our disposal, we know that the
  universal modality~$\U$ is definable in~$\cF(\cL^*)$ (see:
  Lemma~\ref{lem:UniversalModalityDefinableEucl}).  From the semantics
  of~$\U$ we can immediately conclude that any modal
  formulas~$\varphi_l, \varphi_g$ the following equivalences
  hold:~$\varphi_l$ is locally-satisfiable iff~$\neg\U\neg\varphi_l$
  is globally-satisfiable and~$\varphi_g$ is globally-satisfiable
  iff~$\U\varphi_g$ is locally-satisfiable.
\end{proof}

\subsection{The upper bound for graded two-way~\texorpdfstring{$\MLKFive$}{K5} and~\texorpdfstring{$\MLDFive$}{D5}}

This Section is dedicated to the following theorem.

\begin{theorem} \label{thm:ModEuclNExp}
  The local and global satisfiability problems for Euclidean Modal Logics~$\MLKFive(\Diamond_{\geq}, \RevDiamond_{\geq})$
	and~$\MLDFive(\Diamond_{\geq}, \RevDiamond_{\geq})$
  are in \NExpTime.
\end{theorem}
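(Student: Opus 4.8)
The plan is to establish the \NExpTime{} upper bound by proving a small-model property: every satisfiable formula has a model of size at most exponential in $|\varphi|$, after which a nondeterministic guess-and-check procedure runs in the required time. By Lemma~\ref{lem:locisglobforeuclogics} it suffices to treat, say, local satisfiability; and by Lemma~\ref{lem:PropOfEuclStr} we may assume the model is $R$-connected with a single reflexive clique $Q_{\str A}$ forming a complete graph on the inner worlds, together with a set $L_{\str A}$ of lanterns, each lantern illuminating some subset of $Q_{\str A}$ (and possibly nothing else, since lantern successors that are not inner would themselves be inner). The seriality case \MLDFive{} differs only in that every lantern must have at least one successor, hence $Q_{\str A}\neq\emptyset$; this is a minor add-on.

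The key structural observation is that the modal behaviour of a world is almost entirely determined by which $1$-types (maximal Boolean-consistent sets of subformulas of $\varphi$ true at it) occur among the inner worlds and with what multiplicities (capped at the largest grade $N$ appearing in $\varphi$, which is at most $2^{|\varphi|}$), and, for each lantern, by the set of $1$-types it illuminates. First I would bound the number of inner worlds: since all inner worlds see each other and themselves, a graded constraint $\Diamond_{\ge n}\psi$ or $\RevDiamond_{\ge n}\psi$ evaluated at an inner world only cares whether at least $n$ inner worlds satisfy $\psi$ (predecessors of an inner world among lanterns can only add to $\RevDiamond$-counts, so I must account for those too — but the count of $\psi$-lanterns illuminating a given inner world is again only relevant up to the threshold $N$). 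A counting/pruning argument then shows we may keep at most $N$ inner worlds of each realized $1$-type, i.e.\ at most $2^{|\varphi|}\cdot N \le 2^{O(|\varphi|)}$ inner worlds. For the lanterns: two lanterns with the same $1$-type that illuminate the same set of inner $1$-types are interchangeable for the satisfaction of any subformula at any world other than themselves, but a lantern's own satisfaction of $\Diamond_{\ge n}\psi$ depends on how many inner worlds of the right types it actually illuminates — so I would instead argue that we only need, for each "lantern requirement" generated by a $\Diamond$-subformula that must be witnessed at some inner world's predecessor, polynomially (in $N$, hence exponentially in $|\varphi|$) many lanterns; the total number of distinct lantern requirements is at most exponential, giving $2^{O(|\varphi|)}$ lanterns overall.

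The main obstacle, I expect, is the interaction between graded converse modalities and lanterns: an inner world may be required to have many predecessors satisfying some $\psi$, and those predecessors can be lanterns, so bounding the number of lanterns cannot be done by a crude "keep one representative per type" argument — we genuinely may need up to $N$ (exponentially many) lantern predecessors of a single inner world, and different inner worlds may impose incompatible-looking demands that must nonetheless be realized simultaneously. The fix is to observe that a single lantern can simultaneously serve as a $\psi$-predecessor for all inner worlds it illuminates, so the demands are monotone and can be satisfied greedily: for each inner world $w$ and each requirement $\RevDiamond_{\ge n}\psi$ at $w$, add fresh lanterns of the appropriate $1$-type illuminating $w$ (and, to be safe, all inner worlds whose $1$-type also calls for $\psi$-predecessors) until the threshold is met; since thresholds are capped at $N$ and there are at most exponentially many (inner world, requirement) pairs, this adds at most $2^{O(|\varphi|)}$ lanterns. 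Conversely, existing lanterns that are not needed as witnesses for any $\Diamond$- or $\RevDiamond$-requirement can simply be deleted, as deleting a lantern never falsifies a $\Box$-formula and only decreases $\Diamond$/$\RevDiamond$-counts at inner worlds, which we have already saturated. Assembling these bounds yields a model with $2^{O(|\varphi|)}$ worlds whose $1$-type assignment, lantern/inner partition, and illumination relation can be guessed and verified in nondeterministic exponential time; checking the Euclidean property (and seriality in the \MLDFive{} case) on the guessed frame is straightforward, completing the argument.
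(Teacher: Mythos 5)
Your route is genuinely different from the paper's. You try to prove an exponential model property directly by model surgery and then guess-and-check, whereas the paper avoids all surgery: it conjoins the standard translation $\st_x(\varphi)$ with two \CTwo{} axioms forcing the shape of connected Euclidean structures from Lemma~\ref{lem:PropOfEuclStr} (all non-lanterns are mutually $R$-related, lanterns receive no $R$-edges), adds $\exists x.\neg\lan(x)$ for seriality, and invokes the \NExpTime{} decision procedure for \CTwo. Your structural analysis (one reflexive clique of inner worlds plus lanterns whose successors all lie in the clique) is the same as the paper's, and the exponential model property you are after is in fact true --- but it is true because it is inherited from \CTwo, and proving it by hand amounts to re-solving the bipartite counting problem that the \CTwo{} procedure already handles.

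The gap is in the surgery itself, and it is twofold. First, pruning the clique down to $\min(N,c_\alpha)$ worlds per $1$-type preserves the graded counts \emph{as seen from inside the clique}, but it shrinks the illumination set of every surviving lantern; a lantern that satisfied $\Diamond_{\ge n}\psi$ via $n$ inner worlds may lose all of them, so the lanterns' $1$-types (which you rely on when restoring the $\RevDiamond$-counts at inner worlds) are not preserved, and the inductive ``truth of subformulas is unchanged'' argument breaks exactly at the lantern case. Second, the greedy repair step --- ``add fresh lanterns of the appropriate $1$-type illuminating $w$ and, to be safe, all inner worlds whose $1$-type also calls for $\psi$-predecessors'' --- is not sound as stated: (i) a fresh lantern's $1$-type contains upper-bound constraints $\Diamond_{\le k}\chi$ (negated graded subformulas), so you cannot freely enlarge its illumination set beyond what its type permits, and nothing guarantees that a set containing $w$ and all the other demanded inner worlds is consistent with any realizable lantern type; and (ii) adding lantern predecessors to inner worlds is not a monotone-safe operation either, since inner worlds may satisfy constraints of the form $\RevDiamond_{\le k}\chi$ that extra $\chi$-lanterns would violate. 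Your claim that ``the demands are monotone'' holds only for the lower-bound conjuncts; the upper bounds in both directions are what make the lantern/clique interaction a genuine two-sided counting problem. To push your approach through you would need to fix, for each lantern, the capped multiset of inner $1$-types it illuminates and, for each inner world, the capped multiset of lantern types illuminating it, and argue these local count specifications can be jointly realized over an exponential-size domain --- which is essentially the content of the \CTwo{} small-model theorem the paper cites.
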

\begin{proof}
  Note that here we may again restrict to satisfiability over
  connected frames.  We start with the case of the class of all
  Euclidean frames~$\MLKFive$.  We translate a given modal
  formula~$\varphi$ to the two-variable logic with counting \CTwo, in
  which both graded modalities and the shape of connected Euclidean
  structures, as defined in Lemma~\ref{lem:PropOfEuclStr}, can be
  expressed. Since satisfiability of~\CTwo is in
  \NExpTime~\cite{PHartmann07}, we obtain the desired conclusion.
  Recall the standard translation~$\st$ from
  Section~\ref{subsec:st}. Let~$\lan(\cdot)$ be a new unary predicate
  and define~$\varphi_{\tr}$ as
  $$
    \st_x(\varphi) \wedge \forall{x}\forall{y}.\left( \neg\lan(x) \wedge
    \neg\lan(y) \rightarrow R(x,y)\right) \wedge \left(\lan(y) \rightarrow
    \neg R(x, y)\right).
  $$
  Since~$\st_x(\varphi)$ belongs to \GCTwo{},~$\varphi_{\tr}$
  belongs to \CTwo (but not to \GCTwo), and has  one free variable~$x$. Let~$\str B$ be a Kripke structure over a Euclidean frame. Expand~$\str B$ to a structure~$\str B^{+}$ by setting~$\lan^{\str B^{+}} =\{ w \in \str B \mid w \in L_{\str B}\}$.
  Taking into account Lemma~\ref{lem:PropOfEuclStr} a structural
  induction on~$\varphi$ easily establishes the following condition
$$ \str B, w_0 \models \varphi\text{ if and only if } \str B^{+} \models 
  \varphi_{\tr}[w_0 /x] \text{ for every world~$w_0 \in B$}.$$
  Thus, a~$\MLKFive(\Diamond_{\geq}, \RevDiamond_{\geq})$ formula~$\varphi$ is locally satisfiable if and only if the \CTwo formula~$\exists_{\geq 1}{x}. \varphi_{\tr}$ is satisfiable, yielding a
  \NExpTime{} algorithm for~$\MLKFive(\Diamond_{\geq}, \RevDiamond_{\geq})$ 
  local satisfiability. Membership of  global satisfiability in \NExpTime{}
	is implied by Lemma~\ref{lem:UniversalModalityDefinableEucl}.

 For the case of serial Euclidean frames,~$\MLDFive$, 
 it suffices to supplement the \CTwo formula defined in 
the case of~$\MLKFive$ with the conjunct~$\exists{x}.(\neg\lan(x) )$ expressing
  seriality. Correctness follows then from the 
simple observation that  a Euclidean frame is serial iff it contains
  at least one non-lantern world (recall that all these worlds are
  reflexive).
\end{proof} 

\subsection{Lower bounds for two-way graded~\texorpdfstring{$\MLKFive$}{K5} and~\texorpdfstring{$\MLDFive$}{D5}}

We now show a matching~$\NExpTime$-lower bound for the logics from the
previous section. We concentrate on local satisfiability, but by Lemma~\ref{lem:UniversalModalityDefinableEucl}
the results will hold also for global satisfiability.
Actually, we obtain a
stronger result, namely, we show that the two-way graded modal logics~$\MLKFive$ 
and~$\MLDFive$ remain~$\NExpTime$-hard even if counting in one-way
(either backward or forward) is forbidden. Hence, we show hardness of the logics~$\MLKFive (\Diamond_{\ge}, \RevDiamond)$ and~$\MLDFive (\Diamond_{\ge}, \RevDiamond)$.
We recall that 
this gives a higher complexity  than the~\ExpTime-complexity of the language~$\Diamond, \RevDiamond$~\cite{DemriN05}
and~\NP-complexity of the language~$\Diamond_{\ge}$~\cite{KazakovP09} over the same classes of frames.

In order to prove~$\NExpTime$-hardness of the Euclidean Two-Way Graded Modal Logics~$\MLKFive$ and~$\MLDFive$ we employ
a variant of the classical tiling problem, namely~\emph{exponential torus tiling problem} from~\cite{Luc02}.

\begin{definition}[4.15 from~\cite{Luc02}] \label{def:tilings}
A \emph{torus tiling problem}~$\cP$ is a tuple~$(\cT, \cH, \cV)$, 
where~$\cT$ is a finite set of tile types and~$\cH, \cV \subseteq \cT \times \cT$ represent the horizontal and
vertical matching conditions. Let~$\cP$ be a tilling problem and~$c = t_0,t_1, \ldots, t_{n-1} \in \cT^{n}$ an initial condition.
A mapping~$\tau : \{ 0,1, \ldots, 2^n-1 \} \times 
\{ 0,1, \ldots, 2^n-1 \} \rightarrow \cT$ is a \emph{solution} for~$\cP$ and~$c$ if and only if, for all~$i,j < 2^n$, the
following holds~$(\tau(i,j), \tau(i \oplus_{2^n} 1, j)) \in \cH, (\tau(i,j), 
\tau(i, j \oplus_{2^n} 1)) \in \cV$ and~$\tau(0,i) = t_i$ for all~$i < n$, where
$\oplus_i$ denotes addition modulo~$i$. It is well-known that there exists a~$\NExpTime$-complete torus tiling problem. 
\end{definition}

\subsubsection{Outline of the proof.} The proof is based on  a polynomial time 
reduction from a torus tiling problem as in Definition~\ref{def:tilings}. Henceforward we assume that a \NExpTime-complete
torus tiling  problem~$\cP = ( \cT, \cH, \cV)$ is fixed. Let  
$c = t_0, t_1, \ldots, t_{n-1} \in \cT^n$ be its initial condition. We write a formula which is (locally) satisfiable 
iff~$\tuple{\cP, c}$ has a solution. Each cell of the torus carries a \emph{position}~$\tuple{H,V} \in \{ 0, 1, \ldots, 2^{n}-1 \} 
\times \{ 0, 1, \ldots, 2^{n}-1 \}$, encoded in binary in a natural way by 
means of propositional letters~$v_0, v_1, \ldots, v_{n-1}$ and 
$h_0, h_1, \ldots, h_{n-1}$, with~$h_0$ and~$v_0$ denoting the least significant 
bits. In the reduction, a single cell of the torus corresponds to a unique 
\emph{inner}, i.e., non-lantern, world. Since there are exactly~$2^n \cdot 2^n$ 
cells, we enforce that also the total number of inner worlds is equal 
to~$2^n \cdot 2^n$. We make use of graded modalities to specify that every inner world has 
exactly~$2^n \cdot 2^n$ successors.
We stress here that this is the only place where we employ counting. 
Thus the proof works in the case where graded converse modalities are disallowed (but the
basic converse modality will be necessary).  Alternatively we could equivalently write that every inner world has exactly~$2^n \cdot 2^n$ inner predecessors,
and obtain hardness of the language with graded converse modalities, but without graded forward modalities. 

Once we enforced a proper size of our torus, we must be sure that two
distinct inner worlds carry different positions. We do this in two
steps. We first write that a world with position~$(0,0)$ occurs in a
model. For the second step, we assume that the grid is
chessboard-like, i.e., all elements are coloured black or white in the
same way as a chessboard is. Then, we say that every world is
illuminated by four lanterns, where each of them
propagates~${\oplus_{2^n}}1$ relation on the proper axis (from a black
node to a white one and vice versa).
Finally, having the torus prepared we encode a solution for the given
tiling by simply labelling each inner world with some tile letter~$t$
and ensure (from the vantage point of the lanterns) that any two horizontal or
vertical neighbours do not violate the tiling constraints.

\subsubsection{Encoding the exponential torus.}\label{subsub:encodingTorus} Our goal is now to define a formula describing the exponential torus. The shape of the formula
is following:
$$
\phiTorus \deff
\phiFirstCell \wedge 
\U \left( 
  \phiPartition \wedge 
  \phiChessboard \wedge 
  \phiTorusSize \wedge 
  \phiSuccessors 
\right)
$$
where~$\U$ is the universal modality as in Lemma~\ref{lem:UniversalModalityDefinableEucl}.
The formula is going to say that: 
(i) the current world has position~$(0,0)$; 
(ii) every world is either a lantern or an inner world; 
(iii) the torus is chessboard-like, i.e., its cells are with coloured with~$\bl$~(black) 
and with~$\wht$~(white) exactly as a real chessboard is; 
(iv) the overall size of the torus is equal to~$2^n \cdot 2^n$; 
(v) each world of the torus has a proper vertical and a proper horizontal 
successor. The first four properties are straightforward to define:
\begin{gather*}
\phiFirstCell \deff \inn \wedge \wht \wedge 
\bigwedge_{i=0}^{n-1} \left( \neg v_i \wedge \neg h_i \right)\\
\phiPartition \deff (\lan \iff \neg \inn) \wedge 
(\lan \iff \neg \RevDiamond \top )\\
\phiChessboard \deff (\wht \iff \neg \bl) \wedge (\wht \iff (v_0 \iff h_0))\\
\phiTorusSize \deff \inn \rightarrow \Diamond_{= 2^n \cdot 2^n} \top
\end{gather*}

Note that the formula~$\phiTorusSize$ indeed expresses (iv), as the
set of all inner worlds forms a clique. The obtained formulas are of
polynomial length since the number~$2^n \cdot 2^n$ is encoded in
binary.

What remains is to define~$\phiSuccessors$. For this, for every inner
world we ensure that there exists a proper lantern responsible for establishing the appropriate successor relation. There will be four
different types of such lanterns, denoted by symbols:~$\vbw$,~$\hbw$,~$\vwb$,~$\hwb$. The intuition is the following: the
first letter \textit{h} or \textit{v} indicates whether a lantern is
responsible for an~$H$-- or~$V$--relation.  The last two letters say 
whether a successor relation
  will be established between black and white worlds, or in the
  opposite way.  
\begin{multline*}
 \phiSuccessors \deff
(\lan \rightarrow\bigvee_{\heartsuit \in \{ \vbw, \hbw, \vwb, \hwb \}} 
(\heartsuit \wedge \varphi_{\heartsuit})
) 
\wedge\\
(
\inn \rightarrow 
\bigwedge_{\heartsuit \in \{ \vbw, \hbw, \vwb, \hwb \}}
\RevDiamond ( \lan \wedge \varphi_{\heartsuit} )
)
\end{multline*}

It suffices to define formulas~$\phivbw$,~$\phihbw$,~$\phivwb$ and
$\phihwb$. Let us first define~$\phivbw$.  The formula below, intended
to be interpreted at a lantern, consists of three parts: (i) the black
and the white worlds illuminated by the lantern are pseudo-unique, i.e.,
all white (respectively, black) worlds illuminated by the same lantern
carry the same position; uniqueness will follow later from
$\phiTorusSize$; (ii) all black worlds illuminated by the lantern have
the same~$H$-position as all white worlds illuminated by this lantern;
(iii) if~$V_w$ (respectively,~$V_b$) encodes a
$V$-position of the white (respectively, black) worlds
illuminated by the lantern, then~$V_w = V_b \oplus_{2^n} 1$. 
Let us define~$\phivbw$ as:
$$\phivbw \deff \phiPseudoUniqueness \wedge \phiEqualH \wedge
\phiAddOneVBW.$$ 
The definitions of the first and the second part of~$\phivbw$ are simple:
$$
\phiPseudoUniqueness \deff \bigwedge_{c \in \{ \wht, \bl \}} 
\bigwedge_{p \in \{v, h \}} \bigwedge_{i=0}^{n-1} 
\Diamond(c \wedge p_i) \rightarrow \Box(c \wedge p_i)
$$
$$
\phiEqualH \deff \bigwedge_{i=0}^{n-1} 
\Diamond(\bl \wedge h_i) \iff \Diamond(\wht \wedge h_i)
$$

Finally we encode the~$\oplus_{2^n}$-operation as the 
formula~$\phiAddOneVBW$ by, a rather standard, implementation of binary addition.
Below we distinguish two cases: when~$V_b$ is equal to~$2^{n}{-}1$ and 
when~$V_b$ is smaller than~$2^{n}{-}1$.
\begin{align*} 
&\phiAddOneVBW \deff 
( \Diamond(\bl \wedge \bigwedge_{i=0}^{n-1} v_i) \rightarrow 
\Diamond(\wht \wedge \bigwedge_{i=0}^{n-1} \neg v_i)) \; \wedge\\
&\;\;\;\;\;\bigvee_{i=0}^{n-1} ( 
\Diamond(\bl \wedge \neg v_i \wedge \bigwedge_{j=0}^{i-1} v_j) \wedge
\Diamond(\wht \wedge v_i \wedge \bigwedge_{j=0}^{i-1} \neg v_j) \wedge
\bigwedge_{j=i+1}^{n-1} \Diamond(\bl \wedge v_j) \iff \Diamond(\wht \wedge v_j) 
)
\end{align*} 
This completes the definition of~$\phivbw$.  The following three
definitions are analogous.
\begin{align*} 
\phihbw &\deff \phiPseudoUniqueness \wedge \phiEqualV \wedge \phiAddOneHBW\\
\phivwb &\deff \phiPseudoUniqueness \wedge \phiEqualH \wedge \phiAddOneVWB\\
\phihwb &\deff \phiPseudoUniqueness \wedge \phiEqualV \wedge \phiAddOneHWB
\end{align*} 
The formula~$\phiEqualV$ can be obtained from~$\phiEqualH$ by
replacing, for every~$i$, the letter~$h_i$ with the letter~$v_i$, and defining the formulas
$\phiAddOneHBW$,~$\phiAddOneVWB$ and~$\phiAddOneHWB$ as simple
modifications of~$\phiAddOneVBW$. While modifying the mentioned formula
one should only switch~$\bl$ and~$\wht$ propositional symbols and
possibly change~$v$ to~$h$ (when we consider adding~$\oplus_{2^n}1$ on the
$H$ axis).

The following Lemma simply states that the formula~$\phiTorus$ indeed defines a valid torus.
Its proof is routine and follows directly from correctness of all presented formulas.

\begin{lemma} \label{lem:phiTorusCorrect} Assume that the the formula~$\phiTorus$ is locally satisfied at a world~$w$ of a Euclidean
  structure~$\str{A} = \tuple{W,R,V}$. Then, set~$Q_{\str{A}}(w)$,
  i.e., the~$R$-clique for~$w$, contains exactly~$2^n \cdot 2^n$
  elements and each of them carries a different position~$\tuple{H, V}$,
  i.e., there are no two worlds~$v,v'$ satisfying exactly the same~$h_i$- and~$v_i$-predicates.
\end{lemma}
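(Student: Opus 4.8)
The plan is to unwind the definition of $\phiTorus = \phiFirstCell \wedge \U(\phiPartition \wedge \phiChessboard \wedge \phiTorusSize \wedge \phiSuccessors)$ and check, in order, the three assertions of the lemma: that $Q_{\str A}(w)$ has exactly $2^n \cdot 2^n$ elements, that distinct inner worlds carry distinct positions, and (implicitly) that all inner worlds coincide with $Q_{\str A}(w)$. First I would invoke Lemma~\ref{lem:PropOfEuclStr} (restricting to $R$-connected $\str A$, which is harmless) to recall that $Q_{\str A}(w)$ is an $R$-clique of reflexive worlds and coincides with the set of inner worlds $Q_{\str A}$, while all remaining worlds are lanterns. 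The conjunct $\phiPartition$, forced everywhere by $\U$, guarantees $\lan \iff \neg\inn$ and $\lan \iff \neg\RevDiamond\top$, so the $\inn$-labelled worlds are precisely the elements of the clique. Since $\phiFirstCell$ puts $w$ in the clique and labels it $\inn \wedge \wht$ with all position bits $0$, the clique is nonempty.

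Next I would handle the cardinality. Take any inner world $v \in Q_{\str A}$; by $\phiTorusSize$ it satisfies $\Diamond_{=2^n\cdot 2^n}\top$, i.e.\ it has exactly $2^n\cdot 2^n$ $R$-successors. Because $v$ is reflexive and $R$-equivalent to every other clique element, and because by $\phiPartition$ no lantern is an $R$-successor of anyone, the $R$-successors of $v$ are exactly the elements of $Q_{\str A}(v) = Q_{\str A}(w)$. Hence $|Q_{\str A}(w)| = 2^n\cdot 2^n$, which also re-confirms that $\phiTorusSize$ correctly expresses property (iv).

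The substantive step is uniqueness of positions, and this is where I expect the main work. Suppose two distinct inner worlds $v, v'$ carry the same position $\tuple{H,V}$ (same truth values of all $h_i, v_i$). Using $\phiChessboard$ they have the same colour, say both white (the black case is symmetric). Consider the $\vbw$ lantern: $\phiSuccessors$ forces every inner world to have an $R$-predecessor lantern $l$ satisfying $\vbw \wedge \phivbw$; pick such an $l$ for some black world whose $V$-position is $V \ominus_{2^n} 1$ — more carefully, the argument is a counting/pigeonhole one. The intended structure is: each lantern of type $\heartsuit$ illuminates, via $\phiPseudoUniqueness$, white worlds all sharing one position and black worlds all sharing one position, linked by the $\oplus_{2^n}1$ relation on the appropriate axis; and each inner world is illuminated by one lantern of each of the four types. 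If positions were not unique, the total number of distinct positions occurring among inner worlds would be strictly less than $2^n\cdot 2^n$, yet the four families of lanterns force the position relation to be ``chessboard-successor-closed'' in both axes starting from $\tuple{0,0}$ (present by $\phiFirstCell$), so the set of realised positions is closed under $\oplus_{2^n}1$ on both coordinates and hence is all of $\{0,\dots,2^n-1\}^2$; combined with $|Q_{\str A}(w)| = 2^n\cdot 2^n$ this forces a bijection between clique elements and positions. I would spell this out as: let $P \subseteq \{0,\dots,2^n-1\}^2$ be the set of positions realised by inner worlds; $\tuple{0,0}\in P$ by $\phiFirstCell$; for $\tuple{a,b}\in P$ realised by a white (resp.\ black) world $v$, the $\vbw$ (resp.\ $\vwb$) lantern illuminating $v$ also illuminates a world of the opposite colour realising $\tuple{a, b\oplus_{2^n}1}$, via $\phiAddOneVBW$/$\phiEqualH$ (resp.\ their primed analogues), so $\tuple{a,b\oplus_{2^n}1}\in P$; symmetrically $\tuple{a\oplus_{2^n}1,b}\in P$ using $\hbw/\hwb$; hence $P = \{0,\dots,2^n-1\}^2$, so $|P| = 2^n\cdot 2^n = |Q_{\str A}(w)|$, forcing the position map to be injective on $Q_{\str A}(w)$. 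The fiddly part — and the main obstacle — is checking that the binary-addition encoding in $\phiAddOneVBW$ together with $\phiPseudoUniqueness$ genuinely yields, at a lantern, a single well-defined pair of positions $(V_b, V_w)$ with $V_w = V_b \oplus_{2^n}1$ and matching $H$-positions, rather than something degenerate when the lantern illuminates worlds of only one colour; I would dispose of this by noting that $\phiSuccessors$ guarantees each inner world has a $\heartsuit$-predecessor lantern, so every relevant colour is actually illuminated, and then verifying the carry-propagation disjunction in $\phiAddOneVBW$ is satisfied exactly when the two positions differ by one modulo $2^n$ — a routine but unavoidable calculation.
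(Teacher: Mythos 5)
Your proposal is correct and is precisely the argument the paper sketches in its ``Outline of the proof'' and then declares routine: size via $\phiTorusSize$ counting the clique (successors of an inner world are exactly the clique members, since lanterns have no predecessors), and injectivity of the position map obtained by showing the set of realised positions contains $(0,0)$ and is closed under $\oplus_{2^n}1$ on both axes via the four lantern types, hence has cardinality $2^n\cdot 2^n$ matching the clique size. The only nit is a harmless label swap (for a white world the $\vwb$ lantern, not $\vbw$, supplies the vertical $\oplus_{2^n}1$-successor; $\vbw$ supplies the predecessor), which does not affect the closure argument since both lantern types are required by $\phiSuccessors$.
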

Having defined a proper torus, it is quite easy to encode a solution
to the torus tiling problem~$\cP$ with the initial condition~$c$. Each
inner node will be labelled with a single tile from~$\cT$ and using
appropriate lanterns we enforce that any two neighbouring worlds do not
violate the tiling rules~$\cH$ and~$\cV$. This is the purpose of the formula 
$\phiTiling$ defined below:
$$
\phiTiling \deff \U( \phiTile \wedge \phiInitCond \wedge \phiTilingRules )
$$
The first conjunct specifies that each inner world is labeled 
with exactly one tile. 
$$
\phiTile \deff \inn \rightarrow  (\bigvee_{t \in \cT} t) \wedge 
\bigwedge_{t,t' \in \cT, t \neq t'} (\neg t \vee \neg t')
$$
The second conjunct distributes the initial tiling among torus cells. To define
it we use handy macros~$V{=}k$ and~$H{=}k$, with their intuitive meaning that 
the binary representation of the number~$k$ is written on atomic letters 
$v_0, v_1, \ldots, v_{n-1}$ and~$h_0, h_1, \ldots, h_{n-1}$, respectively. Thus:
$$
\phiInitCond \deff \bigwedge_{i=0}^{n-1} (\inn \wedge H{=}0 \wedge V{=}i) 
\rightarrow t_i
$$
The last formula says that any two successive worlds do not violate tiling rules.
Since any two neighbours are connected via a lantern, we describe the formula from the point
of view of such lantern.

\begin{align*}
\phiTilingRules \deff\; & 
( \lan \wedge \vbw \rightarrow \bigvee_{(t,t') \in \cV} (\Diamond(\bl \wedge t) \wedge \Diamond(\wht \wedge t')) ) \; \wedge \\
& ( \lan \wedge \vwb \rightarrow \bigvee_{(t',t) \in \cV} (\Diamond(\wht \wedge t) \wedge \Diamond(\bl \wedge t')) ) \; \wedge \\
& ( \lan \wedge \hbw \rightarrow \bigvee_{(t,t') \in \cH} (\Diamond(\bl \wedge t) \wedge \Diamond(\wht \wedge t')) ) \; \wedge \\
& ( \lan \wedge \hwb \rightarrow \bigvee_{(t',t) \in \cH} (\Diamond(\wht \wedge t) \wedge \Diamond(\bl \wedge t')) )
\end{align*}

In the following lemma we claim that the presented reduction is correct. Its proof is once again 
routine and follows directly from correctness of all presented formulas.
\begin{lemma}\label{lem:phiTorusCorrectFinal} 
Let~$\phiReduction \deff \phiTorus \wedge \phiTiling$.
The torus tiling problem instance~$\tuple{\cP,c}$ has a solution if and only if 
the formula is~$\phiReduction$ locally-satisfiable.
\end{lemma}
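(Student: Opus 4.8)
The plan is to prove Lemma~\ref{lem:phiTorusCorrectFinal} by establishing both implications separately, reusing Lemma~\ref{lem:phiTorusCorrect} for the harder direction.

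\textbf{From a solution to a model.} Suppose $\tau$ is a solution for $\tuple{\cP,c}$. I would build an explicit $R$-connected Euclidean structure $\str{A}$ as follows. The clique $Q_{\str A}$ consists of exactly $2^n\cdot 2^n$ inner worlds, one world $w_{i,j}$ for each pair $\tuple{i,j}$, all pairwise $R$-related (including reflexive loops), so that each inner world indeed has exactly $2^n\cdot 2^n$ successors. The world $w_{i,j}$ is labelled with the binary encodings $H{=}i$, $V{=}j$, with $\inn$, with $\wht$ iff $i\oplus_2 j = 0$ (i.e., $v_0\iff h_0$), and with the tile $\tau(i,j)$. For the lanterns, for each pair of horizontally adjacent cells $\tuple{i,j}$ and $\tuple{i\oplus_{2^n}1,j}$ I add one lantern illuminating exactly those two worlds, labelled $\hbw$ or $\hwb$ according to the colour of $w_{i,j}$; similarly for vertical neighbours with $\vbw$/$\vwb$. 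No lantern has any $R$-predecessor. One then checks that $\str{A}$ is Euclidean (predecessors of any world in $Q_{\str A}$ are the whole clique plus the illuminating lanterns, and one verifies (\textbf{5}) case-wise), that it is connected, and that $\str{A},w_{0,0}\models\phiReduction$: $\phiFirstCell$ holds at $w_{0,0}$; $\phiPartition$, $\phiChessboard$, $\phiTorusSize$, $\phiSuccessors$ hold everywhere because every inner world is illuminated by exactly the four kinds of lanterns it needs; $\phiTile$, $\phiInitCond$, $\phiTilingRules$ hold because $\tau$ respects $\cH$, $\cV$ and the initial condition. Here $\U$ is interpreted via Lemma~\ref{lem:UniversalModalityDefinableEucl}.

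\textbf{From a model to a solution.} Conversely, assume $\str{A},w\models\phiReduction$ for an $R$-connected Euclidean $\str{A}$. By Lemma~\ref{lem:phiTorusCorrect}, the clique $Q_{\str A}(w)$ has exactly $2^n\cdot 2^n$ inner worlds, each carrying a distinct position $\tuple{H,V}$; since there are exactly $2^n\cdot 2^n$ possible positions, there is exactly one inner world for each position, so we may unambiguously write $w_{i,j}$ for the inner world with $H{=}i$, $V{=}j$. Define $\tau(i,j)$ to be the unique tile $t\in\cT$ with $\str{A},w_{i,j}\models t$; this is well-defined by $\phiTile$. The initial condition $\tau(0,i)=t_i$ for $i<n$ follows directly from $\phiInitCond$ (using $\phiFirstCell$ to locate $w_{0,0}$ and the chessboard colouring is immaterial here). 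For the matching conditions, fix $\tuple{i,j}$ and consider its horizontal successor $\tuple{i\oplus_{2^n}1,j}$. By $\phiSuccessors$, $w_{i,j}$ has a lantern predecessor $\ell$ satisfying $\hbw$ (if $w_{i,j}$ is black) or $\hwb$ (if white), say $\hbw\wedge\phihbw$. From $\phiPseudoUniqueness$ all black worlds $\ell$ illuminates share one position and all white worlds it illuminates share one position; from $\phiEqualV$ they share their $V$-coordinate; and from $\phiAddOneHBW$ the white $H$-coordinate is the black one plus $1$ mod $2^n$. Hence the unique white world illuminated by $\ell$ is exactly $w_{i\oplus_{2^n}1,j}$ (it has the right position, and positions are unique in the clique), and $\phiTilingRules$ forces $(\tau(i,j),\tau(i\oplus_{2^n}1,j))\in\cH$. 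The vertical case and the cases where $w_{i,j}$ is white are symmetric, using $\vbw$/$\vwb$/$\hwb$ and $\phiEqualH$. This gives that $\tau$ is a solution.

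\textbf{Main obstacle.} The routine-but-delicate part is the bookkeeping in the model-to-solution direction: one must argue that the lantern guaranteed by $\phiSuccessors$ genuinely connects $w_{i,j}$ to the \emph{correct} neighbour rather than to some other world of matching position — this is where uniqueness of positions within the clique (Lemma~\ref{lem:phiTorusCorrect}) is essential, since $\phiPseudoUniqueness$ alone only gives that a lantern's illuminated black (resp.\ white) worlds agree on position, not that such a world is unique. Once it is observed that ``same position'' implies ``same world'' inside $Q_{\str A}(w)$, everything falls into place. The Euclideanness check in the first direction is also somewhat tedious but entirely mechanical.
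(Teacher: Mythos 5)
Your proof is correct and follows exactly the route the paper intends (the paper dismisses this lemma as routine, relying on Lemma~\ref{lem:phiTorusCorrect} for the model-to-solution direction just as you do), with an explicit clique-plus-lanterns model in one direction and uniqueness of positions inside the clique in the other. The only step deserving one extra line is your claim that the lantern witnessing $\RevDiamond(\lan\wedge\phihbw)$ also satisfies the propositional letter $\hbw$ needed to trigger $\phiTilingRules$: the second conjunct of $\phiSuccessors$ only delivers the formula $\phihbw$, so you must additionally observe that the four formulas $\varphi_{\heartsuit}$ are pairwise incompatible at any lantern (e.g.\ $\phiEqualV$ from $\phihbw$ contradicts $\phiAddOneVBW$ from $\phivbw$), whence the disjunct of the first conjunct of $\phiSuccessors$ realized at that lantern can only be $\hbw\wedge\phihbw$.
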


Note that our intended models are serial. Thus, the result holds also
for the logic~$\MLDFive$. This gives the following theorem.

\begin{theorem}\label{thm:theorem5}
The local and global satisfiability problems for the 
logics~$\MLKFive (\Diamond_{\ge}, \RevDiamond)$ 
and~$\MLDFive (\Diamond_{\ge}, \RevDiamond)$ 
are~\NExpTime-hard.
\end{theorem}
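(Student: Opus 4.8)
The plan is to prove Theorem~\ref{thm:theorem5} by the polynomial-time reduction from a fixed \NExpTime-complete torus tiling problem that has already been set up: given $\tuple{\cP,c}$ we produce $\phiReduction = \phiTorus \wedge \phiTiling$ and invoke Lemma~\ref{lem:phiTorusCorrectFinal}, which asserts that $\phiReduction$ is locally satisfiable over a Euclidean frame iff $\tuple{\cP,c}$ has a solution. Since $\phiReduction$ uses only $\Diamond_\ge$ (for $\phiTorusSize$, i.e.\ the $\Diamond_{=2^n\cdot2^n}$ macro) and the basic $\RevDiamond$ (inside $\U$ and $\phiSuccessors$), and its length is polynomial in $|\cP|+|c|$ because the bound $2^n\cdot2^n$ is written in binary, this establishes \NExpTime-hardness of local satisfiability for $\MLKFive(\Diamond_\ge,\RevDiamond)$. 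For global satisfiability one applies Lemma~\ref{lem:locisglobforeuclogics} (or directly Lemma~\ref{lem:UniversalModalityDefinableEucl}): since the frames in question are Euclidean and the language contains both $\Diamond$ and $\RevDiamond$, local and global satisfiability are inter-reducible in \LogSpace, so the lower bound transfers. Finally, as the models constructed in the ``if'' direction of Lemma~\ref{lem:phiTorusCorrectFinal} contain at least one inner world and hence are serial (every inner world is reflexive, by Lemma~\ref{lem:PropOfEuclStr}), the same reduction witnesses \NExpTime-hardness for $\MLDFive(\Diamond_\ge,\RevDiamond)$ as well.

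The first thing I would do is verify the two easy directions of the reduction. For the ``only if'' direction, given a solution $\tau$ of $\tuple{\cP,c}$ I would build a Euclidean structure: take $2^n\cdot2^n$ inner worlds forming a single $R$-clique, one per cell $(i,j)$, label it with the binary encodings of $i$ on $h_0,\dots,h_{n-1}$ and $j$ on $v_0,\dots,v_{n-1}$, with $\wht$/$\bl$ according to the parity $v_0\iff h_0$, and with the tile $\tau(i,j)$; then add, for each inner world and each $\heartsuit\in\{\vbw,\hbw,\vwb,\hwb\}$, a lantern of the appropriate type illuminating exactly that world together with its matching neighbour on the relevant axis. One checks $\phiFirstCell$ at the $(0,0)$-world, $\phiPartition$, $\phiChessboard$, $\phiTorusSize$ (the clique has exactly $2^n\cdot2^n$ elements and each is reflexive, so each inner world sees exactly that many successors), $\phiSuccessors$, and then $\phiTile$, $\phiInitCond$, $\phiTilingRules$ from the fact that $\tau$ respects $\cH,\cV$ and $c$. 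For the ``if'' direction, suppose $\str{A},w\models\phiReduction$. By Lemma~\ref{lem:phiTorusCorrect} the clique $Q_{\str{A}}(w)$ has exactly $2^n\cdot2^n$ inner worlds carrying pairwise distinct positions, hence every position in $\{0,\dots,2^n-1\}^2$ occurs exactly once; define $\tau(i,j)$ to be the unique tile holding at the world with position $(i,j)$ (well-defined by $\phiTile$). Then $\phiInitCond$ gives $\tau(0,i)=t_i$, and the four conjuncts of $\phiTilingRules$, read together with $\phiSuccessors$ and the $\phi_\heartsuit$ formulas, force each pair of horizontally/vertically adjacent cells (linked through the appropriate lantern) to satisfy $\cH$ resp.\ $\cV$, so $\tau$ is a solution.

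The main obstacle, and the step I would spend the most care on, is the correctness of $\phiSuccessors$ together with the $\phi_\heartsuit$ family --- i.e.\ showing that the ``from the vantage point of a lantern'' encoding genuinely forces, for each inner world $v$ with position $\tuple{H,V}$, the existence in the clique of the worlds with positions $\tuple{H\oplus_{2^n}1,V}$ and $\tuple{H,V\oplus_{2^n}1}$ carrying the matching tiles. The delicate points are: (a) $\phiPseudoUniqueness$ only guarantees that all $\wht$-worlds (resp.\ all $\bl$-worlds) illuminated by a given lantern share a position --- genuine uniqueness of positions across the whole clique is supplied separately by $\phiTorusSize$ via Lemma~\ref{lem:phiTorusCorrect}, so the argument must use that lemma, not $\phi_\heartsuit$ alone; (b) the chessboard colouring is exactly what makes the case split into the four lantern types exhaustive and correct, since a cell and its $H$- (or $V$-) neighbour always have opposite colours, so a $\vbw$-lantern illuminating $v$ necessarily illuminates the $V$-successor of $v$ and no other white cell with a conflicting position; (c) the binary-increment formula $\phiAddOneVBW$ must be checked to implement $V_w=V_b\oplus_{2^n}1$ faithfully in both the carry-out case ($V_b=2^n{-}1$) and the generic case, reading $\Diamond(c\wedge p_i)$ as ``the $p_i$-bit of the $c$-cell illuminated by this lantern is set''. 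Once these three points are nailed down, the remaining verifications ($\phiFirstCell$, $\phiPartition$, $\phiChessboard$, $\phiInitCond$, $\phiTile$, $\phiTilingRules$, and the polynomial size bound) are routine, and Lemmas~\ref{lem:phiTorusCorrect} and~\ref{lem:phiTorusCorrectFinal} fall out.
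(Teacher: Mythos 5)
Your proposal is correct and follows essentially the same route as the paper: the reduction from the \NExpTime-complete exponential torus tiling problem via $\phiReduction$, correctness via Lemmas~\ref{lem:phiTorusCorrect} and~\ref{lem:phiTorusCorrectFinal}, transfer to global satisfiability via Lemma~\ref{lem:locisglobforeuclogics}, and seriality of the intended models for the $\MLDFive$ case. (Only a cosmetic slip: the models witnessing seriality arise in the tiling-solution-to-model direction, which you label ``only if'' in your second paragraph but ``if'' in the first.)
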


Together with Theorem~\ref{thm:ModEuclNExp} this gives:
\begin{theorem} \label{thm:EuclNExp}
The local and global satisfiability problems for the 
logics~$\MLKFive (\Diamond_{\ge}, \RevDiamond)$,~$\MLKFive (\Diamond_{\ge}, \RevDiamond_{\ge})$
and for logics~$\MLDFive (\Diamond_{\ge}, \RevDiamond)$,~$\MLDFive (\Diamond_{\ge}, \RevDiamond_{\ge})$ are \NExpTime-complete.
\end{theorem}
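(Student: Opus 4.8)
The plan is simply to assemble the claimed completeness result from the two preceding theorems, using the evident syntactic inclusions among the four languages involved and the inter-reducibility of local and global satisfiability over Euclidean frames. First I would dispatch the upper bound. Theorem~\ref{thm:ModEuclNExp} already puts the local and global satisfiability problems for $\MLKFive(\Diamond_{\ge}, \RevDiamond_{\ge})$ and $\MLDFive(\Diamond_{\ge}, \RevDiamond_{\ge})$ in \NExpTime. Since every formula of the restricted language $(\Diamond_{\ge}, \RevDiamond)$ is, by definition, also a formula of $(\Diamond_{\ge}, \RevDiamond_{\ge})$ — the restriction merely forbids the graded converse modalities $\RevDiamond_{\ge n}$ with $n \ge 2$ — the \NExpTime algorithm applies verbatim, so $\MLKFive(\Diamond_{\ge}, \RevDiamond)$ and $\MLDFive(\Diamond_{\ge}, \RevDiamond)$ inherit \NExpTime membership for both satisfiability notions.

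For the lower bound I would invoke Theorem~\ref{thm:theorem5}, which establishes \NExpTime-hardness of both local and global satisfiability already for the weaker logics $\MLKFive(\Diamond_{\ge}, \RevDiamond)$ and $\MLDFive(\Diamond_{\ge}, \RevDiamond)$ (recall that the reduction there uses only basic converse modalities, and that its intended models are serial, so the $\MLDFive$ case is covered). Because these logics are syntactic fragments of $\MLKFive(\Diamond_{\ge}, \RevDiamond_{\ge})$ and $\MLDFive(\Diamond_{\ge}, \RevDiamond_{\ge})$ over the very same classes of frames, the identical reduction witnesses \NExpTime-hardness of the two larger logics as well. Combining the two directions yields \NExpTime-completeness of all four logics, for both the local and the global satisfiability problem.

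There is no genuine obstacle here; the only point that warrants a sentence of care is to keep the directions of transfer straight — upper bounds propagate downward to fragments, lower bounds propagate upward to extensions — and to note that Lemma~\ref{lem:locisglobforeuclogics} is what lets us carry both the local and the global variant along in each of the four cases. With that bookkeeping in place, the theorem follows immediately from Theorems~\ref{thm:ModEuclNExp} and~\ref{thm:theorem5}.
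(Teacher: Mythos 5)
Your proposal is correct and matches the paper exactly: the paper derives Theorem~\ref{thm:EuclNExp} immediately by combining the \NExpTime{} upper bound of Theorem~\ref{thm:ModEuclNExp} (which applies to the larger languages and hence to their fragments) with the \NExpTime-hardness of Theorem~\ref{thm:theorem5} (proved for the restricted languages and hence inherited by their extensions). Your additional remarks about the direction of transfer and the role of Lemma~\ref{lem:locisglobforeuclogics} in covering both the local and global variants are exactly the right bookkeeping.
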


\subsection{Transitive Euclidean frames}
It turns out that the logics of  transitive  Euclidean frames have lower
computational complexity. This is due to the following lemma.

\begin{lemma}
  Let~$\str A$ be an~$R$-connected structure over a transitive Euclidean
  frame~$\tuple{W, R}$. Then, every world~$l \in L_{\str A}$ illuminates~$Q_{\str A}$.\label{lem:PropOfTransEuclStr}
\end{lemma}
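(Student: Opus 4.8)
The plan is to show that for any lantern $l$ and any inner world $w \in Q_{\str A}$, we have $\tuple{l, w} \in R$. The proof should go through the structure of $R$-connected transitive Euclidean frames established earlier. By Lemma~\ref{lem:PropOfEuclStr}, $Q_{\str A}$ is an $R$-clique, all its members are reflexive and mutually $R$-equivalent, and every world outside $Q_{\str A}$ is a lantern.

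First I would observe that since $l$ is a lantern, $l$ has no predecessors, hence $l \notin Q_{\str A}$ (as every member of $Q_{\str A}$ is reflexive, hence has itself as a predecessor). Since $\str A$ is $R$-connected, there is an $(R \cup R^{-1})$-path from $l$ to any chosen $w \in Q_{\str A}$. As $l$ has no incoming edges, the first step of this path must be an outgoing $R$-edge $\tuple{l, x} \in R$ for some $x$. The key point is then to push this forward: using the Euclidean property with $l$ as $x$ in axiom~(\textbf{5}), any two $R$-successors of $l$ are related to each other, and using transitivity plus the clique structure one argues that $x$ and $w$ end up in a common region so that $\tuple{l, w}\in R$. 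More precisely: if $x \in Q_{\str A}$, then since $R$ is universal on $Q_{\str A}$ we have $\tuple{x, w} \in R$, and transitivity of $R$ gives $\tuple{l, w} \in R$. So it remains to handle the case $x \in L_{\str A}$, i.e. $x$ is itself a lantern — but a lantern has no predecessors, contradicting $\tuple{l, x} \in R$. Hence $x \in Q_{\str A}$ necessarily, and we are done.

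Actually, the cleanest argument: $l$ being a lantern with $R \neq \emptyset$ along its component forces $\tuple{l, x} \in R$ for some $x$, and $x$ cannot be a lantern (lanterns have no predecessors), so $x \in Q_{\str A}$; then for every $w \in Q_{\str A}$, Lemma~\ref{lem:PropOfEuclStr} gives $\tuple{x, w} \in R$, and transitivity yields $\tuple{l, w} \in R$. Thus $l$ illuminates all of $Q_{\str A}$. One should also dispatch the degenerate case where $Q_{\str A} = \emptyset$ (then there is nothing to illuminate, or the structure is a single lantern and the statement is vacuous) and the case $R = \emptyset$ (the structure is a singleton by connectedness, with no lanterns in the relevant sense, so again vacuous).

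The main obstacle, if any, is simply making sure the existence of the initial edge $\tuple{l,x} \in R$ is properly justified from $R$-connectedness together with the assumption that there is at least one inner world to illuminate — i.e. handling the trivial boundary cases carefully — since the substantive step (lanterns point into $Q_{\str A}$ and transitivity propagates this across the whole clique) is immediate from Lemma~\ref{lem:PropOfEuclStr} and transitivity. I do not expect any genuine difficulty beyond this bookkeeping.
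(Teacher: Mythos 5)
Your proof is correct and follows essentially the same route as the paper: find an $R$-successor $q'$ of $l$ inside $Q_{\str A}$ (using $R$-connectedness and the fact that lanterns have no predecessors), then use the clique property from Lemma~\ref{lem:PropOfEuclStr} and transitivity to reach every world of $Q_{\str A}$. The extra bookkeeping you add (why the first edge out of $l$ must be outgoing and why its target cannot be a lantern) just makes explicit what the paper asserts in one sentence.
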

\begin{proof}
  Take any world~$q\in Q_{\str A}$. We will show that a lantern~$l$
  illuminates~$q$. Since~$l$ has no~$R$-predecessor and~$\str A$ is
  $R$-connected, there exists a world~$q'\in Q_{\str A}$ such
  that~$\tuple{l, q'} \in R$. By Lemma~\ref{lem:PropOfEuclStr} set
  $Q_{\str A}$ is an~$R$-clique, and thus we
  have~$\tuple{q', q} \in R$. By transitivity we conclude
  that~$\tuple{l, q} \in R$. Thus a lantern~$l$
  illuminates~$Q_{\str A}$.
\end{proof}

A first-order formula stating that all non-lanterns are~$R$-successors
of all lanterns requires only two variables. Thus, as an immediate
conclusion from Lemma~\ref{lem:PropOfTransEuclStr}, we can extend
the translation developed in the previous section to handle the logic
$\MLKFourFive(\Diamond_{\geq}, \RevDiamond_{\geq})$, and obtain 
a \NExpTime{}-upper bound for the satisfiability problem. In fact, the shape
of transitive Euclidean structures is so simple that two-variable
logic is no longer necessary. Below we translate
$\MLKFourFive(\Diamond_{\geq}, \RevDiamond_{\geq})$ and~$\MLDFourFive(\Diamond_{\geq}, \RevDiamond_{\geq})$ to one-variable
logic with counting~\COne, which is \NP-complete~\cite{PHartmann08}.

\begin{theorem}\label{thm:ModTransEuclNP}
  The local and the global satisfiability problems for Transitive Euclidean Modal Logics~$\MLKFourFive(\Diamond_{\geq}, \RevDiamond_{\geq})$ and~$\MLDFourFive(\Diamond_{\geq}, \RevDiamond_{\geq})$ are in~\NP.
\end{theorem}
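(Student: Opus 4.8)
The plan is to reduce the satisfiability problems for $\MLKFourFive(\Diamond_{\geq}, \RevDiamond_{\geq})$ and $\MLDFourFive(\Diamond_{\geq}, \RevDiamond_{\geq})$ to satisfiability of one-variable first-order logic with counting quantifiers $\COne$, which is $\NP$-complete by~\cite{PHartmann08}. As usual we restrict attention to $R$-connected structures, so that Lemma~\ref{lem:PropOfEuclStr} and Lemma~\ref{lem:PropOfTransEuclStr} apply. Combining these two lemmas, a connected transitive Euclidean structure has an extremely simple shape: it consists of an $R$-clique $Q_{\str A}$ of reflexive worlds, all pairwise $R$-related, together with a set $L_{\str A}$ of lanterns, each of which has no $R$-predecessor and is $R$-related to every element of $Q_{\str A}$ (and possibly to other lanterns as well, by transitivity, if a lantern points to another lantern --- but a lantern has no predecessors, so in fact the only $R$-edges out of a lantern go into $Q_{\str A}$). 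Thus the relevant information about a world $w$ reduces to: (a) whether $w$ is a lantern or an inner world, and (b) the $1$-type (set of propositional variables together with relevant subformulas) realized at $w$.

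Concretely, I would define a translation sending a modal formula $\varphi$ to a $\COne$ formula over an expanded signature with a fresh unary predicate $\lan(\cdot)$ marking lanterns. The key observation is how each modality acts on this flat structure. If $w$ is an inner world, then the successors of $w$ are exactly all inner worlds (since $Q_{\str A}$ is a clique and $w$ is reflexive), so $\Diamond_{\geq k}\psi$ at $w$ says ``there are at least $k$ inner worlds satisfying $\psi$''; similarly $\RevDiamond_{\geq k}\psi$ at $w$ asks for at least $k$ predecessors of $w$, which are all inner worlds together with all lanterns, so it says ``there are at least $k$ worlds (inner or lantern) satisfying $\psi$''. If $w$ is a lantern, it has no predecessors, so $\RevDiamond_{\geq k}\psi$ is false for $k\geq 1$ and true for $k=0$; and its successors are exactly the inner worlds, so $\Diamond_{\geq k}\psi$ at $w$ again says ``at least $k$ inner worlds satisfy $\psi$''. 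The crucial point is that in every case the truth value of a modal subformula at $w$ depends only on the kind of $w$ (lantern vs.\ inner) and on \emph{global} counts of worlds of various kinds satisfying the immediate subformulas --- there is no genuine binary quantification left, so a single variable suffices. I would therefore introduce, for each modal subformula $\vartheta$ of $\varphi$, a unary predicate $p_\vartheta$, write $\COne$ axioms asserting that $p_\vartheta$ holds exactly at the worlds where $\vartheta$ holds (phrased using counting quantifiers ranging over all worlds, guarded by $\lan$ or $\neg\lan$ as appropriate), add the axiom that lanterns have no $R$-predecessors and inner worlds are mutually $R$-related with each other and reachable from lanterns --- all expressible with one variable once $R$ has been eliminated in favour of the flat description --- and finally conjoin $\exists x\, p_\varphi(x)$ for local satisfiability, or $\forall x\, p_\varphi(x)$ for global satisfiability (equivalently one may route through the universal modality and Lemma~\ref{lem:locisglobforeuclogics}). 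For $\MLDFourFive$ one adds the conjunct $\exists x\,\neg\lan(x)$, since by Lemma~\ref{lem:PropOfEuclStr} a transitive Euclidean frame is serial iff it contains at least one (necessarily reflexive) inner world. Since the numbers in the graded modalities are written in binary and carry over unchanged into the counting quantifiers of $\COne$, and the translation blows up only polynomially, this yields an $\NP$ procedure.

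I would then verify by a straightforward structural induction on $\varphi$ that the translation is correct --- that a connected transitive Euclidean Kripke structure satisfying $\varphi$ (locally or globally) expands to a model of the $\COne$ formula and conversely any $\COne$ model induces such a Kripke structure (reading off $R$ from $\lan$ in the obvious way: every inner world is $R$-related to every inner world, every lantern is $R$-related to every inner world and nothing else, all worlds not marked $\lan$ being forced to be reflexive). The only slightly delicate point, as in the proof of Theorem~\ref{thm:ModEuclNExp}, is making sure the back-direction produces a genuinely transitive \emph{and} Euclidean \emph{and} $R$-connected frame and that the counting semantics matches; but given the explicit description of the shape of such frames from Lemmas~\ref{lem:PropOfEuclStr} and~\ref{lem:PropOfTransEuclStr}, this is routine. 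The main conceptual obstacle --- and the reason the complexity drops from $\NExpTime$ to $\NP$ compared with the merely-Euclidean case of Theorem~\ref{thm:ModEuclNExp} --- is precisely that adding transitivity collapses the ``which lantern illuminates which subset'' bookkeeping that forced genuine two-variable reasoning in the $\MLKFive$ reduction: every lantern now illuminates the \emph{entire} clique, so the successor and predecessor sets of a world are determined solely by the lantern/inner dichotomy, and one variable is enough.
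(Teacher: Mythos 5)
Your proposal is correct and follows essentially the same route as the paper: both exploit Lemmas~\ref{lem:PropOfEuclStr} and~\ref{lem:PropOfTransEuclStr} to observe that in a connected transitive Euclidean structure the successor set of any world is the inner clique and the predecessor set of an inner world is the whole universe (lanterns having none), and both then reduce to the one-variable counting logic $\mathrm{C}^1$, which is \NP-complete by the cited result, adding $\exists x\,\neg\lan(x)$ for seriality. The only cosmetic difference is that you rename modal subformulas with fresh predicates $p_\vartheta$ and axiomatize them, whereas the paper gives a direct compositional translation $\tr$; the content is the same.
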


\begin{proof}

  The proof is similar in spirit to the proof of Lemma~3
  in~\cite{KazakovP09}. Let~$\lan(\cdot)$ be a new unary
  predicate. We first define translation function~$\mathbf{tr}$ that, given
  a~$\MLKFourFive(\Diamond_{\geq}, \RevDiamond_{\geq})$ formula~$\varphi$, produces an equisatisfiable \COne formula~$\tr(\varphi)$. We assume that all counting subscripts~$\varphi$
  are non-zero.
  \begin{gather}
    \tr(p) = p(x) \text{ for all $p \in \Pi$ } \\
    \tr(\varphi \wedge \psi ) = \mathbf{tr}(\varphi) \wedge
    \tr(\psi)  \text{ similarly for~$\neg$,~$\vee$, etc.}\\
    \tr(\Diamond_{\geq n}\varphi) = \exists_{\geq n}.x(\neg\lan(x) \wedge \tr(\varphi))\\
    \tr(\Diamond_{\leq n}\varphi) = \exists_{\leq n}.x(\neg\lan(x) \wedge \tr(\varphi))\\
    \tr(\RevDiamond_{\geq n}\varphi) = \neg\lan(x) \wedge \exists_{\geq n}.x(\tr(\varphi)) \\
    \tr(\RevDiamond_{\leq n}\varphi) = \lan(x) \vee \exists_{\leq n}.x(\tr(\varphi))
  \end{gather}  

  Observe that~$\tr(\varphi)$ is linear in the size of~$\varphi$. 
  Let~$\str B$ be a Kripke structure over a transitive Euclidean frame. 
  Expand~$\str B$ to a structure~$\str B^{+}$ by setting an interpretation of a symbol~$\lan$ to be~$\lan^{\str B^{+}} =\{ w \in \str B \mid w \in L_{\str B}\}$. 
  Taking into account Lemma~\ref{lem:PropOfEuclStr} and
  Lemma~\ref{lem:PropOfTransEuclStr}, a structural induction on~$\varphi$ easily establishes the following condition
 $$ \str B, w_0 \models \varphi\text{ if and only if } \str B^{+} \models 
  \tr(\varphi)[w_0 /x] \text{ for every world~$w_0$}.$$
  Thus, a~$\MLKFourFive(\Diamond_{\geq}, \RevDiamond_{\geq})$ formula~$\varphi$ is locally satisfiable if and only if \COne formula~$\exists_{\geq 1}.{x}(\tr(x))$ is satisfiable, yielding an \NP{}
  algorithm for~$\MLKFourFive(\Diamond_{\geq}, \RevDiamond_{\geq})$ satisfiability. The algorithm for~$\MLDFourFive(\Diamond_{\geq}, \RevDiamond_{\geq})$
is obtained by just a slight update to the one given above. It suffices to supplement the \COne formula defined in the case of~$\MLKFourFive$ with the conjunct~$\exists{x}.(\neg\lan(x) )$ 
  expressing seriality (cf.~the proof of Theorem~\ref{thm:ModEuclNExp}). 
\end{proof}



\section{Transitive frames: counting successors, accessing predecessors} \label{s:transitive}

In this section, we consider the language~$\Diamond_{\ge}, \RevDiamond$, that is the modal language in which we can count the successors,
but cannot count the predecessors, having at our disposal only the basic converse modality. Over all classes of frames involving neither 
transitivity nor Euclideanness local satisfiability is \PSpace-complete and global satisfiability is \ExpTime-complete, as the tight lower and upper bounds can be transferred from, resp., the one-way non-graded language~$\Diamond$ and  the full two-way graded language. Over the classes of Euclidean frames \MLKFive{} and \MLDFive{}, 
both problems are \NExpTime-complete, as proved in Theorem~\ref{thm:EuclNExp}.  Over the classes of transitive Euclidean frames
\MLKBFourFive, \MLKFourFive, \MLDFourFive{}, and  \MLSFive{} the problems are \NP-complete, as the lower bound transfers from the language
$\Diamond$, and the upper bound from the full two-way graded language (Theorem~\ref{thm:ModTransEuclNP}). 
So, over all the above-discussed classes of frames the complexities of~$\Diamond_{\ge}, \RevDiamond$ and~$\Diamond_{\ge}, \RevDiamond_{\ge}$ coincide.

What is left are the classes of transitive frames~$\MLKFour$,~$\MLDFour$, and~$\MLSFour$.
Recall that, in contrast to their one-way counterparts, the two-way graded logics of transitive frames~$\MLKFour (\Diamond_{\ge}, \RevDiamond_{\ge})$, $\MLDFour (\Diamond_{\ge}, \RevDiamond_{\ge})$, and~$\MLSFour (\Diamond_{\ge}, \RevDiamond_{\ge})$  are undecidable~\cite{Zolin17}. 
In \cite{Zolin17} the  question is asked if  the decidability is regained 
when the language is restricted to~$\Diamond_{\ge}, \RevDiamond$. Here we answer this question, demonstrating
the local and global finite model property for the obtained logics; this implies that their satisfiability problems are indeed decidable.

In Lemma~5.5 from~\cite{Zolin17},  it is shown that over the class of transitive frames the global satisfiability and local satisfiability
problems for the considered language are polynomially equivalent. Moreover, they are polynomially equivalent to the \emph{combined} satisfiability 
problem, asking if for a given pair of formulas~$\phi, \phi'$ there exists a structure in which~$\phi$ is true at every world and~$\phi'$ is true at some world.
The remark following the proof of that lemma says that it holds also
for reflexive transitive frames. The same can be easily shown also for serial transitive frames. We thus have:
\begin{lemma}\label{l:locglobequiv}
For each of the logics~$\MLKFour (\Diamond_{\ge}, \RevDiamond)$,~$\MLDFour (\Diamond_{\ge}, \RevDiamond)$, and~$\MLSFour (\Diamond_{\ge}, \RevDiamond)$
their global, local and combined satisfiability problems 
are polynomially equivalent.
\end{lemma}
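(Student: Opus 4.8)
The plan is to show that the three problems are pairwise interreducible in polynomial time. Two of the six reductions come for free: local satisfiability of~$\phi'$ is the instance of combined satisfiability whose first component is~$\top$, and global satisfiability of~$\phi$ is the instance whose second component is~$\top$ (structures being non-empty). So I would only have to reduce combined satisfiability to local satisfiability and combined satisfiability to global satisfiability, obtaining the remaining four reductions by composition; throughout, one may restrict to connected structures. Up to the adaptations mentioned at the end, this is Lemma~5.5 of~\cite{Zolin17} together with the remark following it.

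For the heart of the argument I would use a \emph{spy-point} construction together with a relativisation. Fix two fresh propositional variables:~$s$, marking a spy world, and~$d$, marking the ``real'' domain. For a formula~$\psi$ let~$\psi^{d}$ be obtained by replacing every~$\Diamond_{\ge n}\theta$ by~$\Diamond_{\ge n}(d \wedge \theta^{d})$ and every~$\RevDiamond\theta$ by~$\RevDiamond(d \wedge \theta^{d})$. A routine induction on~$\psi$ gives the key property: for every structure~$\str{A}$ and every world~$v$ with~$v \in V(d)$,
$$
\str{A},\, v \models \psi^{d} \quad\text{iff}\quad \str{A}\restriction V(d),\, v \models \psi
$$
for every subformula~$\psi$ of~$\phi$ or of~$\phi'$; here it is essential that \emph{both} the forward and the backward modalities are relativised, so that the substructure induced on the $d$-worlds is closed under everything the relativised formula can see.

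The reduction to local satisfiability would then output a formula built around
$$
s \;\wedge\; \Diamond\bigl(d \wedge \phi'^{d}\bigr) \;\wedge\; \Box\bigl(s \vee (d \wedge \phi^{d})\bigr),
$$
supplemented by auxiliary conjuncts that pin down the forward cone of the spy, forcing it to comprise precisely the $d$-worlds, so that~$\Box$ at the spy really quantifies over the whole intended domain. In the completeness direction one takes a combined model, marks its worlds with~$d$, and attaches a fresh spy seeing all of them; in the soundness direction one reads off the substructure induced on the $d$-worlds. For~$\MLDFour$ one adds a seriality conjunct and checks the extracted domain stays serial; for~$\MLSFour$ reflexivity is inherited by induced substructures and a reflexive spy is harmless; transitivity is preserved in all cases, since induced substructures of transitive frames are transitive and a spy that sees everything keeps the frame transitive. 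The reduction to global satisfiability uses the same relativisation: globally one asks that every $d$-world satisfy~$\phi^{d}$ and, to force a witness for~$\phi'$, that every world reach a $d$-world satisfying~$\phi'^{d}$, again with a seriality conjunct for~$\MLDFour$ and nothing extra for~$\MLSFour$.

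The main obstacle is the presence of the \emph{converse} modality. Without it, global satisfiability reduces to local satisfiability by the classical trick of generating the submodel from a single point and observing that forward-generated submodels preserve the (converse-free) forward modalities; this fails once~$\RevDiamond$ is present, since a backward modality of a world in the generated submodel may be witnessed outside it. The spy-point-plus-relativisation construction above is designed to circumvent this, but then one must guarantee that in \emph{every} model of the reduced formula the spy's cone actually coincides with the set of $d$-worlds --- a ``universal-modality-like'' control that transitive frames with converse do not provide for free --- equivalently, that no~$\RevDiamond$ occurring in~$\phi$ or~$\phi'$ is witnessed by a $d$-world outside the cone. Achieving this via the auxiliary conjuncts (together, in the soundness direction, with a mild surgery on the model) is the delicate step; it is precisely what Zolin's Lemma~5.5 carries out, and the passage from transitive to reflexive transitive and to serial transitive frames is routine.
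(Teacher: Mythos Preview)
Your proposal is correct and takes essentially the same approach as the paper: both defer to Lemma~5.5 of~\cite{Zolin17} for the transitive case and note that the adaptations to reflexive transitive and serial transitive frames are routine. The paper in fact gives no proof beyond this citation; your sketch of the spy-point-plus-relativisation machinery and your identification of the delicate step (controlling the spy's cone in the presence of~$\RevDiamond$) add helpful detail but do not depart from the paper's route.
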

Below we explicitly deal with global satisfiability. The above lemma implies, however, that  our results apply also to local satisfiability.

Let us concentrate on the class~$\MLKFour$ of all transitive frames. The finite model construction we are going to present is the most complicated part of this paper.
It begins similarly to the exponential model construction in the case of local satisfiability of~$\MLKFour (\Diamond_{\ge})$ 
from~\cite{KazakovP09}: we introduce a Scott-type normal form (Lemma~\ref{l:normalform}), and then generalize  two pieces of model 
surgery used there (Lemma~\ref{l:ian}) to our setting: 
starting from any model, we first obtain a model with short \emph{paths of cliques} and then we decrease the size of the cliques.
Some modifications of the constructions from~\cite{KazakovP09} are necessary to properly deal with the converse modality; they are, however, rather straightforward. Having a model with short paths of cliques and small cliques, we develop some new machinery 
of \emph{clique profiles} and \emph{clique types} allowing us to
decrease the overall size of the structure; this fragment is our main contribution.

\begin{lemma} \label{l:normalform}
Given a formula~$\varphi$ of the language~$(\Diamond_{\ge}, \RevDiamond)$, we can compute in polynomial time a formula~$\psi$ of the form
\begin{align} \label{eq:normalform}
\nonumber \eta \wedge  \bigwedge_{1 \le i \le l} (p_i & \rightarrow \Diamond_{\ge C_i} \pi_i) \wedge 
            \bigwedge_{1 \le i \le m} (q_i \rightarrow \Diamond_{\le D_i} \chi_i) \wedge \\
						& \bigwedge_{1 \le i \le l'} (p'_i \rightarrow \RevDiamond \pi'_i) \wedge
						 \bigwedge_{1 \le i \le m'} (q'_i \rightarrow \boxminus \neg \chi'_i) 
\end{align}
where the~$p_i$,~$q_i$,~$p'_i$,~$q'_i$ are propositional variables, the~$C_i$,~$D_i$  are natural numbers, 
and~$\eta$ and the~$\pi_i$,~$\chi_i$,~$\pi'_i$,~$\chi'_i$ are propositional formulas, such that~$\varphi$ and~$\psi$ are globally  satisfiable
over exactly the same transitive frames.
\end{lemma}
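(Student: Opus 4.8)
The plan is to perform a standard Scott-style renaming/normalisation argument, adapted to the two-way graded language, carried out in a sequence of syntactic steps that each preserve global satisfiability over every transitive frame. First I would put $\varphi$ into negation normal form and then repeatedly introduce fresh propositional variables to name subformulas. Concretely, for every non-atomic subformula $\vartheta$ of $\varphi$ I introduce a fresh variable $x_\vartheta$ and conjoin, under a universal modality (which by Lemma~\ref{lem:UniversalModalityDefinableEucl} is \emph{not} available here --- so instead I use the global-satisfiability semantics directly, i.e. I simply take the conjunction of the naming axioms and stipulate global truth), definitional equivalences $x_\vartheta \iff \vartheta'$ where $\vartheta'$ is $\vartheta$ with its immediate subformulas replaced by their names. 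Since we are working with \emph{global} satisfiability, conjoining globally-true definitional axioms is harmless and preserves satisfiability over exactly the same frames; this is the usual linear-blowup renaming. After this step every ``modal atom'' appearing in a definitional axiom has the form $\Diamond_{\ge n}\ell$, $\Diamond_{\le n}\ell$, $\RevDiamond\,\ell$ or $\RevBox\,\ell$ where $\ell$ is a (conjunction/disjunction of) literals over the new variables, and the top-level formula $\eta$ together with all the purely Boolean parts of the definitional axioms is a single propositional formula.

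Next I would massage the shape of the individual conjuncts so that they match the template in~(\ref{eq:normalform}) exactly. Each definitional equivalence $x \iff \bigvee_j \mu_j$ where the $\mu_j$ are modal atoms or literals splits, using the propositional tautology $ (A\iff B) $, into implications; the ``$\Rightarrow$'' direction can be absorbed into $\eta$ after noting that an implication whose consequent is a \emph{disjunction} of modal atoms can be turned into a conjunction of implications $p_i \rightarrow \Diamond_{\ge C_i}\pi_i$ etc. by one more round of renaming (introduce a fresh variable selecting which disjunct is used), and the ``$\Leftarrow$'' direction likewise. Positive occurrences of $\RevBox$ are rewritten as $\boxminus\neg(\cdot)$ to fit the last conjunct type, and $\Diamond_{\le D}$ stays as is. Seriality/reflexivity play no role because we only claim preservation over the same transitive frames, not over a fixed subclass. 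The bookkeeping to ensure only the four allowed conjunct-shapes survive, and that all of it is computable in polynomial time with binary-encoded numbers, is routine but must be done carefully.

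The one genuinely delicate point --- and the step I expect to be the main obstacle --- is verifying that the transformation preserves global satisfiability over \emph{exactly} the same transitive frames, in both directions. The forward direction is immediate: given a model of $\varphi$ on a transitive frame, interpret each fresh $x_\vartheta$ by the extension of $\vartheta$; all definitional axioms then hold at every world, so $\psi$ is globally satisfied on the same frame. The backward direction requires checking that in any global model of $\psi$ the naming discipline actually forces each $x_\vartheta$ to coincide with $\vartheta$, which follows by induction on subformula structure from the definitional equivalences --- here one must be slightly careful that the rewriting of $\RevBox$/$\RevDiamond$ and of $\Diamond_{\le n}$ into the allowed conjuncts was equivalence-preserving, not merely implication-preserving, so that the induction goes through; whenever a step only preserved one implication I must make sure the ``other half'' was recorded inside $\eta$ or another conjunct. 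Once both directions are in place, restricting the model of $\psi$ to the original vocabulary $\Pi\cap\mathrm{Var}(\varphi)$ yields a global model of $\varphi$ on the same frame, completing the proof.
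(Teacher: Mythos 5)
Your proposal is correct and is essentially the argument the paper intends: the paper's own proof is a one-line appeal to ``a routine renaming process, similar to Lemma 4 of Kazakov and Pratt-Hartmann,'' i.e.\ exactly the Scott-style renaming into definitional implications that you carry out in detail (including the observation that for \emph{global} satisfiability the naming axioms can simply be conjoined, and that the frame is preserved in both directions by extending/restricting the valuation). The only stylistic difference is that working in negation normal form with one-directional implications $x_\vartheta \rightarrow \vartheta'$ would spare you the bookkeeping about recovering the ``other half'' of each equivalence, since the satisfiability argument only needs $x_\vartheta$ to imply $\vartheta$, not to coincide with it.
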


\begin{proof}
  Follows by a routine renaming process, which is similar to the proof
  of Lemma 4 from \cite{KazakovP09}).
\end{proof}

Next, let us introduce some helpful terminology, copying it mostly from the above-mentioned paper \cite{KazakovP09}. Let~$\str{A}=\tuple{W, R, V}$ be a transitive structure, and~$w_1, w_2 \in W$. 
We say that~$w_2$ is an~$R$-\emph{successor} of~$w_1$
if~$ \langle w_1, w_2 \rangle \in R$;~$w_2$ is a \emph{strict}~$R$-\emph{successor} of~$w_1$ if~$ \langle w_1, w_2 \rangle \in R$, 
but~$ \langle w_2, w_1 \rangle \not\in R$;~$w_2$ is a \emph{direct}~$R$-\emph{successor} of~$w_1$ if~$w_2$ is a strict~$R$-successor of~$w_1$ and, for every
$w \in W$ such that~$\langle w_1, w \rangle \in R$ and~$\langle w, w_2 \rangle \in R$ we have either~$w \in Q_\str{A}(w_1)$ or 
$w \in Q_\str{A}(w_2)$. Recall that~$Q_\str{A}(w)$ denotes the~$R$-clique 
for~$w$ in~$\str A$. 

The \emph{depth} of a structure~$\str{A}$ is the maximum over all~$k \ge 0$ for which there exist worlds~$w_0, \ldots, w_k \in W$
such that~$w_i$ is a strict~$R$-successor of~$w_{i-1}$ for every~$1 \le i \le k$, or~$\infty$ if no such a maximum exists. The \emph{breadth}
of~$\str{A}$ is the maximum over all~$k \ge 0$ for which there exist worlds~$w, w_1, \ldots, w_k$ such that~$w_i$ is a direct
$R$-successor of~$w$ for every~$1 \le i \le k$, and the sets~$Q_{\str{A}}(w_1), \ldots Q_{\str{A}}(w_k)$ are disjoint, or~$\infty$ if no
such a maximum exists. The \emph{width} of~$\str{A}$ is the smallest~$k$ such that~$k \ge |Q_\str{A}(w)|$ for all~$w \in W$, or~$\infty$ if
no such~$k$ exists. 

\begin{lemma} \label{l:ian}
Let~$\varphi$ be a normal form formula as in Equation~\ref{eq:normalform}. If~$\varphi$ is globally satisfied in a transitive model~$\str{A}$ then it is globally satisfied
in a transitive model~$\str{A}'$  with depth~$d' \le (\sum_{i=1}^m D_i) +m +m'+1$  and width~$c' \le (\sum_{i=1}^l C_i) + l' + 1$. 
\end{lemma}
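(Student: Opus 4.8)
The plan is to mimic the two-stage ``model surgery'' of \cite{KazakovP09} — which reduces a model of the normal form first to one with short \emph{paths of clusters} (bounded depth) and then to one with small clusters (bounded width) — and to add the bookkeeping needed so that the backward requirements $p'_i\rightarrow\RevDiamond\pi'_i$ and $q'_i\rightarrow\boxminus\neg\chi'_i$ survive the surgery. The forward part of the argument is literally the one from \cite{KazakovP09}, so the only real work is in handling the converse operators. Throughout I fix a transitive model $\str{A}=\tuple{W,R,V}$ globally satisfying the normal-form formula $\varphi$ and recall the geometry of transitive structures: $R$ is the full relation on every cluster $Q_\str{A}(w)$, and a world lying outside a cluster $R$-sees, and is $R$-seen by, either all of it or none of it. I also record one cheap but crucial monotonicity fact: if $W'\subseteq W$ and $\str{A}'$ is the substructure of $\str{A}$ induced on $W'$, then $\str{A}'$ is again transitive and every surviving world keeps in $\str{A}'$ only a \emph{subset} of its former $R$-successors and a \emph{subset} of its former $R$-predecessors; hence the conjuncts $\eta$, $q_i\rightarrow\Diamond_{\le D_i}\chi_i$, and $q'_i\rightarrow\boxminus\neg\chi'_i$ are preserved for free when passing to an induced substructure. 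Consequently in both stages it suffices to keep the $\Diamond_{\ge}$- and the $\RevDiamond$-requirements alive, and to that end I fix in advance a \emph{witness assignment}: for each world $w$ and each $i$ with $\str{A},w\models p_i$ a set $S_i(w)$ of exactly $C_i$ accessible worlds realising $\pi_i$, and for each $i$ with $\str{A},w\models p'_i$ a single predecessor $t_i(w)$ of $w$ realising $\pi'_i$.

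For the depth bound I would work with the strict partial order $\prec$ on clusters induced by the strict-successor relation — the depth of $\str{A}$ being the length of a longest $\prec$-chain — and copy the depth-reduction of \cite{KazakovP09}: along a $\prec$-chain the ``budget'' attached to each at-most requirement (for the conjunct $q_i\rightarrow\Diamond_{\le D_i}\chi_i$: the number of $\chi_i$-worlds still allowed below the current cluster once the requirement has been activated) is non-increasing and ranges over $\{0,\dots,D_i\}$, so that by a pigeonhole argument a long enough chain contains an \emph{interval of clusters} whose removal — that is, passing to the induced substructure on the complement, with transitivity for free — is harmless: by the stabilisation of the budgets and the coincidence of the local types at the endpoints of the interval, any $\Diamond_{\ge}$- or $\RevDiamond$-witness of a surviving world that happened to lie in the deleted interval can be rerouted to a surviving world, and the witness assignment updated accordingly. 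The counting of \cite{KazakovP09}, extended by one extra layer per conjunct $q'_i\rightarrow\boxminus\neg\chi'_i$ (because splicing must also be guided by which $\chi'_i$-worlds lie \emph{above} a cluster, so as never to leave a surviving world with a predecessor set violating a $\boxminus$-conjunct), bounds the resulting depth by $(\sum_{i=1}^m D_i)+m+m'+1$.

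For the width bound I keep the frame just obtained and shrink every cluster $Q$ independently, by a uniform rule: keep from $Q$, for each $i\le l$, at most $C_i$ worlds realising $\pi_i$; for each $i\le l'$, at most one world realising $\pi'_i$; and one further arbitrary world (to keep $Q$ non-empty, and, in the serial case $\MLDFour$, to supply a successor; in the reflexive case $\MLSFour$ the loops on surviving worlds are untouched). This leaves a subset $Q^\circ\subseteq Q$ with $|Q^\circ|\le(\sum_{i=1}^l C_i)+l'+1$. Since, by the geometry observation, a cluster is seen from the outside only as a whole and every $p_i$-world that relies on $Q$ is satisfied by \emph{any} $C_i$ of its $\pi_i$-worlds, and since a $\pi'_i$-world of $Q$, being $R$-equivalent to all of $Q$, is a predecessor of every world of $Q$ and of every strict successor of $Q$, replacing each $Q$ by $Q^\circ$ preserves all $\Diamond_{\ge}$- and $\RevDiamond$-requirements; the $\Diamond_{\le}$- and $\boxminus$-requirements are preserved by the monotonicity fact; and no strict successors are created, so the depth does not grow. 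The result is the desired model $\str{A}'$.

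The step I expect to be the only real obstacle is the depth reduction. The forward half is exactly \cite{KazakovP09}; what needs care is choosing the interval to be spliced out so that the $\RevDiamond$-witnesses lying inside it are recoverable and so that no surviving world below the interval is left with a predecessor configuration that falsifies a $\boxminus$-conjunct — which is precisely why predecessor information (and hence the extra additive term $m'$) has to be threaded through the pigeonhole argument alongside the successor information. As the text states just before the lemma, this adaptation is straightforward but does demand attention to detail.
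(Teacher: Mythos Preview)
Your Stage~2 (width reduction) matches the paper's: in each cluster keep at most $C_i$ realisations of each $\pi_i$, one of each $\pi'_i$, and one further world.

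Your Stage~1, however, diverges from the paper and contains a genuine gap. You propose to \emph{delete} an interval of clusters along a long $\prec$-chain, the interval chosen by a pigeonhole argument on the $\Diamond_{\le D_i}$-budgets together with the set of $\chi'_i$ realised above. By your own monotonicity observation, deletion automatically preserves the $\Diamond_{\le}$- and $\boxminus$-conjuncts; but deletion is the wrong direction for the $\Diamond_{\ge C_i}\pi_i$- and $\RevDiamond\pi'_i$-conjuncts. A surviving world above the removed interval may have had all of its $\pi_i$-witnesses inside that interval, and your invariant---which mentions neither the $\pi_i$ nor the $\pi'_i$---gives no reason why replacements exist elsewhere. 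Your ``rerouting'' step is therefore unjustified at the stated bound $(\sum D_i)+m+m'+1$. Enriching the invariant with $\pi_i$-counts would alter the bound; and even then, a cluster lies on many $\prec$-chains at once, so splicing out an interval of one chain is not a well-defined global operation on the model.

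The paper's Stage~1 goes the opposite way: it keeps every world and \emph{enlarges} $R$. It defines $d^i_\str{A}(w)=\min(D_i{+}1,|\{w':\str{A},w'\models\chi_i,\ \langle w,w'\rangle\in R^*\}|)$ and $S_\str{A}(w)=\{\chi'_i:\exists w'\,(\str{A},w'\models\chi'_i\text{ and }\langle w',w\rangle\in R^*)\}$, lets $R_\sim$ be $R$ restricted to pairs agreeing on all $d^i$ and on $S$, and sets $R'=(R\cup R_\sim^{-1})^+$. Because $R\subseteq R'$, the $\Diamond_{\ge}$- and $\RevDiamond$-conjuncts are preserved for free; the work is to show that the values $d^i$ and $S$ are unchanged when passing from $R$ to $R'$, which secures the $\Diamond_{\le}$- and $\boxminus$-conjuncts. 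Along any strict $R'$-chain either $\sum_i d^i$ strictly drops or $S$ strictly grows, yielding the depth bound. In short, the paper's invariant and yours coincide, but the paper pairs it with an edge-\emph{adding} surgery (merging cliques) for which that invariant is exactly what is needed, whereas you pair it with a world-\emph{deleting} surgery for which it is not.
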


\begin{proof}
The proof is a construction being a minor modification of Stages  1 and  4 of the construction from the proof of Lemma 6 in~\cite{KazakovP09}, where the language without backward modalities is considered.
We closely follow the lines of  Kazakov and Pratt-Hartmann's construction,  just taking additional care of backward witnesses.
We remark here that also Stage 2 of the above mentioned construction could be adapted, giving a better bound on the depth 
of~$\str{A}'$. We omit it here since such an improvement would not be crucial for our purposes. Stage 3 cannot be directly adapted. 

Let us turn to the detailed proof.

\medskip\noindent
\emph{Stage 1. Small depth.}
Let~$\str{A}=\tuple{W, R, V}$. For~$w \in W$ define~$d^i_\str{A}(w):=\min(D_i+1, |\{w': \str{A},w' \models \chi_i, \langle w, w' \rangle \in  R^*\}|)$ where~$D_i$ and~$\chi_i$,~$1 \le i \le m$, are as in Equation~\ref{eq:normalform}
and~$R^*$ is the reflexive closure of~$R$. We also define~$S_\str{A}(w):=\{ \chi'_i: \text{there is~$w'$ such that } \str{A}, w' \models \chi'_i \text { and }
\langle w', w \rangle \in R^* \}$, where~$\chi'_i$,~$1 \le i \le m'$ are also as in Equation~\ref{eq:normalform}.  

Let~$R_{\sim}:=\{ \langle w_1, w_2 \rangle \in R: 
d^i_\str{A}(w_1)=d^i_\str{A}(w_2) \text{ for all~$1 \le i \le m$ and } S_\str{A}(w_1)=S_\str{A}(w_2) \}$ be the restriction of~$R$ to pairs of worlds that have the
same values of the~$d^i_\str{A}$ and~$S_\str{A}$.  Let~$R^-_{\sim}$ be the inverse of~$R_{\sim}$. Let~$\str{A}'= \tuple{W, R', V}$ be obtained from~$\str{A}= \tuple{W,R,V}$ by setting~$R':=(R \cup R^-_{\sim})^+$, where the superscript~$+$ is the transitive closure operator. Intuitively, if~$w_1$ is~$R$-reachable from~$w_2$~$w_1$ and~$w_2$
agree on the number (up to the limit of~$D_i$) of the worlds satisfying~$\chi_i$ reachable from them,  for all~$1\le i \le m$, and, for all~$i$,~$w_1$ is 
an~$R$-successor of a world satisfying~$\chi_i'$ iff~$w_2$ is, then we make~$w_1$ and~$w_2$~$R'$-equivalent. The effect is that some~$R$-cliques of~$\str{A}$ are joined into bigger~$R$-cliques in~$\str{A}'$.
We show that~$\str{A}'$ satisfies~$\varphi$ and has appropriately bounded depth.

For every~$w_1, w_2 \in W$ such that~$w_2$ is a strict~$R'$-successor of~$w_1$, we have~$d^i_\str{A}(w_1) \ge d^i_\str{A}(w_2)$ for 
all~$1 \le i \le m$, 
$S_\str{A}(w_1) \subseteq S_\str{A}(w_2)$ and either~$d^i_\str{A}(w_1) > d^i_\str{A}(w_2)$ for some~$i$, and
thus~$\sum_{i=1}^m d^i_\str{A}(w_1) >  \sum_{i=1}^m d^i_\str{A}(w_2)$  or the inclusion
$S_\str{A}(w_1) \subseteq S_\str{A}(w_2)$ is strict. Since~$d^i_\str{A} (w) \le D_i+1$ for every~$w \in W$ and every~$1 \le i \le m$, 
and the size of~$S_\str{A}(w)$ is bounded by~$m'$, the length of every chain~$w_0, \ldots, w_k$ such that~$w_i$ is a strict~$R'$-successor  of~$w_{i-1}$ is bounded by~$(\sum_{i=1}^m D_j) +m +m'+1$.

In order to prove that~$\str{A}' \models \varphi$, we first prove that~$d^i_\str{A}(w) = d^i_{\str{A}'}(w)$ for every~$w \in W$ and~$1 \le i \le m$. Assume to the contrary that~$d^i_\str{A}(w) \not= d^i_{\str{A}'}(w)$ for some~$w \in W$ and some~$i$. 
Since~$R \subseteq R'$, we have~$d^i_\str{A}(w) < d^i_{\str{A}'}(w) \le D_i+1$, which means, in particular, that there exists an element~$w' \in W$ with~$\str{A}, w' \models \chi_i$, such that~$\langle w, w' \rangle \in R'$ but~$\langle w, w' \rangle \not\in R$.

Since~$\langle w, w' \rangle \in R'$, by definition of~$R'$, there exists a sequence~$w_0, \ldots, w_k$ of different worlds in~$W$ such that 
$w_0=w$,~$w_k=w'$ and~$\langle w_{j-1}, w_j \rangle \in R \cup R^-_{\sim}$ for every~$1 \le j \le k$. Note that 
$d^i_\str{A}(w_{j-1}) \ge d^i_\str{A}(w_{j})$ for every~$1 \le j \le k$ and every~$1 \le i \le m$. Take the maximal~$j$ such that~$\langle
w_{j-1}, w' \rangle \not\in R$. Since~$\langle w_0, w' \rangle = \langle w, w' \rangle \not\in R$, such a maximal~$j$ always exists.
Then~$\langle w_j, w' \rangle \in R^*$, and~$\langle w_{j-1}, w_j \rangle \not\in R$. Since~$\langle w_{j-1}, w_j \rangle \in R \cup R^-_{\sim}$, we have~$\langle w_{j-1}, w_j \rangle \in R^-_{\sim}$, and so
$d^i_\str{A}(w_{j-1}) = d^i_{\str{A}}(w_j)$ by definition of~$R_\sim$. Since~$d^i_\str{A}(w_j) \le d^i_\str{A}(w_0) = d^i_\str{A}(w)
< D_i+1$, we obtain a contradiction, due to the fact that~$d^i_{\str{A}}(w_{j-1})=d^i_\str{A}(w_j) \le D_i$,~$\langle w_{j-1}, w\ \rangle
\not\in R^*$,~$\langle w_{j}, w' \rangle \in R^*$,~$\langle w_j, w_{j-1} \rangle \in R$, and~$\str{A}, w' \models \chi_i$.

The observation that~$S_\str{A}(w) = S_{\str{A}'}(w)$ for all~$w \in W$ is even simpler. Assume to the contrary that this 
equality does not hold for some~$w \in W$. This means that~$\chi'_i \in S_{\str{A}'}(w)$ and~$\chi'_i \not\in S_{\str{A}}(w)$ for some~$1 \le i \le m'$. 
In particular, there exists an element~$w' \in W$ with~$\str{A}, w' \models \chi'_i$, such that~$\langle w', w \rangle \in R'$ but~$\langle w', w \rangle \not\in R$.
Thus, there is a 
sequence of different worlds~$w'=w_0, \ldots, w_k=w$ such that 
~$\langle w_{j-1}, w_j \rangle \in R \cup R^-_{\sim}$ for every~$1 \le j \le k$. Note that 
$S_\str{A}(w_{j-1}) \subseteq S_\str{A}(w_{j})$ for every~$1 \le j \le k$. Since~$\chi_i' \in S_\str{A}(w_0)$ it follows
that~$\chi_i' \in S_\str{A}(w_k)$. Contradiction.

To complete the proof that~$\str{A}' \models \varphi$ we demonstrate that, if~$\psi$ is any conjunct of~$\varphi$ and~$w \in W$,
then~$\str{A}, w \models \psi$ implies~$\str{A}', w \models \psi$. Indeed, for the propositional formula~$\eta$ it is immediate.
For subformulas~$(p_i  \rightarrow \Diamond_{\ge C_i} \pi_i)$ and~$(p'_i \rightarrow \RevDiamond \pi'_i)$ this holds since~$R \subseteq R'$.
For subformulas~$(q_i \rightarrow \Diamond_{\le D_i} \chi_i)$ this follows from the property~$d^i_\str{A}(w) = d^i_{\str{A}'}(w)$. Finally,
for subformulas~$(q'_i \rightarrow \boxminus \neg \chi'_i)$ this follows from the property~$S_\str{A}(w)=S_{\str{A}'}(w)$.

\medskip\noindent
\emph{Stage 2. Small width.} By Stage 1 we may assume that~$\str{A}$ has depth bounded by~$(\sum_{i=1}^m D_j) +m +m'+1$.
For every element~$w \in W$ we define~$Q_{\pi_i}(w)$ to be the set of elements of~$Q_\str{A}(w)$ for which~$\pi_i$ holds ($1 \le i \le l$)
and~$Q_{\pi'_i}(w)$ to be the set of elements of~$Q_\str{A}(w)$ for which~$\pi'_i$ holds ($1 \le i \le l'$). We call the elements
of each~$Q_\pi(w)$ the \emph{equivalent}~$\pi$-\emph{witnesses for}~$w$. Note that for each relevant~$\pi$ we have~$Q_\pi(w_1) = Q_\pi(w_2)$ when~$w_1$ and~$w_2$ are
$R$-equivalent. For~$1 \le i \le l$, let~$Q'_{\pi_i}(w)$ be~$Q_{\pi_i}(w)$ if~$|Q_{\pi_i}(w)| \le C_i$, or, otherwise, 
a subset of~$Q_{\pi_i}(w)$ which contains exactly~$C_i$ elements. We call~$Q'_{\pi_i}(w)$ the \emph{selected equivalent~$\pi_i$-witnesses
for~$w$}. For~$1 \le i \le l'$, let~$Q'_{\pi'_i}(w)$ be~$Q_{\pi'_i}(w)$ if~$|Q_{\pi_i}(w)| \le 1$, or, otherwise, 
a singleton subset of~$Q_{\pi_i}(w)$. We call~$Q'_{\pi_i}(w)$ the \emph{selected equivalent~$\pi'_i$-witness for~$w$}.
Additionally, define~$Q'_*(w)$ to be any singleton subset of~$Q_\str{A}(w)$.
We assume that if~$w_1$ and~$w_2$ are~$R$-equivalent then~$Q'_{\pi_i}(w_1) = Q'_{\pi_i}(w_2)$ for all~$1 \le i \le l$,
$Q'_{\pi'_i}(w_1)=Q'_{\pi'_i}(w_2)$ for~$1 \le i \le l'$, and~$Q'_*(w_1)=Q'_*(w_2)$.
Define the structure~$\str{A}'= \tuple{W', R', V'}$ by setting~$W'=\bigcup_{w \in W, 1 \le i \le l} Q'_{\pi_i}(w) \cup \bigcup_{w \in W, 1 \le i \le l'} Q'_{\pi'_i}(w) \cup Q'_*(w)$,~$R':=R \restr W'$, and~$V' = V \restr W'$. Intuitively~$\str{A}'$ is obtained from~$\str{A}$ by removing elements in every~$R$-clique, except for those that are 
selected witnesses for other elements or are members of the singleton set~$Q_*$, guaranteeing that the clique will remain non-empty. It is not 
difficult to see that~$\str{A}'$ has the required properties. In particular our selection process selects 
at most~$(\sum_{i=1}^l C_i) + l' + 1$ elements in every~$R$-clique.
\end{proof}

To describe our next step, we need a few more definitions.
Given a world~$w$ of a structure~$\str{A}$, we define its \emph{depth} as  the maximum over all~$k \ge 0$ for which there exist worlds~$w=w_0, \ldots, w_k \in W$ such that~$w_i$ is a strict~$R$-successor of~$w_{i-1}$ for every $1 \le i \le k$, or as~$\infty$ if no
such a maximum exists.  For an~$R$-clique~$Q$ we define its \emph{depth} as the depth of~$w$ for any~$w \in Q$; this definition is sound since
 for all~$w_1 \in Q_{\str{A}}(w)$ the depth of~$w$ is equal to the depth of~$w_1$.

From this point, we will mostly work on the level of cliques rather than individual worlds. We may view any structure~$\str{A}$
as a partially ordered set of cliques. We write~$\langle Q_1, Q_2 \rangle \in R$, and say that a clique~$Q_1$ \emph{sends} an edge to a clique~$Q_2$ (or that~$Q_2$ \emph{receives}
an edge from~$Q_1$) if~$\langle w_1, w_2 \rangle \in R$ for any (equivalently: for all)~$w_1 \in Q_1$,~$w_2 \in Q_2$. 

A~$1$-type of a world~$w$ in~$\str{A}$ is the set of all  propositional variables
$p$ such that~$\str{A} \models p$. We sometimes identify a~$1$-type with the conjunction of all its elements and negations of variables it does not contain. Given a natural number~$k$, a structure~$\str{A}$ and a clique~$Q$ in this structure~$\str{A}$, we define a  
$k$-\emph{profile} of~$Q$ (called just a \emph{profile} if~$k$ is clear from the context) in~$\str{A}$
 as the tuple~$prof^k_\str{A}(Q)=(\cH, \cA, \cB, \irref)$, where 
$\cH$ is the multiset of~$1$-types in which the number of copies of each~$1$-type~$\alpha$  equals
$\min(k+1, |\{w \in Q: \str{A}, w \models \alpha\}|)$,~$\cA$ is the  multiset of~$1$-types in which the number of copies of each~$1$-type~$\alpha$  equals
$\min(k, |\{w: \str{A}, w \models \alpha \text{ and~$w$ is a strict~$R$-successor  of a world from }~$Q$\}|)$, 
$\cB$ is the  set of~$1$-types of worlds for which a world from~$Q$ is its strict~$R$-successor, and~$\irref$ is a Boolean variable
set to~$1$ iff the clique consists of a single irreflexive element (note that if the clique contains at least two elements then
they all must be reflexive).
Intuitively,~$\cH$ counts (up to~$k+1$)  realizations of~$1$-types \emph{(H)ere} in Q,~$\cA$ counts (up to~$k$) realizations~$1$-types
\emph{(A)bove}~$Q$, and~$\cB$ says which~$1$-types appear \emph{(B)elow}~$Q$.
Usually, given a normal form~$\varphi$ as in Equation~\ref{eq:normalform}, we will be interested in~$M_\varphi$-profiles of cliques,
where~$M_\varphi=\max(\{ C_i \}_{i=1}^{l} \cup \{ D_i + 1\}_{i=1}^{m})$.
Note that, given the~$M_\varphi$-profiles of all cliques in a structure we are able to determine whether this structure is a global model of~$\varphi$.
Indeed, given the~$M_\varphi$-profile of a clique we know the~$1$-types of elements it contains, for each such element 
we can count, at least up to~$M_\varphi$, how many successors of each~$1$-type it has (for this we use the values of~$\cH$,~$\cA$ and~$\irref$), and for each element we know 
the set of~$1$-types of its predecessors (for this we use the values of~$\cH$,~$\cB$ and~$\irref$). Clearly, this information is sufficient to check if every conjunct of~$\varphi$ is satisfied.
The following observation is also straightforward.\\

\begin{lemma} \label{l:profiles}
If~$\str{A} \models \varphi$ for a normal form~$\varphi$, and
if in a structure~$\str{A}'$ the~$M_\varphi$-profile of every 
 clique is equal to the~$M_\varphi$-profile of  some clique from~$\str{A}$, then~$\str{A}' \models \varphi$.
\end{lemma}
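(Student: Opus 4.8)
The plan is to make precise the observation stated just before the lemma: in any transitive structure the truth value at a world $w$ of each conjunct of a normal-form formula $\varphi$ as in Equation~\ref{eq:normalform} is a function of the $1$-type of $w$ together with the $M_\varphi$-profile of the clique $Q_\str{A}(w)$ containing it. Once this \emph{dependency claim} is established, the lemma is immediate. Take any $w' \in \str{A}'$, let $\alpha$ be its $1$-type and $P=(\cH,\cA,\cB,\irref)$ the $M_\varphi$-profile of $Q_{\str{A}'}(w')$. By hypothesis there is a clique $Q$ in $\str{A}$ with $prof^{M_\varphi}_\str{A}(Q)=P$. Since $w'\models\alpha$, the multiset $\cH$ contains at least one copy of $\alpha$, so some world $w\in Q$ has $1$-type $\alpha$. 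As $\str{A}\models\varphi$, every conjunct of $\varphi$ holds at $w$; by the dependency claim the same conjuncts hold at $w'$. As $w'$ was arbitrary, $\str{A}'\models\varphi$.

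It remains to prove the dependency claim, which is a case analysis over the five kinds of conjuncts of Equation~\ref{eq:normalform}. The relevant structural facts are: (a) the $R$-successors of $w$ split into those inside $Q_\str{A}(w)$ and the strict $R$-successors of $w$, and symmetrically for predecessors; (b) all elements of a clique share the same set of strict $R$-successors and the same set of strict $R$-predecessors (an easy consequence of transitivity and $R$-equivalence), so those sets depend only on the clique; and (c) $w$ counts as one of its own successors (and predecessors) precisely when $\irref=0$. Thus the conjunct $\eta$, being propositional, is decided by $\alpha$; for $p'_i\to\RevDiamond\pi'_i$ and $q'_i\to\boxminus\neg\chi'_i$, which quantify only over predecessors, it is enough to know the \emph{set} of $1$-types occurring among the predecessors of $w$, which is the set of $1$-types realized in $Q_\str{A}(w)$ (adjusted by $\irref$ and $\alpha$ to include or exclude $w$ itself) together with $\cB$; all of this is part of $(\alpha,P)$.

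The remaining two families, $p_i\to\Diamond_{\ge C_i}\pi_i$ and $q_i\to\Diamond_{\le D_i}\chi_i$, require counting: whether the antecedent holds is read from $\alpha$, and the number of $\pi_i$- (resp.\ $\chi_i$-) successors of $w$ equals the number of $\pi_i$-realizations inside $Q_\str{A}(w)$ — decreased by one when $\irref=1$ and $w\models\pi_i$ — plus the number of strict $R$-successors of $Q_\str{A}(w)$ satisfying $\pi_i$. The first summand is read off $\cH$, $\irref$ and $\alpha$, the second off $\cA$; since $\cH$ records multiplicities up to $M_\varphi+1$ and $\cA$ up to $M_\varphi$, and $M_\varphi=\max(\{C_i\}\cup\{D_i+1\})$, these data determine $\min(C_i,\#\pi_i\text{-successors})$ and whether $\#\chi_i\text{-successors}\le D_i$, hence the truth values of both modal subformulas. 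Again everything used is contained in $(\alpha,P)$, completing the claim.

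The only genuinely delicate points are bookkeeping: verifying that the thresholds built into profiles are large enough to evaluate the graded modalities (this is exactly why $M_\varphi$ is chosen as above, and why $\cH$ is capped at $M_\varphi+1$ rather than $M_\varphi$ — one extra unit absorbs the possible subtraction of $w$ itself in an irreflexive singleton clique), and handling the irreflexive-singleton case uniformly across all four modal conjunct types. None of this is hard; the proof is a line-by-line formalization of the paragraph preceding the lemma.
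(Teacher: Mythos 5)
Your proof is correct and follows exactly the route the paper intends: the paper leaves Lemma~\ref{l:profiles} unproved as "straightforward," relying on the preceding paragraph explaining that a world's satisfaction of each conjunct is determined by its $1$-type together with the $\cH$, $\cA$, $\cB$ and $\irref$ components of its clique's $M_\varphi$-profile, and your dependency claim is a careful line-by-line formalization of that paragraph. The bookkeeping details you flag (the $M_\varphi{+}1$ cap on $\cH$ versus $M_\varphi$ on $\cA$, the irreflexive-singleton adjustment, and the fact that all members of a clique share the same strict successors and predecessors) are exactly the points that make the paper's informal argument go through.
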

 
We now prove the finite model property.

\begin{lemma} \label{l:fmp}
Let~$\varphi$ be a normal form formula. If~$\varphi$ is globally satisfied in a transitive model~$\str{A}$
then it is globally satisfied in a finite transitive model~$\str{A}'$.
\end{lemma}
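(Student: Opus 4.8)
\noindent\emph{Proof plan.}
By Lemma~\ref{l:ian} we may assume that~$\str{A}$ has depth at most~$d := (\sum_{i=1}^{m} D_i) + m + m' + 1$ and width at most~$c := (\sum_{i=1}^{l} C_i) + l' + 1$; in particular every~$R$-clique of~$\str{A}$ has at most~$c$ worlds, and the cliques of~$\str{A}$ are stratified by depth into levels~$L_0, \dots, L_d$, and every edge between two distinct cliques leads from a clique at some level~$L_j$ to a clique at a strictly smaller level. Put~$k := M_\varphi$. Since~$\varphi$ uses only finitely many propositional variables, there are only finitely many~$1$-types over them and hence only finitely many~$k$-profiles. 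By Lemma~\ref{l:profiles} it suffices to build a \emph{finite} transitive structure~$\str{A}'$ such that the~$k$-profile of every clique of~$\str{A}'$ equals the~$k$-profile of some clique of~$\str{A}$.

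The plan is to obtain~$\str{A}'$ from~$\str{A}$ by a level-by-level surgery that collapses each level to boundedly many cliques. For this I would enrich profiles to \emph{clique types}: the clique type of a clique~$Q$ records its~$k$-profile together with a selection, of size bounded only in terms of~$|\varphi|$, of the~$k$-profiles of direct successor cliques of~$Q$, chosen so that this data already pins down enough strict successors to realize the~$\cA$-component of the profile exactly, to certify every forward requirement~$p_i \rightarrow \Diamond_{\ge C_i}\pi_i$ of the worlds of~$Q$, and to witness that no bound~$q_i \rightarrow \Diamond_{\le D_i}\chi_i$ is violated. There are only finitely many clique types. I would then build~$\str{A}'$ by processing the levels~$L_0, L_1, \dots, L_d$ in this order: having built a finite transitive part of~$\str{A}'$ covering levels~$0$ through~$j-1$, for level~$j$ I keep one representative clique (together with the worlds it contains) per clique type realized at~$L_j$ in~$\str{A}$, attaching each representative~$Q^\ast$ to the already-built lower part by making it send edges precisely to those representatives of lower-level cliques to which, by its clique type, a clique of that type sends edges in~$\str{A}$. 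Since forward requirements and numeric upper bounds speak only about successors, which have already been fixed, they are inherited at~$Q^\ast$ from the original clique; transitivity among levels~$\le j$ is preserved because the lower part is transitive and the attachment follows the original reachability pattern. After all~$d{+}1$ levels are processed the structure is finite, and~$\str{A}' \models \varphi$ by Lemma~\ref{l:profiles}, provided every clique still realizes a profile occurring in~$\str{A}$.

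The step I expect to be the main obstacle is the converse modality: the backward requirements~$p'_i \rightarrow \RevDiamond \pi'_i$, the backward constraints~$q'_i \rightarrow \boxminus \neg\chi'_i$, and, above all, the fact that the profile of a clique also records the set~$\cB$ of~$1$-types of its strict \emph{predecessors}. Collapsing several cliques of a level into a single representative merges their predecessor sets, so the~$\cB$-component of~$Q^\ast$ in~$\str{A}'$ may become strictly larger than that of the original clique, which could falsify a~$\boxminus\neg\chi'_i$ constraint or leave a~$\RevDiamond\pi'_i$ requirement met only through a predecessor that should not be present. To handle this I would make the clique type fix the~$\cB$-component as well, and only identify two cliques --- equivalently, only redirect an edge of a higher-level clique into a lower-level representative --- when the \emph{downward type} of the source (the set of~$1$-types occurring at it or strictly below it) is compatible with the target~$\cB$; using transitivity and the monotonicity of the downward type along edges one then shows that such compatible redirections never enlarge any~$\cB$-component beyond what the type prescribes, so that every clique of~$\str{A}'$ realizes a profile that already occurs in~$\str{A}$. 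The remaining checks --- that the forward and backward requirements and the numeric bounds survive the surgery --- are then routine, along the lines of the argument behind Lemma~\ref{l:profiles}.
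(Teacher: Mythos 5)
Your overall strategy is the same as the paper's: reduce to bounded depth and width via Lemma~\ref{l:ian}, stratify the cliques into layers by depth, replace each layer by finitely many representative cliques chosen according to a ``clique type'', and conclude with Lemma~\ref{l:profiles}. However, there is a genuine gap in the key step. Keeping \emph{one} representative clique per clique type at each level does not preserve the profiles of the cliques at higher levels. A clique~$P$ at level~$j' > j$ may send edges to arbitrarily many \emph{distinct} cliques of the same clique type at level~$j$ (the breadth of~$\str{A}$ is unbounded even after Lemma~\ref{l:ian}), and the witnesses for a requirement~$p_i \rightarrow \Diamond_{\ge C_i}\pi_i$ at a world of~$P$ may be spread one per clique across~$C_i$ such cliques. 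Collapsing them all onto a single representative leaves~$P$ with only the worlds of that one clique as successors of that type, so the~$\cA$-component of~$P$'s profile shrinks and the~$\Diamond_{\ge C_i}$ conjunct can fail. The fix --- and what the paper actually does --- is to retain up to~$M_\varphi$ cliques of each clique type at every level, and then, for each surviving higher clique~$P$ and each type~$\beta$, to re-attach~$P$ to exactly~$\min(M_\varphi, f(\beta))$ distinct retained cliques of type~$\beta$, where~$f(\beta)$ is the number of type-$\beta$ cliques~$P$ originally pointed to; since~$M_\varphi \ge \max_i C_i$ and~$M_\varphi \ge \max_i D_i+1$, this keeps every truncated successor count, hence every~$\cA$-component, unchanged.

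A second, related weakness is your treatment of transitivity and of the~$\cB$-components. You propose that the clique type record profile-level data about successors plus a ``downward type'' compatibility condition, but this does not determine \emph{which} retained lower-level cliques a representative points to: two cliques at level~$j$ with identical profile data may send edges to different sets of representatives at levels~$< j$, and merging them (or redirecting a predecessor from one to the other) then either destroys transitivity or changes which lower cliques acquire new predecessors, i.e.\ their~$\cB$-components. The paper avoids this by making the clique type at level~$j$ include the \emph{exact set}~$S$ of already-retained cliques at levels~$< j$ to which the clique sends edges (this is why the final size bound is a tower of exponentials), together with a fixed numbering of representatives so that the redirection of predecessors is deterministic; both Claim~1 (transitivity of each~$\str{A}_i$) and Claim~2 (profile preservation, in particular of~$\cB$) rest on this. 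Without pinning down~$S$ in the type, your argument that ``the attachment follows the original reachability pattern'' does not go through.
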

\begin{proof} 
\emph{Construction of~$\str{A}'$.}
We assume that~$\varphi$ is as in Equation~\ref{eq:normalform}. By Lemma~\ref{l:ian}, we may assume that~$\str{A}=\langle W, R, V \rangle$ has depth~$d \le (\sum_{i=1}^m D_i) +m +m'+1$  and width~$c \le (\sum_{i=1}^l C_i) + l' + 1$. Note that~$\str A$ may be infinite due to possibly infinite breadth.

Let us split~$W$ into sets~$U_0, \ldots, U_d$ with~$U_i$ consisting of all elements of~$W$ of depth~$i$ in~$\str{A}$
(equivalently speaking: being the union of all cliques of depth~$i$ in~$\str{A}$). They are called \emph{layers}. Note that cliques from~$U_i$ may send~$R$-edges
only to cliques from~$U_j$ with~$j<i$.

We now inductively define a sequence of models~$\str{A}=\str{A}_{-1}, \str{A}_0, \ldots, \str{A}_d=\str{A}'$, 
with~$\str{A}_i=\langle W_i, R_i, V_i \rangle$ such that 
\begin{itemize}
\item $W_i= U'_0 \cup \ldots \cup U'_{i} \cup U_{i+1} \cup \ldots \cup U_d$, where each $U'_i$ is a finite union of some cliques from~$U_i$,
\item $V_i = V \restr W_i$
\item $\str{A}_i \restr (U'_0 \cup \ldots \cup U'_{i}) = \str{A}_{i-1} \restr (U'_0 \cup \ldots \cup U'_{i})$,

\item~$\str{A}_i \restr  (U'_0 \cup \ldots \cup U'_{i-1} \cup U_{i+1} \cup \ldots \cup U_d) =
  \str{A}_{i-1} \restr  (U'_0 \cup \ldots \cup U'_{i-1} \cup U_{i+1} \cup \ldots \cup U_d)$  
	\item in particular: 
$\str{A}_i \restr (U_{i+1} \cup \ldots \cup U_d) = \str{A} \restr (U_{i+1} \cup \ldots \cup U_d)$.
\end{itemize}

We obtain~$\str{A}_i$ from~$\str{A}_{i-1}$ by distinguishing a fragment~$U'_i$ of~$U_i$, removing~$U_i \setminus U'_i$ and
adding some edges from~$U_{i+1} \cup \ldots \cup U_{d}$ to~$U'_i$; all the other edges remain untouched. We do it carefully, 
to avoid
modifications of the profiles of
the surviving cliques. Let us describe the process of constructing~$\str{A}_i$ in details.

Assume~$i \ge 0$. We first distinguish a finite subset~$U_i'$ of~$U_i$. We define a \emph{clique type} of every clique~$Q$ from~$U_i$ in~$\str{A}_{i-1}$ as a triple~$(\cH, \cB, S)$,where~$\cH$, and~$\cB$  are as in~$prof^{M_\varphi}_{\str{A}_{i-1}}(Q)$ and~$S$ is the subset of cliques from~$U_0' \cup \ldots \cup U'_{i-1}$, consisting of those cliques
to which~$Q$ sends an~$R_{i-1}$-edge. We stress that during the construction of~$\str{A}_i$, the clique types of cliques
are always computed in~$\str{A}_{i-1}$. In particular~$S$ is empty for~$i=0$ and, as we will always have that 
$U_0' \cup \ldots \cup U'_{i-1}$ is finite,~$S$ is finite for any~$i>0$. Thus for each~$i$ there will be only finitely many clique types. 

For every clique type~$\beta$ realized in~$U_i$, we mark~$M_\varphi$ cliques of this type, or all such
cliques if there are less than~$M_\varphi$ of them. Let~$U'_i$ be the union of the marked cliques.
We fix some arbitrary numbering of the marked cliques.

Now we define the relation~$R_i$. As said before, for any pair of cliques~$Q_1, Q_2$ both of which are contained in
$U'_0 \cup \ldots \cup U'_{i-1} \cup U_{i+1} \cup \ldots \cup U_d$ or in 
$U'_0 \cup \ldots \cup U'_{i}$, we set~$\langle Q_1, Q_2 \rangle \in R_i$ iff
$\langle Q_1, Q_2 \rangle \in R_{i-1}$. It remains to define  
 the~$R_i$-edges from~$U_{i+1} \cup \ldots \cup U_{d}$ to~$U'_i$.
For every clique~$Q$  from~$U_{i+1} \cup \ldots \cup U_d$ and every clique type~$\beta$ realized in~$U_i'$, 
let~$f(\beta)$ be the number of~$R_{i-1}$-edges sent  by~$Q$ to cliques of type~$\beta$ in~$U_i$, if this number is not greater than~$M_\varphi$,
or, otherwise, let~$f(\beta)=M_\varphi$. Let~$f'(\beta)$ be the number of~$R_{i-1}$-edges sent by~$Q$ to cliques of type~$\beta$ in~$U'_i$ (recall that this number is
not greater than~$M_\varphi$). 
We let all the~$R_{i-1}$-edges sent by $Q$ to the cliques of type~$\beta$ in~$U'_i$ to be also members of $R_i$, that is, to be edges in $\str{A}_i$.
Additionally, we link $Q$ by $R_i$ to the first (with respect to the numbering we have fixed) $f(\beta)-f'(\beta)$ cliques of type~$\beta$ in $U'_i$ 
to which $Q$ is not linked by $R_{i-1}$.
By the choice of~$U'_i$, we have enough such cliques in~$U'_i$. 
%
We finish the construction of~$\str{A}_{i}$
by removing all the cliques from~$U_i \setminus U'_i$. 

That~$\str{A}_i$ has the desired properties is shown in the following two claims.

\medskip\noindent
\emph{Claim 1: Each of the~$\str{A}_i$ is a transitive structure.}

We show this by induction by~$i={-1}, 0, \ldots, d$. Obviously~$\str{A}_{-1}=\str{A}$ is transitive. Assume that~$\str{A}_{i-1}$ is transitive,
and assume to the contrary that~$\str{A}_i$ is not. This means there are cliques~$Q_1, Q_2, Q_3$ in~$\str{A}_i$ such that
$\langle Q_1, Q_2 \rangle \in R_i$,~$\langle Q_2, Q_3 \rangle \in R_i$ but~$\langle Q_1, Q_3 \rangle \not\in R_i$. 
It is easy to see that the cliques~$Q_1, Q_2, Q_3$ must belong to three different  layers, and that precisely
one of the two cases holds: either~$Q_2 \subseteq U'_i$,~$\langle Q_1, Q_2 \rangle \not\in R_{i-1}$,~$\langle Q_2, Q_3 \rangle \in R_{i-1}$ 
or~$Q_3 \subseteq U'_i$,~$\langle Q_1, Q_2 \rangle \in R_{i-1}$,~$\langle Q_2, Q_3 \rangle \not\in R_{i-1}$.
In the first case, our construction implies that there is a clique~$Q' \subseteq U_i \setminus U'_i$ such
that~$\langle Q_1, Q' \rangle \in R_{i-1}$, and the clique-types of~$Q_2$ and~$Q'$ are identical. But from the latter
it follows that~$\langle Q', Q_3 \rangle \in R_{i-1}$ and from transitivity of~$R_{i-1}$ we have~$\langle Q_1, Q_3 \rangle \in R_{i-1}$.
Since none of~$Q_1$,~$Q_3$ is  contained in~$U_i$, by our construction we have that~$\langle Q_1, Q_3 \rangle \in R_{i}$.
Contradiction. In the second case, let~$\beta$ be the clique-type of~$Q_3$ and let~$Q'_1, \ldots, Q'_{k_1}$ be the cliques of type~$\beta$ from~$U'_i$ to which~$Q_2$ sends~$R_{i-1}$-edges,
~$Q''_1, \ldots, Q''_{k_2}$ be the cliques of type~$\beta$ from~$U_i \setminus U'_i$ to which~$Q_2$ sends~$R_{i-1}$-edges, and let
~$Q'''_1, \ldots, Q'''_{k_3}$ be the cliques of type~$\beta$ from~$U'_i$ to which~$Q_1$ sends~$R_{i-1}$-edges, but~$Q_2$ does not.
Note that, by transitivity of~$R_{i-1}$,~$Q_1$ sends~$R_{i-1}$-edges to all of the~$Q'_i$ and all of the~$Q''_i$. 
If~$k_1+k_2+k_3 \ge M_\varphi$ then~$Q_1$ must send, by our construction, an~$R_i$-edge to every clique of type~$\beta$ from~$U'_i$, in particular
to~$Q_3$; contradiction. Thus~$k_1+k_2+k_3 < M_\varphi$ and~$Q_1$ sends at least~$k_2$~$R_i$-edges to cliques of type~$\beta$ from~$U'_i$ to which
it does not sent~$R_{i-1}$-edges.~$Q_2$ sends precisely~$k_2$ such edges. Thus, since our strategy of choosing always cliques of type~$\beta$ with
minimal possible numbers in the numbering we have fixed requires~$Q_2$ to send an~$R_i$-edge to~$Q_3$, the same strategy requires~$Q_1$ also to to send  
an~$R_i$-edge to~$Q_3$. Contradiction.

\medskip\noindent
\emph{Claim 2: The~$M_\varphi$-profile of every clique in~$\str{A}_i$ is the same as its~$M_\varphi$-profile in~$\str{A}$.}
Again we work by induction. Assume that the~$M_\varphi$-profiles of the surviving cliques in~$\str{A}_{i-1}$ are the same as in~$\str{A}$. 
We show that the~$M_\varphi$-profiles of cliques surviving in~$\str{A}_{i}$ are the same as in~$\str{A}_{i-1}$. It is obvious for
the~$\cH$-components and the values of~$\irref$, as we do not change the cliques.
The~$\cA$-components for the cliques from~$U'_0 \ldots U'_i$ cannot change since they send~$R_i$-edges to precisely
the same cliques they send~$R_{i-1}$-edges. 
Similarly, the~$\cB$-components for the cliques from~$U_{i+1} \ldots U_d$ cannot change since they receive~$R_i$-edges  precisely from the same cliques they receive~$R_{i-1}$-edges. 

Consider a clique~$Q$ from~$U'_0 \ldots U'_{i-1}$. Note that~$prof^{M_\varphi}_{\str{A}_{i}}(Q).\cB \subseteq prof^{M_\varphi}_{\str{A}_{i-1}}(Q).\cB$ since
any~$R_i$-edge received by~$Q$ is also an~$R_{i-1}$-edge. To see that~$\supseteq$ also holds take any 1-type~$\alpha \in prof^{M_\varphi}_{\str{A}_{i-1}}(Q).\cB$.
Then there exists a clique~$Q'$ containing a realization of~$\alpha$ such that~$Q'$ sends an~$R_{i-1}$-edge to~$Q$. If~$Q'$ survives
in~$\str{A}_i$ then it sends an~$R_i$-edge to~$Q$. Otherwise~$Q' \subseteq U_i \setminus U'_i$ and there is a clique~$Q''$ of the same
clique-type as~$Q'$ in~$U'_i$. This equality of the clique-types implies that~$\alpha$ is realized in~$Q''$ and~$Q''$ sends an
$R_i$-edge to~$Q$. It follows that~$\alpha \in prof^{M_\varphi}_{\str{A}_{i}}(Q)$. Thus~$prof^{M_\varphi}_{\str{A}_{i-1}}(Q)=prof^{M_\varphi}_{\str{A}_{i}}(Q)$.

Consider a clique~$Q$ from~$U'_i$. Obviously
$prof^{M_\varphi}_{\str{A}_{i-1}}(Q).\cB \subseteq
prof^{M_\varphi}_{\str{A}_{i}}(Q).\cB$ since all the~$R_{i-1}$-edges
received by~$Q$ remain~$R_i$-edges.  To see~$\supseteq$ assume
$\alpha \in prof^{M_\varphi}_{\str{A}_{i}}(Q).\cB$ for some~$1$-type
$\alpha$. Then there exists a clique~$Q'$ containing a realization of
$\alpha$ such that~$Q'$ sends an~$R_i$-edge to~$Q$. If~$Q'$ sends also
an~$R_{i-1}$-edge to~$Q$ then
$\alpha \in prof^{M_\varphi}_{\str{A}_{i-1}}(Q).\cB$.  Otherwise, by
our construction,~$Q'$ sends an~$R_{i-1}$-edge to a clique
$Q'' \subseteq U_i \setminus U'_i$ such that the clique-types of~$Q$
and~$Q''$ are equal. But then~$\alpha$ belongs to the~$\cB$-component
of the clique-type of~$Q''$ and also of~$Q$.  So,
$\alpha \in prof^{M_\varphi}_{\str{A}_{i-1}}(Q).\cB$. It follows that
$prof^{M_\varphi}_{\str{A}_{i-1}}(Q).\cB=prof^{M_\varphi}_{\str{A}_{i}}(Q).\cB$.

Finally, consider a clique~$Q$ from~$U_{i+1} \cup \ldots \cup U_d$. It remains to show that 
$prof^{M_\varphi}_{\str{A}_{i-1}}(Q).\cA=prof^{M_\varphi}_{\str{A}_{i}}(Q).\cA$. By our construction, the~$R_i$-edges sent by~$Q$ to~$U'_0 \cup \ldots \cup
U'_{i-1} \cup U_{i+1} \cup \ldots \cup U_d$ are the same as~$R_{i-1}$-edges sent by~$Q$ to this set. The desired equality of the~$\cA$-components (as multisets) follows now easily from the fact that, for any clique-type~$\beta$, whenever~$Q$ sends precisely~$k$~$R_{i-1}$-edges to cliques of~$U_i$ of type~$\beta$ then 
it sends precisely~$k'$-edges to cliques of~$U'_i$ of type~$\beta$, where~$k'=min(k, M_\varphi)$. 
This finishes the proof of Claim 2. 

The two above claims and  Lemma~\ref{l:profiles} imply that~$\str{A}' = \str{A}_d$ is indeed a model of~$\varphi$. As each of the
$U_i'$ contains a finite number of cliques and each of the cliques is finite, we get that~$\str{A}'$ is finite. 
This finishes the proof of Lemma~\ref{l:fmp}.
\end{proof}

Let us estimate the size of the constructed finite model~$\str{A}'$. For~$U'_0$ we take at most~$M_\varphi$ realizations of every clique type from~$U_0$. 
$M_\varphi$ is bounded exponentially, and the number of possible clique types in~$U_0$ is bounded doubly exponentially in~$|\varphi|$
(note that such cliques do not send any edges). Then, to construct~$U'_i$ we consider clique types distinguished, in particular,
by the sets of cliques from~$U'_0 \cup \ldots U'_{i-1}$ to which a given clique sends edges. Thus, the number of cliques in~$U'_i$ may become
exponentially larger than the number of cliques in~$U'_{i-1}$. Thus, we can only estimate the number of cliques in our eventual
finite model by a tower of exponents of height~$d$ (recall that our  bound on~$d$ is exponential in~$|\varphi|$, though a polynomial bound would not be difficult to obtain). We leave open the question if a construction building smaller (e.g., doubly exponential in~$|\varphi|$) models exist.

A careful inspection shows that all our constructions respect  reflexivity and seriality, that is, if we replace the word \emph{transitive}
in the statements of Lemma~\ref{l:ian} and Lemma~\ref{l:fmp} with the phrases \emph{reflexive transitive} or \emph{serial transitive} then they
remain correct.
\begin{theorem}
The logics~$\MLKFour (\Diamond_{\ge}, \RevDiamond)$,~$\MLDFour (\Diamond_{\ge}, \RevDiamond)$,~$\MLSFour (\Diamond_{\ge}, \RevDiamond)$ have the finite model property. Their  local and global satisfiability problems 
are decidable.
\end{theorem}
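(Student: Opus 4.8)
The plan is to combine the normal form of Lemma~\ref{l:normalform} with the finite model construction of Lemma~\ref{l:fmp} and the reductions of Lemma~\ref{l:locglobequiv}. First I would observe that the finite model property for global satisfiability of $\MLKFour(\Diamond_{\ge}, \RevDiamond)$ follows immediately: given a globally satisfiable formula $\varphi$, Lemma~\ref{l:normalform} produces a normal form $\psi$ globally satisfiable over exactly the same transitive frames, Lemma~\ref{l:fmp} turns a (possibly infinite) transitive model of $\psi$ into a \emph{finite} transitive model of $\psi$, and unwinding the normal form gives a finite transitive model of $\varphi$. For local satisfiability I would invoke Lemma~\ref{l:locglobequiv}: local satisfiability is polynomially equivalent to global satisfiability over transitive frames, and the equivalence is witnessed by formulas that are satisfiable over finite frames exactly when the original ones are; hence the local finite model property follows from the global one.

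Next I would handle the reflexive and serial cases, $\MLSFour(\Diamond_{\ge}, \RevDiamond)$ and $\MLDFour(\Diamond_{\ge}, \RevDiamond)$, by appealing to the remark just before the theorem statement: every construction used in Lemma~\ref{l:ian} and Lemma~\ref{l:fmp} preserves reflexivity and seriality (since in Lemma~\ref{l:ian} we only add edges of the form $R^-_\sim$ between already-$R$-related worlds, or delete surviving clique-members while keeping the clique non-empty, and in Lemma~\ref{l:fmp} we only delete whole cliques and re-route edges between cliques, never removing a self-loop or emptying out a serial successor). Lemma~\ref{l:locglobequiv} covers all three logics, and Lemma~\ref{l:normalform} explicitly preserves satisfiability over the same transitive frames, so its reflexive/serial-preserving nature is inherited. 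This yields the finite model property for all three logics and for both local and global satisfiability.

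Finally I would derive decidability. Since each logic has the finite local (resp.\ global) model property, and since one can effectively enumerate all finite transitive (resp.\ reflexive transitive, serial transitive) Kripke structures and check in each whether $\varphi$ holds at some world (resp.\ all worlds), the satisfiability problems are semi-decidable; together with the trivial semi-decidability of unsatisfiability via exhaustive search over finite certificates of the complementary problem — or, more simply, since the finite model property bounds the size of a model to be searched (a tower of exponents by the size estimate following Lemma~\ref{l:fmp}), one obtains an outright decision procedure. I would state that the same argument applies verbatim to the \emph{finite} satisfiability problem.

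The main obstacle in this theorem is not in the wrap-up above — which is essentially bookkeeping — but is already discharged by Lemma~\ref{l:fmp}, whose proof is the technical heart of the paper: controlling, layer by layer, the re-routing of backward edges so that the $M_\varphi$-profile of every surviving clique is preserved, while also preserving transitivity (Claim~1) across the three-layer case analysis. The only residual care needed here is to make sure that the polynomial reductions of Lemma~\ref{l:locglobequiv} and the renaming of Lemma~\ref{l:normalform} genuinely commute with the class restrictions (reflexive, serial), which is exactly what the cited remarks from~\cite{Zolin17} and the closing remark before the theorem guarantee.
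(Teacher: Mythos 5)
Your proposal is correct and follows essentially the same route as the paper, which states this theorem without a separate proof as the direct combination of Lemma~\ref{l:normalform}, Lemma~\ref{l:ian}, Lemma~\ref{l:fmp}, the remark that these constructions respect reflexivity and seriality, and Lemma~\ref{l:locglobequiv} for the transfer to local satisfiability. Your decidability argument via the computable size bound on minimal finite models is also exactly the decision procedure the paper describes after Lemma~\ref{l:fmp}.
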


A natural decision procedure arising from our work is as follows: guess a finite model of the given formula and 
check that it indeed is a model. However, this procedure does not give a good upper complexity bound, since it needs to take into account very large finite models.
The precise complexity can be established using the above-mentioned results from~\cite{GGI19} concerning the description logic~$\mathcal{SIQ}^-$.\\

\begin{theorem}
The local and global satisfiability problems for the logics 
$\MLKFour (\Diamond_{\ge}, \RevDiamond)$,~$\MLDFour (\Diamond_{\ge}, \RevDiamond)$,~$\MLSFour (\Diamond_{\ge}, \RevDiamond)$ are
\TwoExpTime-complete.
\end{theorem}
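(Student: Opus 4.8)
The plan is to pin down the exact complexity by relating the three logics to the description logic~$\mathcal{SIQ}^-$, whose knowledge base satisfiability problem is~\TwoExpTime-complete by~\cite{GGI19}. By Lemma~\ref{l:locglobequiv} the local, global and combined satisfiability problems for each of~$\MLKFour(\Diamond_{\ge}, \RevDiamond)$,~$\MLDFour(\Diamond_{\ge}, \RevDiamond)$ and~$\MLSFour(\Diamond_{\ge}, \RevDiamond)$ are polynomially equivalent, so it suffices to establish the \TwoExpTime{} upper bound for, say, global satisfiability and the \TwoExpTime{} lower bound for, say, local satisfiability. Note that the finite model property from Lemma~\ref{l:fmp} already yields decidability, but only through a tower-of-exponentials bound on the model size, which is why the sharper argument below is needed.

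For the upper bound I would translate a global-satisfiability instance~$\varphi$ into a~$\mathcal{SIQ}^-$ knowledge base over a single role name~$r$ declared transitive: each subformula~$\Diamond_{\ge n}\psi$ becomes a qualified number restriction~$({\ge}\,n\,r\,C_\psi)$ and each~$\Diamond_{\le n}\psi$ its dual~$({\le}\,n\,r\,C_\psi)$ --- counting along the \emph{non-simple} forward role~$r$ being precisely the feature that~$\mathcal{SIQ}^-$ permits --- while the basic converse modality~$\RevDiamond\psi$ becomes the plain existential restriction~$\exists r^{-}.C_\psi$, with~$r^{-}$ never occurring inside a number restriction, which is exactly the syntactic limitation encoded by the ``$-$'' superscript. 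Global truth of~$\varphi$ is forced by the TBox inclusion~$\top \sqsubseteq C_\varphi$; for the~$\MLSFour$ and~$\MLDFour$ variants one additionally declares~$r$ reflexive, resp.\ adds~$\top \sqsubseteq \exists r.\top$. The reduction is clearly polynomial and preserves the binary encoding of the grades, so the \TwoExpTime{} upper bound of~\cite{GGI19} transfers.

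For the lower bound I would argue that the \TwoExpTime-hardness of~$\mathcal{SIQ}^-$ already holds inside the fragment that is the image of the above translation, namely knowledge bases that use a single transitive role~$r$, qualified number restrictions only along~$r$ (binary-coded), the inverse~$r^{-}$ only in ordinary existential/value restrictions, and concept satisfiability with respect to a global TBox only (the TBox becoming a globally-true conjunct and the distinguished concept a locally-true one, cf.\ combined satisfiability in Lemma~\ref{l:locglobequiv}). Concretely I would re-read the hardness construction of~\cite{GGI19} --- a reduction from the acceptance problem for alternating Turing machines running in exponential space, in which binary counters address the~$2^n$ tape cells and the transitive role together with its inverse weaves a tree of machine configurations while keeping distinct branches separated --- and check that it uses neither a second role nor counting along inverse roles. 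Pulling that reduction back through the standard translation~$\st$ then produces, in polynomial time, a single~$\MLKFour(\Diamond_{\ge}, \RevDiamond)$ formula that is (locally, equivalently globally) satisfiable iff the machine accepts; the reflexive and serial variants are handled by relativising to a fresh propositional variable marking the genuine worlds, or simply by observing that the intended models may be taken reflexive and serial.

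The step I expect to be the main obstacle is the lower bound, and specifically the bookkeeping needed to confirm that~\cite{GGI19}'s \TwoExpTime-hardness proof genuinely lives within the one-transitive-relation, forward-counting-only fragment. Should that construction inherently rely on two interacting transitive relations --- a possibility made plausible by the failure of the global finite model property for the bi-modal variant noted earlier --- then this approach would have to be replaced by a direct encoding of alternating exponential-space Turing machines in~$\MLKFour(\Diamond_{\ge}, \RevDiamond)$, using binary-coded graded modalities to implement an exponential counter over the~$R$-successors of a world and the converse modality~$\RevDiamond$ to synchronise neighbouring configuration cells, essentially scaling the torus-style technique of Section~\ref{sec:eucliean_logics} up from~\NExpTime{} to~\TwoExpTime{}.
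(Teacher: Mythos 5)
Your proposal matches the paper's proof essentially exactly: both directions are obtained from the \TwoExpTime-completeness of knowledge base satisfiability for $\mathcal{SIQ}^-$ restricted to a single transitive role~\cite{GGI19}, with the TBox/ABox of the hardness instance absorbed into a combined-satisfiability instance and then transferred to local and global satisfiability via Lemma~\ref{l:locglobequiv}, and the reflexive and serial variants handled by inspecting that the constructions of~\cite{GGI19} respect reflexivity and seriality. The concern you flag about the single-role, forward-counting-only fragment is moot, since the cited result of~\cite{GGI19} already asserts \TwoExpTime-completeness under the single-transitive-role restriction.
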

\begin{proof}
In~\cite{GGI19} it is shown that the knowledge base satisfiability problem for the 
logic~$\mathcal{SIQ}^-$, restricted to a single transitive role,
 is \TwoExpTime-complete.
With this single role restriction, the language of~$\mathcal{SIQ}^-$  becomes a syntactic variant of \MLKFour~$(\Diamond_{\ge}, \RevDiamond)$. 
The knowledge base satisfiability in~$\mathcal{SIQ}^-$ is the question if for a given 
pair~$(\cT, \cA)$, where~$\cT$ is a
\emph{TBox} and~$\cA$ is an \emph{ABox},
there exists a structure containing~$\cA$ and respecting~$\cT$ at every element.
 We do not want to define these notions formally here
and refer the interested reader to~\cite{GGI19} or some other articles on description logics. For our purposes it is sufficient to say that 
$\cT$ consists of implications of the form~$\phi \rightarrow \psi$ and~$\cA$ is a collection of assertions of the form 
$\phi(a)$ or~$T(a,b)$ where~$a, b$ are names for domain elements (which can be used only in~$\cA$),~$\phi(a)$ means that~$\phi$ is satisfied at~$a$,
and~$T(a,b)$ means that there is an edge from~$a$ to~$b$.

To solve  global satisfiability for~$\MLKFour (\Diamond_{\ge}, \RevDiamond)$ we
just translate the input formula~$\phi$ to the knowledge base~$(\{ \top \rightarrow \phi \}, \emptyset )$ and ask for its satisfiability.
Regarding the lower bound, we can easily adapt the lower bound proof from~\cite{GGI19} (Theorem 4) to our scenario.
The proof there goes by a reduction from the acceptance problem for alternating Turing machines with exponentially bounded space, and
uses both  TBoxes and  ABoxes. However, ABoxes are always  of a simple form~$\phi'(a)$. What we can do
is to take the conjunction~$\phi$ of the~$\MLKFour$-counterparts of the implications from the given TBox 
and ask for combined satisfiability of~$\phi$ and~$\phi'$. 
This gives the~$\TwoExpTime$-lower bound for the combined complexity of~$\MLKFour (\Diamond_{\ge}, \RevDiamond)$.
Due to Lemma~\ref{l:locglobequiv} we infer \TwoExpTime-completeness 
of local and global satisfiability in~$\MLKFour (\Diamond_{\ge}, \RevDiamond)$. 

The upper and lower complexity bounds for~$\MLKFour (\Diamond_{\ge}, \RevDiamond)$ and~$\MLSFour (\Diamond_{\ge}, \RevDiamond)$ 
can be obtained by an inspection of the proofs from~\cite{GGI19} and observing that they
work for structures with a reflexive or serial transitive relation.
\end{proof}



\section{Missing lower bounds for logics with converse and without graded modalities} \label{s:nongraded}

To complete the picture we consider in this section the modal language with converse but without graded modalities. 
Over most relevant classes of frames tight complexity bounds for local and global satisfiability of this language are known. However, according to Zolin's survey~\cite{Zolin17}, the three logics of transitive frames \MLKFour~$(\Diamond, \RevDiamond)$, \MLSFour~$(\Diamond, \RevDiamond)$ and \MLDFour~$(\Diamond, \RevDiamond)$ whose global satisfiability is
known to be in \ExpTime{} lack the corresponding lower bound. 
We provide it here. 
We were also not able to find a tight lower bound in the literature for the logics of Euclidean frames, \MLKFive~$(\Diamond, \RevDiamond)$,
\MLDFive~$(\Diamond, \RevDiamond)$.
We also show it here. 
Interestingly, the two reductions are identical, i.e., in both cases we produce the same formulas 
(but the shapes of the intended models differ).

In the conference version of this paper we used a rather heavy reductions from the halting problem for alternating Turing machines working in polynomial space. 
Following the suggestion of one of the referees we looked for an alternative proof by a reduction from global satisfiability of the logic $\MLK(\Diamond)$.
The general idea is  essentially the same as in our previous proof but the reduction is arguably simpler.

 \begin{theorem} \label{t:expharda} 
 The global satisfiability problem for~$\MLKFour(\Diamond, \RevDiamond)$,~$\MLDFour(\Diamond, \RevDiamond)$
   and~$\MLSFour(\Diamond, \RevDiamond)$ is~\ExpTime-hard.
\end{theorem}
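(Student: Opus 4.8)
The plan is to reduce from the global satisfiability problem for $\MLK(\Diamond)$, which is \ExpTime-hard~\cite{Blackburn06} (alternatively one could start from \MLTB, as noted in the footnote). Given a formula $\varphi$ of the basic one-way language I would produce a formula $\varphi^{\dagger}$ of the language $(\Diamond,\RevDiamond)$ --- the \emph{same} $\varphi^{\dagger}$ for all three logics $\MLKFour$, $\MLDFour$, $\MLSFour$, and, as promised in the text, the same one that later serves for the Euclidean logics --- such that $\varphi$ is globally satisfiable over all frames iff $\varphi^{\dagger}$ is globally satisfiable over transitive (resp.\ reflexive transitive, serial transitive) frames.

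First I would invoke the standard unravelling argument: $\varphi$ is globally satisfiable iff it is globally satisfiable in a forest-shaped --- in particular irreflexive and tree-like --- model $\str{M}=\tuple{W,S,V}$. I would then ``host'' such an $\str{M}$ inside a transitive model whose accessibility relation is the transitive closure $R=S^{+}$ of the tree edges. The difficulty is that $R$, unlike $S$, cannot distinguish a one-step move from a longer path; so the reduction must recover the one-step relation inside the hosting model and write $\varphi^{\dagger}$ purely in terms of $R$, the converse modality $\RevDiamond$, and a few fresh propositional variables serving only for bookkeeping. Concretely I would colour the worlds (one fresh variable) so that tree edges always join two worlds of opposite colour; a \emph{genuine} one-step $R$-edge from $w$ to $v$ is then precisely an $R$-edge such that no world of the colour opposite to $v$ lies $R$-strictly-between $w$ and $v$, a condition that the converse modality can be made to express. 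The translation $\varphi\mapsto\varphi^{\dagger}$ rewrites each $\Diamond\chi$ into the $(\Diamond,\RevDiamond)$-formula asserting ``some one-step $R$-successor satisfies the translation of $\chi$'', and finally asserts the rewritten formula at every world; a ``spy'' world placed $R$-above every other world supplies the universal modality needed to express ``at every world'' (this is compatible with transitivity, and an analogous device works for Euclidean frames).

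Soundness ($\Leftarrow$) would be a structural induction: from a transitive model of $\varphi^{\dagger}$ one reads off the one-step relation $S$ defined from $R$ and the colouring, and checks that the resulting Kripke structure globally satisfies $\varphi$. Completeness ($\Rightarrow$) is the dual induction: in the hosting model built from the unravelled $\str{M}$ (take $R=S^{+}$ and colour by depth parity) the one-step relation recovered from $R$ is exactly $S$, so $\varphi^{\dagger}$ holds globally. For $\MLDFour$ one adds a reflexive world reachable from everything, and for $\MLSFour$ one takes the reflexive closure, neither of which disturbs the translation; by Lemma~\ref{l:locglobequiv} the bounds then also hold for local (and combined) satisfiability. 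For the Euclidean logics the very same $\varphi^{\dagger}$ is used, with a witnessing model of the lantern/clique shape of Lemma~\ref{lem:PropOfEuclStr} instead of the deep transitive one.

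The step I expect to be the main obstacle is pinning down the $(\Diamond,\RevDiamond)$-translation of $\Diamond$ so that it captures ``direct $R$-successor'' neither too liberally (spurious long-range successors would make $\varphi^{\dagger}$ too weak and break completeness) nor too strictly (which would break soundness), and so that it does so \emph{uniformly}, i.e.\ without being permitted to refer to the world at which $\Diamond$ is being evaluated --- getting this right is exactly the point of the colouring together with the backward modality. The remaining ingredients --- the unravelling, the two structural inductions, the seriality/reflexivity fixes, and the Euclidean variant --- are routine.
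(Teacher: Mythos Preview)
Your high-level plan---reduce from global satisfiability of $\MLK(\Diamond)$, start from a tree model, use a colouring to recover tree edges---matches the paper's. But the execution you sketch diverges at the crucial point and, as you yourself flag, the obstacle you leave open is real.

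You propose to keep all tree edges pointing forward, take $R=S^{+}$, and then recover direct successors via a two-colouring (depth parity) together with the condition ``no world of the opposite colour lies $R$-strictly between $w$ and $v$''. That is a three-place condition; expressing it at $w$ requires, at $v$, referring back to the specific $w$ (or conversely), which the language $(\Diamond,\RevDiamond)$ does not support. I do not see how to close this gap with only two colours and the honest transitive closure, and you do not indicate how either. The ``spy'' world is also unnecessary---global satisfiability already asserts the formula at every world---and adding an $R$-maximal world would give every world a new direct $R$-successor, wrecking whatever recovery you manage.

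The paper sidesteps the obstacle rather than solving it. It uses \emph{four} colours $c_0,c_1,c_2,c_3$ (depth mod $4$) and \emph{reverses} every second level of tree edges, so that in the intended model the only non-reflexive edges are $c_0\!\to\!c_1$, $c_2\!\to\!c_1$, $c_2\!\to\!c_3$, $c_0\!\to\!c_3$. There are then no length-two $R$-paths at all, so the reflexive closure is already transitive and nothing has to be ``recovered'': the translation of $\Diamond\psi$ is simply the fourfold case split $\bigwedge_i\bigl(c_i\to \heartsuit_i(c_{i+1\bmod 4}\wedge\tr(\psi))\bigr)$ with $\heartsuit_i\in\{\Diamond,\RevDiamond\}$ chosen according to the edge orientation. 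Four colours (not two) are needed so that, e.g., a $c_1$-world can tell its tree-parent ($c_0$, a predecessor) apart from its tree-children ($c_2$, also predecessors). One further point you omit: the paper first flattens $\varphi$ to modal depth one, since the four-way case split would otherwise cause exponential blow-up.
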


\begin{proof}
We recall that global satisfiability problem for $\MLK(\Diamond)$ is \ExpTime-hard. We reduce this problem simultaneously to
global satisfiability of the three logics we consider.
 
Take any modal formula $\varphi$ of $\MLK(\Diamond)$. Without loss of generality we assume that $\varphi$ contains no nested occurrences of $\Diamond$ and $\Box$.
(Indeed, if $\varphi$ contains a nested occurrence of a modal operator, that is it contains a subformula $\Diamond \psi$ or $\Box \psi$ in the scope of another
$\Diamond$ or $\Box$, then we replace that subformula by a
fresh variable $p$ and append the conjunct~$p \leftrightarrow \Diamond \psi$, resp., $p \leftrightarrow \Diamond \psi$. Successively treating in this way all
occurrences of modal operators we eventually end up with a formula equisatisfiable to $\varphi$ in which they are not nested.)

Assuming that $c_0, c_1, c_2$ and~$c_3$ are fresh propositional variables not occurring in $\varphi$ we define the translation $\tr(\varphi)$ as~follows:
\begin{itemize}
\item $\tr(p) = p$ for all propositional variables $p$, 
\item $\tr(\varphi' \lor \varphi'') = \tr(\varphi') \lor \tr(\varphi'')$ and analogously for $\lor, \rightarrow, \leftrightarrow$,
\item $\tr(\neg \varphi') = \neg \tr(\varphi')$,
\item $\tr(\Diamond\varphi') = [c_0 \rightarrow \Diamond(c_1 \land \tr(\varphi'))] \land [c_1 \rightarrow \RevDiamond(c_2 \land \tr(\varphi'))] \land [c_2 \rightarrow \Diamond(c_3 \land \tr(\varphi'))] \land [c_3 \rightarrow \RevDiamond(c_0 \land \tr(\varphi'))]$
and analogously for $\Box\varphi'$
\end{itemize}

Let $\varphi^* = \tr(\varphi) \land (\bigvee_{0 \leq i \leq 3} c_i) \land (\bigwedge_{0 \leq i < j \leq 3} (\neg c_i \vee \neg c_j))$. 
Note that $\varphi^*$ is composed of the translated $\varphi$ and a formula stipulating that for each node exactly one of $c_i$ holds true.
The size of $\varphi^*$ is clearly polynomial in $|\varphi|$ since  $\Diamond$ and $\Box$ have no nested occurrences in~$\varphi$.

\begin{cla}
If $\varphi$ is globally satisfiable in $\MLK(\Diamond)$ then $\varphi^*$ is globally satisfiable in $\MLKFour(\Diamond, \RevDiamond)$,~$\MLDFour(\Diamond, \RevDiamond)$ and~$\MLSFour(\Diamond, \RevDiamond)$.
\end{cla}
\begin{proof}
Let $\str{A}= \langle W, R, V \rangle$ be a model of $\varphi$. 
We assume that $\str{A}$ is tree-shaped 
(this is done without loss of generality since $\MLK(\Diamond)$ has the tree-shaped model property). 
Let $w_r$ denotes the root of $\str{A}$.

For any world $w$ define its distance from the root, denoted with $d(w)$, as the length of the $R$-path from $w_r$, i.e., $d(w_r) = 0$, $d(w) = 1$ iff $R(w_r, w)$ holds, $d(w) = 2$ iff there is a world $v$ such that $R(w_r,v), R(v,w)$ etc.
We define the Kripke structure $\str{A}' = (W', R', V')$ by inverting every second $R$-edge of $\str{A}$ and labelling the worlds on every path, leading from the root, repetitively $c_0, c_1, c_2, c_3, c_0, \ldots$. Formally:
\begin{itemize}
    \item $W = W'$,
    \item For every propositional variable $p \not\in \{ c_0, c_1, c_2, c_3 \}$ we set $V'(p) = V(p)$
    while for the variables $c_i$ we set $V'(c_i) = \{ w : d(w) \; \mathop{mod} \; 4 = i \}$ for $i \in \{ 0,1,2,3 \}$,
    \item $R'$ is the reflexive closure of $R_{(0,1)} \cup R_{(1,2)}^{-1} \cup R_{(2,3)} \cup R_{(3, 0)}^{-1} $, with $R_{(i,j)} = R \cap V'(c_i) \times V'(c_j)$.
\end{itemize}
The shape of the obtained model is illustrated in Fig.~\ref{fig:trees}.

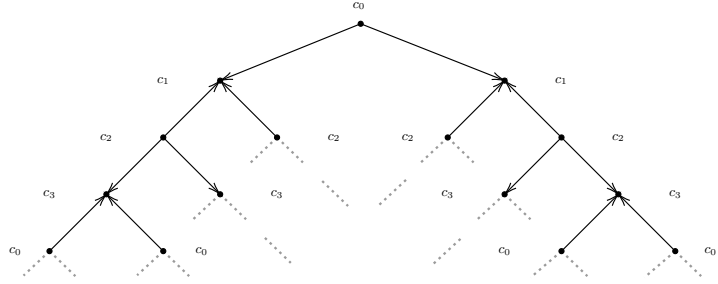
\begin{figure}[h!]   
\begin{center}
\resizebox{0.72\linewidth}{!}{

\tikzset{every picture/.style={line width=0.75pt}} 

\begin{tikzpicture}[x=0.75pt,y=0.75pt,yscale=-1,xscale=1]

\draw [color={rgb, 255:red, 155; green, 155; blue, 155 }  ,draw opacity=1 ][line width=1.5]  [dash pattern={on 1.69pt off 2.76pt}]  (600,250) -- (572.91,273.14) ;

\draw [color={rgb, 255:red, 155; green, 155; blue, 155 }  ,draw opacity=1 ][line width=1.5]  [dash pattern={on 1.69pt off 2.76pt}]  (500,250) -- (522.75,272) ;

\draw [color={rgb, 255:red, 155; green, 155; blue, 155 }  ,draw opacity=1 ][line width=1.5]  [dash pattern={on 1.69pt off 2.76pt}]  (50,250) -- (72.75,272) ;

\draw [color={rgb, 255:red, 155; green, 155; blue, 155 }  ,draw opacity=1 ][line width=1.5]  [dash pattern={on 1.69pt off 2.76pt}]  (450,200) -- (472.75,222) ;

\draw [color={rgb, 255:red, 155; green, 155; blue, 155 }  ,draw opacity=1 ][line width=1.5]  [dash pattern={on 1.69pt off 2.76pt}]  (400,150) -- (422.75,172) ;

\draw [color={rgb, 255:red, 155; green, 155; blue, 155 }  ,draw opacity=1 ][line width=1.5]  [dash pattern={on 1.69pt off 2.76pt}]  (250,150) -- (224.21,174.54) ;

\draw [color={rgb, 255:red, 155; green, 155; blue, 155 }  ,draw opacity=1 ][line width=1.5]  [dash pattern={on 1.69pt off 2.76pt}]  (200,200) -- (175.58,222.58) ;

\draw [color={rgb, 255:red, 155; green, 155; blue, 155 }  ,draw opacity=1 ][line width=1.5]  [dash pattern={on 1.69pt off 2.76pt}]  (150,250) -- (127,271.2) ;

\draw [color={rgb, 255:red, 155; green, 155; blue, 155 }  ,draw opacity=1 ][line width=1.5]  [dash pattern={on 1.69pt off 2.76pt}]  (50,250) -- (27.17,271.5) ;

\draw [color={rgb, 255:red, 155; green, 155; blue, 155 }  ,draw opacity=1 ][line width=1.5]  [dash pattern={on 1.69pt off 2.76pt}]  (600,250) -- (622.75,272) ;

\draw [color={rgb, 255:red, 155; green, 155; blue, 155 }  ,draw opacity=1 ][line width=1.5]  [dash pattern={on 1.69pt off 2.76pt}]  (500,250) -- (472.91,273.14) ;

\draw [color={rgb, 255:red, 155; green, 155; blue, 155 }  ,draw opacity=1 ][line width=1.5]  [dash pattern={on 1.69pt off 2.76pt}]  (450,200) -- (425.54,222.5) ;

\draw [color={rgb, 255:red, 155; green, 155; blue, 155 }  ,draw opacity=1 ][line width=1.5]  [dash pattern={on 1.69pt off 2.76pt}]  (400,150) -- (376.21,172.17) ;

\draw [color={rgb, 255:red, 155; green, 155; blue, 155 }  ,draw opacity=1 ][line width=1.5]  [dash pattern={on 1.69pt off 2.76pt}]  (250,150) -- (272.75,172) ;

\draw [color={rgb, 255:red, 155; green, 155; blue, 155 }  ,draw opacity=1 ][line width=1.5]  [dash pattern={on 1.69pt off 2.76pt}]  (200,200) -- (222.75,222) ;

\draw [color={rgb, 255:red, 155; green, 155; blue, 155 }  ,draw opacity=1 ][line width=1.5]  [dash pattern={on 1.69pt off 2.76pt}]  (150,250) -- (172.75,272) ;

\draw    (451.41,101.41) -- (500,150) ;

\draw [shift={(450,100)}, rotate = 45] [color={rgb, 255:red, 0; green, 0; blue, 0 }  ][line width=0.75]    (10.93,-3.29) .. controls (6.95,-1.4) and (3.31,-0.3) .. (0,0) .. controls (3.31,0.3) and (6.95,1.4) .. (10.93,3.29)   ;
\draw    (323.5,50) -- (448.14,99.26) ;
\draw [shift={(450,100)}, rotate = 201.57] [color={rgb, 255:red, 0; green, 0; blue, 0 }  ][line width=0.75]    (10.93,-3.29) .. controls (6.95,-1.4) and (3.31,-0.3) .. (0,0) .. controls (3.31,0.3) and (6.95,1.4) .. (10.93,3.29)   ;

\draw    (500,150) -- (548.59,198.59) ;
\draw [shift={(550,200)}, rotate = 225] [color={rgb, 255:red, 0; green, 0; blue, 0 }  ][line width=0.75]    (10.93,-3.29) .. controls (6.95,-1.4) and (3.31,-0.3) .. (0,0) .. controls (3.31,0.3) and (6.95,1.4) .. (10.93,3.29)   ;

\draw    (551.41,201.41) -- (600,250) ;

\draw [shift={(550,200)}, rotate = 45] [color={rgb, 255:red, 0; green, 0; blue, 0 }  ][line width=0.75]    (10.93,-3.29) .. controls (6.95,-1.4) and (3.31,-0.3) .. (0,0) .. controls (3.31,0.3) and (6.95,1.4) .. (10.93,3.29)   ;
\draw    (448.59,101.41) -- (400,150) ;

\draw [shift={(450,100)}, rotate = 135] [color={rgb, 255:red, 0; green, 0; blue, 0 }  ][line width=0.75]    (10.93,-3.29) .. controls (6.95,-1.4) and (3.31,-0.3) .. (0,0) .. controls (3.31,0.3) and (6.95,1.4) .. (10.93,3.29)   ;
\draw    (500,150) -- (451.41,198.59) ;
\draw [shift={(450,200)}, rotate = 315] [color={rgb, 255:red, 0; green, 0; blue, 0 }  ][line width=0.75]    (10.93,-3.29) .. controls (6.95,-1.4) and (3.31,-0.3) .. (0,0) .. controls (3.31,0.3) and (6.95,1.4) .. (10.93,3.29)   ;

\draw    (548.59,201.41) -- (500,250) ;

\draw [shift={(550,200)}, rotate = 135] [color={rgb, 255:red, 0; green, 0; blue, 0 }  ][line width=0.75]    (10.93,-3.29) .. controls (6.95,-1.4) and (3.31,-0.3) .. (0,0) .. controls (3.31,0.3) and (6.95,1.4) .. (10.93,3.29)   ;
\draw  [color={rgb, 255:red, 0; green, 0; blue, 0 }  ,draw opacity=1 ][fill={rgb, 255:red, 0; green, 0; blue, 0 }  ,fill opacity=1 ] (447.79,100) .. controls (447.79,98.78) and (448.78,97.79) .. (450,97.79) .. controls (451.22,97.79) and (452.21,98.78) .. (452.21,100) .. controls (452.21,101.22) and (451.22,102.21) .. (450,102.21) .. controls (448.78,102.21) and (447.79,101.22) .. (447.79,100) -- cycle ;
\draw  [color={rgb, 255:red, 0; green, 0; blue, 0 }  ,draw opacity=1 ][fill={rgb, 255:red, 0; green, 0; blue, 0 }  ,fill opacity=1 ] (497.79,150) .. controls (497.79,148.78) and (498.78,147.79) .. (500,147.79) .. controls (501.22,147.79) and (502.21,148.78) .. (502.21,150) .. controls (502.21,151.22) and (501.22,152.21) .. (500,152.21) .. controls (498.78,152.21) and (497.79,151.22) .. (497.79,150) -- cycle ;
\draw  [color={rgb, 255:red, 0; green, 0; blue, 0 }  ,draw opacity=1 ][fill={rgb, 255:red, 0; green, 0; blue, 0 }  ,fill opacity=1 ] (397.79,150) .. controls (397.79,148.78) and (398.78,147.79) .. (400,147.79) .. controls (401.22,147.79) and (402.21,148.78) .. (402.21,150) .. controls (402.21,151.22) and (401.22,152.21) .. (400,152.21) .. controls (398.78,152.21) and (397.79,151.22) .. (397.79,150) -- cycle ;
\draw  [color={rgb, 255:red, 0; green, 0; blue, 0 }  ,draw opacity=1 ][fill={rgb, 255:red, 0; green, 0; blue, 0 }  ,fill opacity=1 ] (447.79,200) .. controls (447.79,198.78) and (448.78,197.79) .. (450,197.79) .. controls (451.22,197.79) and (452.21,198.78) .. (452.21,200) .. controls (452.21,201.22) and (451.22,202.21) .. (450,202.21) .. controls (448.78,202.21) and (447.79,201.22) .. (447.79,200) -- cycle ;
\draw  [color={rgb, 255:red, 0; green, 0; blue, 0 }  ,draw opacity=1 ][fill={rgb, 255:red, 0; green, 0; blue, 0 }  ,fill opacity=1 ] (547.79,200) .. controls (547.79,198.78) and (548.78,197.79) .. (550,197.79) .. controls (551.22,197.79) and (552.21,198.78) .. (552.21,200) .. controls (552.21,201.22) and (551.22,202.21) .. (550,202.21) .. controls (548.78,202.21) and (547.79,201.22) .. (547.79,200) -- cycle ;
\draw  [color={rgb, 255:red, 0; green, 0; blue, 0 }  ,draw opacity=1 ][fill={rgb, 255:red, 0; green, 0; blue, 0 }  ,fill opacity=1 ] (497.79,250) .. controls (497.79,248.78) and (498.78,247.79) .. (500,247.79) .. controls (501.22,247.79) and (502.21,248.78) .. (502.21,250) .. controls (502.21,251.22) and (501.22,252.21) .. (500,252.21) .. controls (498.78,252.21) and (497.79,251.22) .. (497.79,250) -- cycle ;
\draw  [color={rgb, 255:red, 0; green, 0; blue, 0 }  ,draw opacity=1 ][fill={rgb, 255:red, 0; green, 0; blue, 0 }  ,fill opacity=1 ] (597.79,250) .. controls (597.79,248.78) and (598.78,247.79) .. (600,247.79) .. controls (601.22,247.79) and (602.21,248.78) .. (602.21,250) .. controls (602.21,251.22) and (601.22,252.21) .. (600,252.21) .. controls (598.78,252.21) and (597.79,251.22) .. (597.79,250) -- cycle ;
\draw    (323.5,50) -- (201.85,99.25) ;
\draw [shift={(200,100)}, rotate = 337.96000000000004] [color={rgb, 255:red, 0; green, 0; blue, 0 }  ][line width=0.75]    (10.93,-3.29) .. controls (6.95,-1.4) and (3.31,-0.3) .. (0,0) .. controls (3.31,0.3) and (6.95,1.4) .. (10.93,3.29)   ;

\draw    (198.59,101.41) -- (150,150) ;

\draw [shift={(200,100)}, rotate = 135] [color={rgb, 255:red, 0; green, 0; blue, 0 }  ][line width=0.75]    (10.93,-3.29) .. controls (6.95,-1.4) and (3.31,-0.3) .. (0,0) .. controls (3.31,0.3) and (6.95,1.4) .. (10.93,3.29)   ;
\draw    (150,150) -- (101.41,198.59) ;
\draw [shift={(100,200)}, rotate = 315] [color={rgb, 255:red, 0; green, 0; blue, 0 }  ][line width=0.75]    (10.93,-3.29) .. controls (6.95,-1.4) and (3.31,-0.3) .. (0,0) .. controls (3.31,0.3) and (6.95,1.4) .. (10.93,3.29)   ;

\draw    (98.59,201.41) -- (50,250) ;

\draw [shift={(100,200)}, rotate = 135] [color={rgb, 255:red, 0; green, 0; blue, 0 }  ][line width=0.75]    (10.93,-3.29) .. controls (6.95,-1.4) and (3.31,-0.3) .. (0,0) .. controls (3.31,0.3) and (6.95,1.4) .. (10.93,3.29)   ;
\draw    (201.41,101.41) -- (250,150) ;

\draw [shift={(200,100)}, rotate = 45] [color={rgb, 255:red, 0; green, 0; blue, 0 }  ][line width=0.75]    (10.93,-3.29) .. controls (6.95,-1.4) and (3.31,-0.3) .. (0,0) .. controls (3.31,0.3) and (6.95,1.4) .. (10.93,3.29)   ;
\draw    (150,150) -- (198.59,198.59) ;
\draw [shift={(200,200)}, rotate = 225] [color={rgb, 255:red, 0; green, 0; blue, 0 }  ][line width=0.75]    (10.93,-3.29) .. controls (6.95,-1.4) and (3.31,-0.3) .. (0,0) .. controls (3.31,0.3) and (6.95,1.4) .. (10.93,3.29)   ;

\draw    (101.41,201.41) -- (150,250) ;

\draw [shift={(100,200)}, rotate = 45] [color={rgb, 255:red, 0; green, 0; blue, 0 }  ][line width=0.75]    (10.93,-3.29) .. controls (6.95,-1.4) and (3.31,-0.3) .. (0,0) .. controls (3.31,0.3) and (6.95,1.4) .. (10.93,3.29)   ;
\draw  [color={rgb, 255:red, 0; green, 0; blue, 0 }  ,draw opacity=1 ][fill={rgb, 255:red, 0; green, 0; blue, 0 }  ,fill opacity=1 ] (321.29,50) .. controls (321.29,48.78) and (322.28,47.79) .. (323.5,47.79) .. controls (324.72,47.79) and (325.71,48.78) .. (325.71,50) .. controls (325.71,51.22) and (324.72,52.21) .. (323.5,52.21) .. controls (322.28,52.21) and (321.29,51.22) .. (321.29,50) -- cycle ;
\draw  [color={rgb, 255:red, 0; green, 0; blue, 0 }  ,draw opacity=1 ][fill={rgb, 255:red, 0; green, 0; blue, 0 }  ,fill opacity=1 ] (197.79,100) .. controls (197.79,98.78) and (198.78,97.79) .. (200,97.79) .. controls (201.22,97.79) and (202.21,98.78) .. (202.21,100) .. controls (202.21,101.22) and (201.22,102.21) .. (200,102.21) .. controls (198.78,102.21) and (197.79,101.22) .. (197.79,100) -- cycle ;
\draw  [color={rgb, 255:red, 0; green, 0; blue, 0 }  ,draw opacity=1 ][fill={rgb, 255:red, 0; green, 0; blue, 0 }  ,fill opacity=1 ] (247.79,150) .. controls (247.79,148.78) and (248.78,147.79) .. (250,147.79) .. controls (251.22,147.79) and (252.21,148.78) .. (252.21,150) .. controls (252.21,151.22) and (251.22,152.21) .. (250,152.21) .. controls (248.78,152.21) and (247.79,151.22) .. (247.79,150) -- cycle ;
\draw  [color={rgb, 255:red, 0; green, 0; blue, 0 }  ,draw opacity=1 ][fill={rgb, 255:red, 0; green, 0; blue, 0 }  ,fill opacity=1 ] (147.79,150) .. controls (147.79,148.78) and (148.78,147.79) .. (150,147.79) .. controls (151.22,147.79) and (152.21,148.78) .. (152.21,150) .. controls (152.21,151.22) and (151.22,152.21) .. (150,152.21) .. controls (148.78,152.21) and (147.79,151.22) .. (147.79,150) -- cycle ;
\draw  [color={rgb, 255:red, 0; green, 0; blue, 0 }  ,draw opacity=1 ][fill={rgb, 255:red, 0; green, 0; blue, 0 }  ,fill opacity=1 ] (197.79,200) .. controls (197.79,198.78) and (198.78,197.79) .. (200,197.79) .. controls (201.22,197.79) and (202.21,198.78) .. (202.21,200) .. controls (202.21,201.22) and (201.22,202.21) .. (200,202.21) .. controls (198.78,202.21) and (197.79,201.22) .. (197.79,200) -- cycle ;
\draw  [color={rgb, 255:red, 0; green, 0; blue, 0 }  ,draw opacity=1 ][fill={rgb, 255:red, 0; green, 0; blue, 0 }  ,fill opacity=1 ] (147.79,250) .. controls (147.79,248.78) and (148.78,247.79) .. (150,247.79) .. controls (151.22,247.79) and (152.21,248.78) .. (152.21,250) .. controls (152.21,251.22) and (151.22,252.21) .. (150,252.21) .. controls (148.78,252.21) and (147.79,251.22) .. (147.79,250) -- cycle ;
\draw  [color={rgb, 255:red, 0; green, 0; blue, 0 }  ,draw opacity=1 ][fill={rgb, 255:red, 0; green, 0; blue, 0 }  ,fill opacity=1 ] (47.79,250) .. controls (47.79,248.78) and (48.78,247.79) .. (50,247.79) .. controls (51.22,247.79) and (52.21,248.78) .. (52.21,250) .. controls (52.21,251.22) and (51.22,252.21) .. (50,252.21) .. controls (48.78,252.21) and (47.79,251.22) .. (47.79,250) -- cycle ;
\draw  [color={rgb, 255:red, 0; green, 0; blue, 0 }  ,draw opacity=1 ][fill={rgb, 255:red, 0; green, 0; blue, 0 }  ,fill opacity=1 ] (97.79,200) .. controls (97.79,198.78) and (98.78,197.79) .. (100,197.79) .. controls (101.22,197.79) and (102.21,198.78) .. (102.21,200) .. controls (102.21,201.22) and (101.22,202.21) .. (100,202.21) .. controls (98.78,202.21) and (97.79,201.22) .. (97.79,200) -- cycle ;
\draw [color={rgb, 255:red, 155; green, 155; blue, 155 }  ,draw opacity=1 ][line width=1.5]  [dash pattern={on 1.69pt off 2.76pt}]  (239.33,238.33) -- (262.08,260.33) ;

\draw [color={rgb, 255:red, 155; green, 155; blue, 155 }  ,draw opacity=1 ][line width=1.5]  [dash pattern={on 1.69pt off 2.76pt}]  (289.67,188.33) -- (312.42,210.33) ;

\draw [color={rgb, 255:red, 155; green, 155; blue, 155 }  ,draw opacity=1 ][line width=1.5]  [dash pattern={on 1.69pt off 2.76pt}]  (411.21,239.25) -- (386.75,261.75) ;

\draw [color={rgb, 255:red, 155; green, 155; blue, 155 }  ,draw opacity=1 ][line width=1.5]  [dash pattern={on 1.69pt off 2.76pt}]  (363.33,187.33) -- (339.54,209.5) ;

\draw (322,34) node   {$c_{0}$};
\draw (500,100) node   {$c_{1}$};
\draw (100,150) node   {$c_{2}$};
\draw (50,200) node   {$c_{3}$};
\draw (20,250) node   {$c_{0}$};
\draw (300,150) node   {$c_{2}$};
\draw (250,200) node   {$c_{3}$};
\draw (183.83,250) node   {$c_{0}$};
\draw (550,150) node   {$c_{2}$};
\draw (600,200) node   {$c_{3}$};
\draw (631,250) node   {$c_{0}$};
\draw (365,150) node   {$c_{2}$};
\draw (400,200) node   {$c_{3}$};
\draw (450,250) node   {$c_{0}$};
\draw (150,100) node   {$c_{1}$};
\end{tikzpicture}
}
\end{center}
\caption{Shape of intended models in the proof of Theorem~\ref{t:expharda}. All worlds are reflexive.} 
\label{fig:trees}
\end{figure}

Now we show that $\varphi^{*}$ is globally satisfiable in $\MLKFour(\Diamond, \RevDiamond)$,~$\MLDFour(\Diamond, \RevDiamond)$ and~$\MLSFour(\Diamond, \RevDiamond)$.
First, note that due to the construction $R'$ is transitive and reflexive (and thus also serial).
Next, note that the second part of the formula $\varphi^*$ is globally satisfied in $\str{A}'$, since every world belongs to exactly one of $V'(c_i)$ (due to the fact that the satisfaction of $c_i$ depends on a distance from the root, which is unique since $\str{A}$ is assumed to be  tree-shaped).
Finally, we show that $\str{A}' \models \tr(\varphi)$. The proof is by induction, where the inductive hypothesis states that for any subformula $\psi$ of $\varphi$
and every world $w$ we have $\str{A}, w \models \psi$ if and only if $\str{A}', w \models \tr(\psi)$. 
The case of $\psi$ being a propositional variable follows from the second item of definition of $\str{A}'$. 
The case when $\psi$ is a Boolean combination of formulas is immediate from the inductive hypothesis and the semantics of~$\models$. 
Hence, the only interesting case is when $\psi$ is of the form $\Diamond(\psi')$.
We prove only one implication; the second one is analogous. Assume that $\str{A}, w \models \Diamond(\psi')$.
Thus there is a world $v$ such that $R(w,v)$ and $\str{A}, v \models \psi'$.
By induction hypothesis we deduce that $\str{A}', v \models \tr(\psi')$. 
Moreover, for $i = d(w) \; \mathop{mod} \; 4 $ and $j = d(v) \; \mathop{mod} \; 4 = (i+1) \; \mathop{mod} \; 4$
we have $\str{A}', w \models c_i$ and $\str{A}', v \models c_j$. Moreover, if $i$ is even then $(w,v) \in R'$ and $(v,w) \in R'$ otherwise.
In each of the cases~$i \in \{ 0, 1, 2, 3\}$  these all imply that $\str{A}', w \models \tr(\psi)$, which finishes the proof.
\end{proof}

\begin{cla}
If  $\varphi^*$ is globally satisfiable in $\MLKFour(\Diamond, \RevDiamond)$,~$\MLDFour(\Diamond, \RevDiamond)$ or~$\MLSFour(\Diamond, \RevDiamond)$ then $\varphi$ is globally satisfiable in $\MLK(\Diamond)$.
\end{cla}
\begin{proof}
Let $\str{A}= \langle W, R, V \rangle$ be a model of $\varphi^*$. 
We will define an increasing chain of structures $\str{A}'_0$, $\str{A}'_1$, $\ldots$, in which $\str{A}_i'=\langle W'_i, R'_i, V'_i \rangle$,  together with a 
pattern  function $f: A_0 \cup A_1 \cup \ldots \rightarrow W$. The  Kripke structure $\str{A}' = \langle W', R', V' \rangle$ defined as the union of the chain  
will turn out to be a model of $\varphi$. Our chain of structures is defined as follows.

 We fix a world $w \in W$, set $\str{A}_0' = \langle \{ w' \}, \emptyset, V_0' \rangle$ with $V_0'(w') = V(w)$ and set $f(w')=w$.
For simplicity let us assume that $\str{A}, w \models c_0$. In our construction,  
for every element $w'$ freshly added to $\str{A}'_i$, we will have that $f(w')$ satisfy $c_{i \mod 4}$.

Assume now that $\str{A}'_i$ is defined. To construct $\str{A}'_{i+1}$ we repeat for every element $w'$ freshly added to $\str{A}'_i$:
if $i$ is even (odd) then for every $R$-successor ($R$-predecessor) $v$ of $f(w')$ in $\str{A}$ such that $\str{A}, v \models c_{i+1 \mod 4}$ add to $W_{i+1}$ a fresh $R$-successor $v'$ of $w'$  and let $V'_i(v')=V(v)$ and $f(v')=v$.

We prove inductively over the shape of $\psi$ that $\str{A}', w' \models \psi$ iff $\str{A}, f(w') \models \tr(\psi)$.
The case of atomic propositions and Boolean combinations follows immediately from the definition.
The only interesting case is of $\psi = \Diamond(\psi')$.
Here we show only one case of one implication; the other cases are analogous.
Assume that $\str{A}, f(w') \models \tr(\psi)$ holds as well as~$f(w') \models c_0$.
Then there is an $R$-successor $v$ of $f(w')$ satisfying $\tr(\psi') \land c_1$.
Note that the $R'$-successors of $w'$ are copies of $R$--successors of $f(w')$ satisfying $c_1$, thus
there is a world~$v'$ being an $R'$-successor of $f(w')$ and satisfying $f(v') = v$.
Hence, from the inductive assumption, we infer $\str{A}', v' \models \psi'$, which implies $\str{A}', w' \models \psi$.
Analyzing analogously the other cases  we finish the inductive proof of the claim.
\end{proof}

The two proceeding claims show the correctness of the translation, allowing us to conclude Theorem~\ref{t:expharda}.
\end{proof}

We next handle the case of Euclidean frames.

\begin{theorem} \label{t:k5hardness}
The global and local satisfiability problem for \MLKFive{}$(\Diamond, \RevDiamond)$ and \MLDFive~$(\Diamond, \RevDiamond)$ is \ExpTime-hard.
\end{theorem}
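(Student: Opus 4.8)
The plan is to reuse verbatim the reduction $\varphi\mapsto\varphi^{*}$ from the proof of Theorem~\ref{t:expharda}: from an instance $\varphi$ of the \ExpTime-hard global satisfiability problem for $\MLK(\Diamond)$, flattened so that $\Diamond$ and $\Box$ do not nest, we form $\varphi^{*}=\tr(\varphi)\wedge(\bigvee_{0\le i\le 3}c_i)\wedge(\bigwedge_{0\le i<j\le 3}(\neg c_i\vee\neg c_j))$ with the same translation $\tr$; only the shape of the intended models changes. For the ``only if'' direction --- if $\varphi^{*}$ is globally satisfiable over a (serial) Euclidean frame then $\varphi$ is globally satisfiable in $\MLK(\Diamond)$ --- I would observe that the unravelling construction given as the second of the two claims inside the proof of Theorem~\ref{t:expharda} never uses transitivity of the model of $\varphi^{*}$; it exploits only the form of $\tr$ and the ``exactly one $c_i$'' conjuncts. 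It therefore applies unchanged to an arbitrary model of $\varphi^{*}$, in particular to a (serial) Euclidean one, and can be quoted directly.

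The genuinely new part is the ``if'' direction. Take a tree-shaped model $\str{A}=\langle W,R,V\rangle$ of $\varphi$ (using the tree-model property of $\MLK(\Diamond)$) with the root at depth $0$; if $\str{A}$ is a single point, let $\str{A}'$ be a single reflexive point and check $\varphi^{*}$ there directly. Otherwise I would build $\str{A}'$ on the same universe by: (i) keeping $V$ on all old variables and putting $V'(c_i)=\{w: d(w)\bmod 4=i\}$, as before; (ii) letting $R'$ consist of the tree edges between consecutive layers, each reversed precisely when it runs from an odd-depth to an even-depth world --- so that, as the four cases in the definition of $\tr(\Diamond\,\cdot\,)$ show, every surviving tree edge runs from an even-coloured world to an odd-coloured one --- together with all pairs $\langle q,q'\rangle$ (loops included) with $d(q),d(q')$ both odd. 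Thus the even-depth worlds become the lanterns and the odd-depth worlds a single reflexive clique illuminated by them, i.e.\ $\str{A}'$ has exactly the shape of a connected Euclidean structure from Lemma~\ref{lem:PropOfEuclStr}. Euclideanness is then immediate: if $xR'y$ and $xR'z$ then $y,z$ are clique worlds (a lantern points only into the clique, a clique world only inside it), hence $yR'z$. Seriality, needed for $\MLDFive$, holds because every clique world is reflexive and every non-root lantern points to its tree-parent, which lies in the clique (the childless-root case being the single-point case).

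The technical core is the structural induction $\str{A},w\models\psi$ iff $\str{A}',w\models\tr(\psi)$, over subformulas $\psi$ of the flattened $\varphi$ and all $w\in W$; the only interesting case is $\psi=\Diamond\psi'$ (and dually $\Box\psi'$), which splits on $d(w)\bmod 4$, using the active conjunct of $\tr(\Diamond\psi')$ --- the one whose antecedent $w$ satisfies. The point to verify in each case is that the $R'$-predecessors, resp.\ $R'$-successors, of $w$ of the \emph{colour named in that conjunct} are exactly the tree-children of $w$: for a lantern $w$ this is because the clique edges are not incident to $w$ at all and the only neighbours of $w$ are its tree-neighbours, while for a clique world $w$ the named colour is $c_0$ or $c_2$ and the extra intra-clique edges carry only the colours $c_1,c_3$, hence are invisible to that guard. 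This is exactly why the period of the colouring is $4$ rather than $2$: the ``source'' colours $c_0,c_2$ (lanterns) and the ``sink'' colours $c_1,c_3$ (clique) are kept disjoint, so the clique --- which can only link a $c_1$ or $c_3$ world to a $c_1$ or $c_3$ world --- is transparent to every guard, each of which names a colour of the opposite parity. With the induction in hand, $\str{A}'\models\varphi^{*}$ globally; and since $(\Diamond,\RevDiamond)$ contains both modalities, Lemma~\ref{lem:locisglobforeuclogics} transfers the \ExpTime lower bound from global to local satisfiability, for both $\MLKFive$ and $\MLDFive$. The main obstacle I anticipate is the bookkeeping in this induction: confirming that merging all odd-depth worlds of a possibly infinitely branching and infinitely deep tree into one clique never manufactures, for any $w$, an $R'$-neighbour of the wrong colour that would spuriously satisfy or falsify one of the $\Diamond/\RevDiamond/\Box/\RevBox$ guards of $\tr$.
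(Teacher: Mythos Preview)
Your proposal is correct and follows essentially the same approach as the paper: the reduction $\varphi\mapsto\varphi^{*}$ is reused verbatim, the $c_0/c_2$-worlds become lanterns while the $c_1/c_3$-worlds are merged into a single clique, and the second claim from the proof of Theorem~\ref{t:expharda} is observed not to use transitivity. Your write-up in fact supplies more detail than the paper (which leaves the verification ``to the reader''), including the explicit check that the colour guards in $\tr(\Diamond\psi')$ filter out the spurious intra-clique neighbours and the edge case of a single-point tree for seriality in $\MLDFive$.
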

\begin{proof}
We explicitly consider the global satisfiability problem, but due to Lemma~\ref{lem:locisglobforeuclogics}
 our proof applies also to local satisfiability.
  The proof goes as the proof of Theorem~\ref{t:expharda}. Our current
  intended models are similar to the intended models there (as on
  Fig.~\ref{fig:trees}). The difference is that all the worlds
  satisfying~$c_1$ or~$c_3$ are made  equivalent to each other, and the worlds satisfying~$c_0$ or~$c_2$ are irreflexive. 
	Note that this does not violate the property that each world can identify its
  children in the tree. Observe also that such intended models are indeed
  Euclidean and serial (however, they are neither transitive nor reflexive); in particular all
  worlds satisfying~$c_0$ or~$c_2$ are lanterns.  Now, for a given $\MLK$ formula we can
  construct precisely the same formula as in the previous
  proof. We leave the routine details to the reader. The correctness proof
	is essentially identical to the correctness proof of Theorem~\ref{t:expharda} so we omit it here.
\end{proof}



\section{Conclusions}
We have filled the gaps remaining in the classification of the complexity of the local and global 
satisfiability problems for natural modal languages with graded and converse modalities over traditional classes of 
frames. What we have not systematically studied are 
the problem of combined satisfiability (given two formulas check if there exists a model in which the first is satisfied locally and the second is satisfied globally) and the problem of finite (local, global, combined) satisfiability (asking about the existence of \emph{finite} models). We suspect that 
the classification could be extended to cover these problems using results/techniques from our paper and the referenced articles without major obstacles.

Two other questions we leave open are if  the \NExpTime-lower bound in Thm.~\ref{thm:theorem5} remains valid if the numbers in graded modalities are encoded
in unary rather than in binary and if our finite model construction from Section \ref{s:transitive} can be replaced by a one producing smaller models.

\section*{Acknowledgements}
We thank Evgeny Zolin for providing us a comprehensive list of gaps in
the classification of the complexity of graded modal logics and for
sharing with us his tikz files with modal cubes.  We thank Emil
Je\v{r}\'abek for his explanations concerning
\MLKFive$(\Diamond, \RevDiamond)$.  We also thank Tomasz Gogacz and
Filip Murlak for comments concerning
Section~\ref{s:transitive}. Finally, we thank the anonymous reviewers
for their useful comments and remarks.

Bartosz Bednarczyk~is supported by Polish Ministry of Science 
and Higher Education program "Diamentowy Grant" no. DI2017 006447. 

Emanuel Kiero\'nski and Piotr Witkowski are supported by 
Polish National Science Centre grant no. 2016/21/B/ST6/01444.


\bibliographystyle{acmtrans}
\bibliography{bibliography}
\label{lastpage}

\end{document}